\definecolor{tumor}{HTML}{E37222}
\definecolor{tumred}{RGB}{205,32,44}
\definecolor{tumblue}{RGB}{00,115,207}
\definecolor{tumgreen}{RGB}{162,173,000}
\newcommand{\red}[1]{{\color{tumred} #1}}
\newcommand{\blue}[1]{{#1}}
\numberwithin{equation}{section}
\newtheorem{theorem}{Theorem}
\newtheorem{lemma}[theorem]{Lemma}
\newtheorem{definition}[theorem]{Definition}
\newtheorem{remark}[theorem]{Remark}
\newcommand\R{\mathbb{R}}
\newcommand\ep{\varepsilon}
\renewcommand\L{\mathcal{L}}
\newcommand\T{\mathcal{T}}
\newcommand\B{\mathcal{B}}
\newcommand\St{\mathcal{S}}
\newcommand\1{\mathbf{1}}
\newcommand\0{\boldsymbol{0}}
\newcommand\A{\mathbf{A}}
\renewcommand\a{\mathbf{a}}
\newcommand\g{\mathbf{g}}
\renewcommand{\u}{\mathbf{u}}
\newcommand\w{\mathbf{w}}
\newcommand\x{\mathbf{x}}
\newcommand\y{\mathbf{y}}
\newcommand\z{\mathbf{z}}
\newcommand\qc{\mathbf{q}_{\text{corr}}}
\newcommand\q{\mathbf{q}}
\newcommand\ttau{\boldsymbol{\tau}}
\newcommand{\nnu}{\boldsymbol{\nu}}
\newcommand\sign{\operatorname{sign}}
\newcommand\conv{\operatorname{conv}}
\newcommand{\inner}[2]{\langle #1, #2 \rangle}
\newcommand{\norm}[2]{\left\Vert #1 \right\Vert_{#2}}
\newcommand{\relu}[1]{\left[#1 \right]_+}
\newcommand{\la}{\lambda}
\newcommand\unif{\operatorname{Unif}}
\newcommand\N{\mathcal{N}}
\renewcommand\S{\mathcal{S}}
\renewcommand\P{\mathbb{P}}
\newcommand\E{\mathbb{E}}
\renewcommand\gg{\gtrsim}
\renewcommand\ep{\varepsilon}
\newcommand\eps{\varepsilon}
\renewcommand\L{\mathcal{L}}
\author[1]{Hans Christian Jung}
\affil[1]{ \emph{DeepL, Cologne, Germany}}
\author[2]{Johannes Maly}
\affil[2]{ \emph{Department of Scientific Computing, KU Eichstaett/Ingolstadt, Germany}}
\author[3]{Lars Palzer}
\affil[3]{ \emph{Signal Iduna group, Hamburg, Germany}}
\author[4]{Alexander Stollenwerk}
\affil[4]{ \emph{ICTEAM Institute, ISPGroup, UCLouvain, Belgium}}
\begin{document}




\title{Quantized Compressed Sensing by Rectified Linear Units}
\maketitle

\begin{abstract}
This work is concerned with the problem of recovering high-dimensional  signals $\x\in \R^n$ which belong to a convex set of low complexity from a small number of quantized measurements.
We propose to estimate the signals via a convex program based on
rectified linear units (ReLUs) for two different quantization schemes, namely one-bit and uniform multi-bit quantization.
Assuming that the linear measurement process can be modelled by a sensing matrix with i.i.d. subgaussian rows, we 
obtain for both schemes near-optimal
uniform reconstruction guarantees by adding well-designed noise to the linear measurements prior to the quantization step. In the one-bit case, 
we show that the program is robust against adversarial bit corruptions as well as additive noise on the linear measurements.	
Further, our analysis quantifies precisely how the
rate-distortion relationship of the program
changes depending on whether we seek reconstruction accuracies above or below \blue{the level of additive noise.}
The proofs rely on recent results by Dirksen and Mendelson on non-Gaussian hyperplane tessellations.
Finally, 
we complement our theoretical analysis with numerical experiments which compare our method to other state-of-the-art methodologies. \\

\textbf{Keywords:} Compressed Sensing, Quantization, Rectified Linear Units, Hamming Distance
\end{abstract}


\section{Introduction}
The compressed sensing paradigm provides methods to infer accurate information about high-dimensional signals $\x \in \R^n$ from few linear measurements
\begin{align} \label{eq:CS}
    \y = \A\x \in \R^{m},
\end{align}
where $\A \in \R^{m \times n}$ models a specific
measurement process. 
The last decade showed that using prior knowledge on the unknown signal (e.g., sparsity) enables unique identification of $\x$ from $m \ll n$ measurements $\y$ via efficient recovery methods. In fact, under the assumption that $\x$ is $s$-sparse, 
meaning that at most $s$ entries are non-zero, unique recovery of $\x$ from $\y$ is possible, whenever we are given at least
\begin{align} \label{eq:CSmeas}
    m \ge C
    s \log\left( \frac{en}{s} \right)
\end{align}
measurements. Here, and in the following, $C > 0$ denotes an absolute constant. Starting with the works of \cite{candes_robust_2006,donoho_compressed_2006} compressed sensing has developed into a lively field of research which inspired new solutions 
for problems in various applied sciences \cite{haldar2010compressed,murphy2012fast,herman2008high}. We refer to \cite{foucart_mathematical_2013} for a comprehensive discussion of compressed sensing and its applications.\\
Though the linear model \eqref{eq:CS} is powerful enough to encompass many import models of measurement processes,
it is blind to the fact, that in real world scenarios the measurements have to be 
quantized to a finite number of bits before the signal reconstruction can be performed.
The process of projecting the infinite precision measurements (captured as a real number) onto a finite alphabet
$\mathcal{A} \subset \R$ is called \textit{quantization}.
\blue{ Adapting \eqref{eq:CS} accordingly leads to the \textit{quantized compressed sensing} model
\begin{align} \label{eq:QCS}
    \q = Q(\A\x) \in \mathcal{A}^m,
\end{align}
where the quantizer $Q\colon \R^m \rightarrow \mathcal{A}^m$ maps the linear measurements $\A\x$ to quantized measurements $\q$. Quantization in general leads to a loss of information which makes exact signal recovery impossible.
Therefore, in the quantized compressed sensing model we are interested in designing quantizers $Q$ which permit efficient approximation of $\x$ from $\q$ using as few measurements as possible. \\
Although there exist more sophisticated quantization schemes, e.g., noise-shaping (for an overview see \cite{boufounos_quantization_2015, dirksen2019quantized}), our focus is on memoryless scalar quantization where the quantizer acts component-wise, i.e., $Q:\R \to \mathcal{A}$, and each linear measurement $\langle \a_i,\x \rangle$ is quantized 
independently of all other measurements.}
In this context, we call $Q$ a $B$-\textit{bit quantizer} if $|\mathcal{A}| = 2^B$ and restrict ourselves to uniform quantization, which  
admits a rather simple structure. Let us mention that uniform quantizers approach optimality for increasing bit rates, cf.\ \cite[p. 2332 et sqq.]{gray1998quantization}). If $B$ is large, the measurement defect caused by $Q$ could be treated as noise and classical compressed sensing results would apply. However, in this case the reconstruction error cannot be smaller than the resolution of the quantizer.
Assuming knowledge on the quantization process (which is most often the case in applications) this is suboptimal and does not use all available information \cite{jacques_dequantizing_2011}. Moreover, modern applications require as well a treatment of coarse quantization, i.e., $B$ is small, \cite{bennett2007netflix} and measurement devices become considerably cheaper in this regime, cf.\ \cite{boufounos_1-bit_2008}.
\\
In its coarsest form, $Q$ quantizes every measurement $\inner{\a_i}{\x}$ to one single bit $\q_i \in \{\pm 1\}$. Following \cite{boufounos_1-bit_2008}, several works \cite{boufounos_greedy_2009,jacques_robust_2013,plan_one-bit_2013} examined one-bit quantization in compressed sensing and were able to derive recovery conditions which are asymptotically equivalent to \eqref{eq:CSmeas}. They could show for different efficient algorithms that it is possible to approximate $s$-sparse unit norm signals $\x$ up to reconstruction precision $\rho$ from $m \ge C(\rho)s\log(en/s)$ one-bit compressive measurements of the form
\begin{align} \label{eq:OneBitCS}
    \q = \sign(\A\x),
\end{align}
where $C(\rho) > 0$ is a constant only depending on $\rho$. Since \eqref{eq:OneBitCS} looses any scaling information, for this quantization scheme it is necessary to assume that the signals are normalized
in order to prove approximation guarantees. To circumvent the normalization restriction, subsequent works \cite{knudson_one-bit_2014,jacques2016error} added a random dither $\ttau \in \R^m$ to \eqref{eq:OneBitCS} leading to the dithered one-bit compressed sensing model
\begin{align} \label{eq:OneBitCSdithered}
    \q = \sign(\A\x + \ttau)
\end{align}
allowing signal approximation for general $s$-sparse signals. The origin of these dithering techniques goes back to
the work \cite{roberts_picture_1962} where dithering\footnote{A geometric perspective on \eqref{eq:OneBitCS} and \eqref{eq:OneBitCSdithered} clarifies the role played by the dither. We can associate to each row $\a_i$ of $\A$ the hyperplane $H_{\a_i}:= \{\x \in \R^N : \inner{\a_i}{\x} = 0\}$, which is orthogonal to
$\a_i$ and contains the origin. For $1 \le i \le m$, each measurement $q_i$ in \eqref{eq:OneBitCS} characterizes on which side of $H_{\a_i}$ the signal $\x$ lies. All hyperplanes together yield a random tessellation of the unit sphere into at most $2^m$ cells 
and $\q\in \{\pm 1\}^m$ encodes in which cell $\x$ lies. Adding the dither introduces an offset to the hyperplanes which leads, depending on the dithers $(\tau_i)_{i \in [m]}$, to a random tessellation not only of the sphere but of the whole space $\R^n$. The geometrical intuition is also helpful for multi-bit quantization. In this case, each measurement corresponds not to one single hyperplane but to a parallel bundle of hyperplanes (see Section \ref{sec:MultiBit}).} was introduced in order to remove artefacts from
quantized pictures (see also \cite{gray1998quantization}). 

\blue{In comparison to unquantized compressed sensing where measurement noise is usually modelled as a bounded additive perturbation of $\y$ in \eqref{eq:CS}, one typically considers two different types of measurement noise in the one-bit models $\eqref{eq:OneBitCS}$ and $\eqref{eq:OneBitCSdithered}$: additive (statistical) noise $\nnu \in \R^m$ disturbing the linear measurements before quantization and (adversarial) bit-flips, cf.\ \cite{jacques_robust_2013,plan_robust_2013,dirksen_robust_2018}. For an extended discussion 
on the goals of quantized compressed sensing and its particular challenges, we refer the reader to the recent survey \cite{dirksen2019quantized}.
In this work, we consider both additive statistical noise as well as adversarial bit corruptions.
More specifically, we aim to recover vectors $\x$ from adversarially corrupted one-bit measurements $\qc$ which satisfy
\begin{equation}
\label{eq:hamming_distance_measurements}
    d_H(\qc, q(\x))\leq \beta m,
\end{equation}
where the Hamming distance $d_H(\z,\z') = |\{ i \colon z_i \neq z_i' \}|$ counts the number of entries where $\qc$ differs from $q(\x) = \sign(\A \x + \nnu + \ttau)$, and $\nnu\in \R^m$ models subgaussian additive noise. \\
Finally, let us mention that we do not exclusively focus our work on sparse signals, but allow for general compact and convex sets $\T \subset \R^n$ as a prior. To obtain sparse reconstruction results, $\T$ would be chosen as a properly scaled $\ell_1$-ball and we will later on allude to sparse reconstruction as benchmark and sanity check. 



}


\begin{figure}
    \begin{subfigure}{0.45\textwidth}
        \centering
        \includegraphics[scale=0.35]{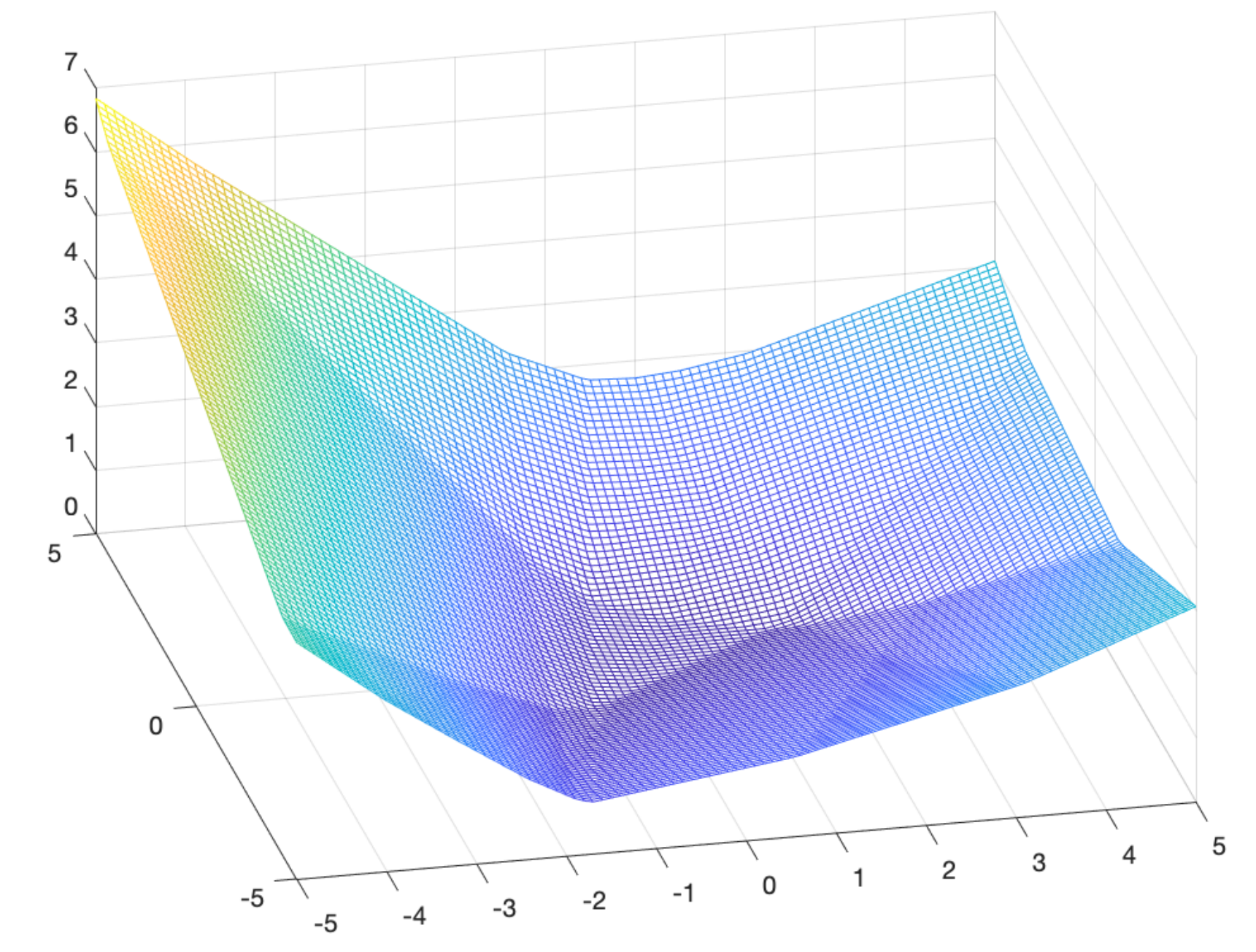}
        \caption{The functional $\L_\q$.}
    \end{subfigure} \quad
    \begin{subfigure}{0.45\textwidth}
        \centering
        \includegraphics[scale=0.35]{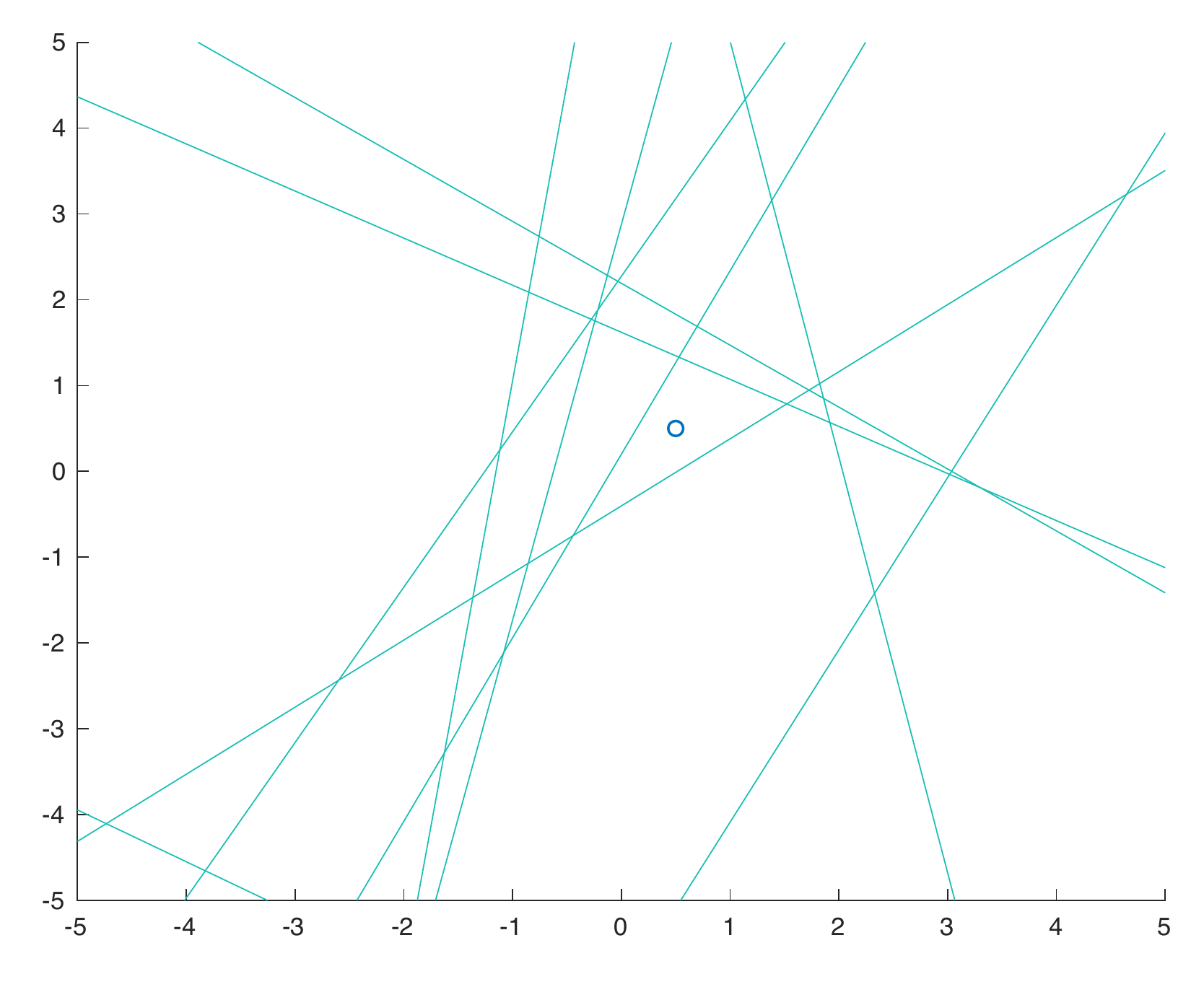}
        \caption{The underlying tessellation and $\x$.}
    \end{subfigure}
    \caption{Example of $\L_\q$ and its underlying tessellation in $\R^2$. The parameters are chosen as $\x = (0.5,0.5)^T$, $m = 10$, $\tau_i \sim \text{Unif}([-0.5,0.5])$, and $\q = \sign(\A\x + \ttau)$. In (a) one can clearly see the Hamming structure from (b) encoded in $\L_\q$.}
    \label{fig:L}
\end{figure}{}

\subsection{Contribution} 
\textbf{(i)} 
\blue{We propose to estimate vectors $\x \in \T\subset \R^n$ from $\qc$ in \eqref{eq:hamming_distance_measurements} by minimizing the convex functional
\begin{equation} \label{eq:L}
    \L_{\qc}(\z) := \frac{1}{m} \sum_{i=1}^m \relu{-(q_{\text{corr}})_i(\inner{\a_i}{\z} + \tau_i)} \; ,
\end{equation}
where $\relu{x} = \max\{x,0\}$ denotes the \emph{rectified linear unit (ReLU)}, cf.\ Figure \ref{fig:L}. This amounts to solving the program
\begin{equation} \label{eq:P}
\tag{$P_{\qc}$}
 \min_{\z \in \R^n} \L_{\qc}(\z) \quad \text{subject to}\quad  \z\in \T,
\end{equation}
which is a convex program whenever the set $\T$ is a convex subset of $\R^n$. From a geometric point of view, the functional $\L_{\qc}$ is a convex proxy for the Hamming distance based function $\z \mapsto d_H(\qc, \sign( \A \z + \ttau ))$ and exhibits intuitive relations to established approaches in recent literature, cf.\ Section \eqref{sec:FirstIntuition} below. By designing the dither $\ttau$ to be uniformly distributed in $[-\la, \la]^m$ for a large enough parameter $\la>0$, we provide in Theorem~\ref{thm:noisy_recovery} near-optimal robust uniform reconstruction guarantees for \eqref{eq:P} under the assumption of i.i.d. subgaussian measurement vectors $\a_i$. These guarantees match state-of-the art results established in \cite{dirksen_robust_2018} for reconstruction accuracies below the noise level $\norm{\nu}{L^2}$, and improve them for accuracies above.} 

\textbf{(ii)} \blue{We extend the estimation scheme proposed in \eqref{eq:L} and \eqref{eq:P} to a memoryless uniform multi-bit quantization model with refinement level $\Delta > 0$, which is the worst-case distortion when $Q$ is applied on a bounded
domain. This leads to a modified version of \eqref{eq:L} and \eqref{eq:P}, see Section~\ref{sec:MultiBit} for details. 
By choosing the dither $\ttau$ to be uniformly distributed in $[-\Delta,\Delta]^m$, we derive
uniform reconstruction guarantees, see Theorem \ref{thm:MultiBitMain}, which exhibit two, for multi-bit quantization
problems characteristic regimes:
if we ask for approximation accuracies above the refinement level $\Delta$, the established guarantee resembles results typical for noisy unquantized compressed sensing models \eqref{eq:CS}. If we ask for accuracies below the refinement level, additional oversampling becomes necessary and the result bears resemblance to the one-bit case.} 
To keep the presentation concise, we do not consider noise on the measurements for multi-bit quantization (though possible as is evident from the one-bit setting; \blue{the interested reader is referred to the comments following
Theorem~\ref{thm:MultiBitMain}).} \blue{Let us emphasize that both in the one-bit and multi-bit setting our reconstruction guarantees are near-optimal in the context of memory-less scalar quantization (see the discussion on optimality below Theorem~\ref{thm:MultiBitMain} and \cite[Section 5]{dirksen_robust_circulant_2018}). Note that adaptive quantization schemes \cite{baraniuk_2017_exponential} allow better rates. Due to energy consumption and dependencies between single measurements, however, those systems can be of limited use in modern applications like Massive MIMO \cite{haghighatshoar2018low}.}

Numerical experiments illustrate the performance of \eqref{eq:P} in both settings. \blue{While the idea of using half sided $\ell_1$- and $\ell_2$-norms is not new for quantized compressed sensing and appeared before in \cite{jacques_robust_2013,boufounos2010reconstruction}, to the best of our knowledge, this is the first work unifying one- and multi-bit quantization for compressed sensing in a single tractable program and analysis.} 

\subsection{Related Work (One-Bit)}

\blue{Let us recap the state of the art in memoryless scalar quantized compressed sensing before presenting our main results in full detail. Recent results treat measurement noise as well as general signal sets $\T \subset \R^n$. In this context, the Gaussian width is a natural complexity parameter which has proven to accurately capture the effective size of signal sets in many problems from signal processing. The Gaussian width of a set $\T\subset \R^n$ is
defined as 
\begin{displaymath}
 w_*(\T) := \E\sup_{\x \in \T} |\inner{\g}{\x}|,
\end{displaymath}
where $\g\in \R^n$ is a standard Gaussian random vector. As a rule of thumb one may say that the squared Gaussian width corresponds to the information theoretic intrinsic complexity of $\T$.} Further, 
the Gaussian width plays an important role in high-dimensional probability, statistics and geometry. 
For more information on its role in problems from signal recovery, the reader is referred to \cite{plan_robust_2013, vershynin_estimation_2014, amelunxen14}.

\blue{The first result on robust recovery from one-bit quantized compressed sensing measurements using a tractable recovery algorithm appeared in \cite{plan_robust_2013} for an un-dithered model. The authors showed that if $\A$ is an $m\times n$ Gaussian matrix and $m \gtrsim \rho^{-6} w_*(\T)^2$, for $\rho > 0$, then, with high probability, every $\x \in \T \cap \St^{n-1}$ is recovered from $\qc$ as in \eqref{eq:hamming_distance_measurements}, for $\beta > 0$ and $q(\x)$ as in \eqref{eq:OneBitCS}, by any solution $\x^\#$ of the program
\begin{align} \label{eq:Plan}
    \max_{\z \in \T} \frac{1}{m} \sum_{i = 1}^m (q_{\text{corr}})_i \langle \a_i,\z \rangle
\end{align}
up to error $\|\x-\x^\#\|_2^2 = \mathcal{O} (\rho\sqrt{\log(e/\rho)}) + \mathcal{O} (\beta \sqrt{\log(e/\beta)})$.
If $\T$ is convex, then the program \eqref{eq:Plan} is convex as well.}
While the dependence on the intrinsic complexity of $\T$ is optimal in this result, the dependence on the approximation accuracy $\rho$ is highly suboptimal. 
\blue{In \cite{ai_one-bit_2014} it was shown that the program \eqref{eq:Plan} is even capable of estimating a fixed signal $\x\in\St^{n-1}$ from one-bit measurements \eqref{eq:OneBitCS} if $\A$ is a subgaussian measurement matrix, provided that the $\ell_{\infty}$-norm of $\x$ is small enough. Due to the un-dithered measurement set-up, however, the additional assumption on $\x$ is necessary.
On the one hand, the result shows that, apart from very sparse vectors, the program \eqref{eq:Plan} still succeeds at recovering signal vectors from one-bit subgaussian measurements. On the other hand, the vectors have to be normalized, the result is non-uniform, and the dependence on the approximation accuracy $\rho$ is still suboptimal.}

The follow-up work \cite{dirksen_robust_2018} massively improved on all of these points by 
considering dithered one-bit measurements as in 
\eqref{eq:OneBitCSdithered} and adding a regularizing term to \eqref{eq:Plan}.
\blue{For the resulting program 
\begin{align} \label{eq:Dirksen}
    \max_{\z \in \T} \frac{1}{m} \left( \sum_{i = 1}^m (q_{\text{corr}})_i \langle \a_i,\z \rangle \right) - \frac{1}{2\lambda} \| \z \|_2^2
\end{align}
the authors showed the following robust reconstruction guarantee (ignoring log-factors):
if $\ttau \in [-\lambda, \lambda]^m$ is uniformly distributed for a sufficiently large parameter $\lambda$, then
the convex program \eqref{eq:Dirksen} is capable of uniformly recovering all signals $\x\in \T\subset B^n_2$ from corrupted one-bit measurements $\qc$ as in \eqref{eq:hamming_distance_measurements} up to accuracy $\rho$
provided that $\norm{\nu}{L^2}\lesssim 1$, $\beta\lesssim \rho$
and the number of measurements satisfies 
\begin{equation}\label{eq:intro_cond_m}
 m \gg \rho^{-4}  w_*((\T-\T)\cap \rho B^n_2)^2 + \rho^{-2}\log \N(\T,\rho/\sqrt{\log( c/\rho)}).
\end{equation}
Here 
$w_*((\T-\T)\cap \rho B^n_2)^2$ is a localized variant of the Gaussian width and
$\N(\T,\ep)$ denotes the $\eps$-covering number 
of $\T$ with respect to the Euclidean distance, i.e., the smallest number of Euclidean balls with radius $\eps$ that are needed to cover $\T$.
Ignoring log-factors, Sudakov's inequality shows that
condition 
\eqref{eq:intro_cond_m}
is already satisfied if
$m\gg \rho^{-4}  w_*(\T)^2$.} This massively improves on earlier uniform reconstruction guarantees. 
Moreover, the result shows that by using dithering in the measurement process there is no need of imposing additional structural assumptions on the signal vectors (such as a small $\ell_{\infty}$-norm).
\blue{However, the result has a drawback as well: it is not sensitive to the magnitude of the additive noise on the linear measurements as long as 
$\norm{\nu}{L^2} \le R$.
In particular, $\norm{\nu}{L^2}\approx 0$
and $\norm{\nu}{L^2}\approx R$ lead to the same reconstruction guarantees.} This suboptimal performance in a low-noise setting is clearly observed experimentally, cf.\ Section \ref{sec:Numerics}.

\subsection{Related Work (Multi-Bit)} 

Compared to the extensive theoretical studies on recovery algorithms in one-bit compressed sensing for memoryless scalar quantization, fewer comprehensive results exist for finer quantization. \blue{One has to understand that a multi-bit quantizer $Q$ with refinement level $\Delta > 0$ leads in compressed sensing to two very different recovery regimes: if the local complexity of $\T$ behaves similar to their global complexity, i.e., $w_*((\T - \T) \cap \rho B_2^n) \simeq \rho w_*(\T)$, the sufficient number of measurements to obtain an approximation error $\rho \gtrsim \Delta$ does not depend on $\rho$ (high quantizer resolution $\leftrightarrow$ un-quantized compressed sensing regime); to obtain smaller approximation errors, the number of measurements must behave similar to the one-bit case (low quantizer resolution $\leftrightarrow$ one-bit compressed sensing regime).} As \cite{laska_regime_2012} shows, it is favorable to increase the bit-depth per measurement if high-accuracy is sought, if the expected noise-level is low, or if the number of measurements underlies stronger restrictions than the number of bits. \blue{Though several articles numerically examined recovery algorithms for multi-bit quantized compressed sensing \cite{jacques_quantized_2013,jacques_dequantizing_2011,shi_methods_2016}, to the best of our knowledge no comprehensive theoretical guarantees covering both regimes were derived for tractable algorithms apart from \cite{dirksen2017one,moshtaghpour_consistent_2016}.  Therein Consistent Basis Pursuit (CBP) is theoretically examined, but the analysis is restricted to signal sets corresponding to atomic norm balls and the obtained guarantees have with $m \gtrsim \Delta^{-2} \rho^{-4} w_\ast (\T)^2$ a far worse error dependence than the one-bit results in \cite{dirksen_robust_2018} would suggest.} More important, CBP becomes infeasible under noise on the measurements. The work \cite{jacques2016error} examines consistent reconstruction which is not tractable in general. The tractable Basis Pursuit De-Noising (BPDN) \cite{candes_near_2006,jacques_dequantizing_2011} only covers the high quantizer resolution regime, i.e., the achievable approximation error $\rho$ is lower bounded by $\Delta$. \blue{The work \cite{xu2018quantized} examines an equivalent variant of \eqref{eq:Dirksen} for multi-bit quantization but only reflects the low quantizer resolution regime. For small $\Delta$, the measurement requirements become suboptimal. Moreover, the requirement $m \gtrsim \rho^{-16} w_\ast (\T)^2$ for general bounded, convex, and symmetric sets is rather pessimistic.} Last but not least, \cite{plan2016generalized,plan2016high} are restricted to Gaussian measurements and treat a more general non-linear adaption of \eqref{eq:CS} which covers \eqref{eq:QCS} as a special case but only leads to non-uniform recovery guarantees.

\subsection{A first intuition} 
\label{sec:FirstIntuition}

Let us compare \eqref{eq:P} to the state-of-the-art methods for robust one-bit quantized compressed sensing presented above. Though at first sight,  \eqref{eq:P} appears to be closely related to \eqref{eq:Plan} and \eqref{eq:Dirksen}, the motivation for the program and its geometric meaning is fundamentally different. Both \eqref{eq:Plan} and \eqref{eq:Dirksen} aim at maximizing the alignment of quantized and unquantized measurements while ignoring the concrete geometry defined by the quantization cells. As already mentioned in \cite{krause2017tractable,iwen2018recovery} one can reformulate \eqref{eq:Plan} 
as
\begin{align} \label{eq:EquivalentForm}
\min_{\z \in \T} &\left( \sum_{i \colon (q_{\text{corr}})_i \neq \sign(\langle \a_i,\z \rangle)} \| \a_i \|_2 \left\| \z-\P_{H_{\a_i}} \z \right\|_2 \; - \sum_{i \colon (q_{\text{corr}})_i = \sign(\langle \a_i,\z \rangle)} \| \a_i \|_2 \left\| \z-\P_{H_{\a_i}} \z \right\|_2 \right) ,
\end{align}
where $\P_{H_{\a_i}}$ denotes the orthogonal projection onto the hyperplane defined by $\a_i$. Consequently, maximizing the alignment in \eqref{eq:Plan} corresponds to punishing wrong measurements $\sign(\langle \a_i,\z \rangle) \neq (q_{\text{corr}})_i$ of a point $\z \in \mathcal{T}$ by its Euclidean distance to the corresponding hyperplane $H_{\a_i}$ and rewarding correct measurements by the same amount. If the measurements are trustworthy, i.e., $(q_{\text{corr}})_i=q_i=\sign(\inner{\a_i}{\x})$ for all $i\in [m]$, 
the rewarding term unnecessarily allows hyperplanes $H_{\a_i}$ not separating $\z$ and $\x$ to influence the penalization in \eqref{eq:Plan} and leads to worse approximation. Having \eqref{eq:EquivalentForm} in mind, the regularizer in \eqref{eq:Dirksen} might be interpreted as a compensation for the rewarding part. In contrast, the ReLU-formulation in \eqref{eq:P} completely drops the rewarding part of \eqref{eq:EquivalentForm} and resembles a continuous proxy of the Hamming-distance on the quantization cell structure (see Figure \ref{fig:L}). In particular, in the noiseless case, that is, 
in the case where $(q_{\text{corr}})_i=\sign(\inner{\a_i}{\x} + \tau_i)$ for all $i\in[m]$, we have $\L_{\qc}(\x)=0$ and therefore any solution $\x^{\#}$ to \eqref{eq:P} has to satisfy $\L_{\qc}(\x^\#)=0$ as well. Since this is equivalent to $\sign(\inner{\a_i}{\x} + \tau_i)=\sign(\inner{\a_i}{\x^\#} + \tau_i)$ for all $i\in [m]$, we see that in the noiseless case minimizing \eqref{eq:P}
forces the reconstructed signal to lie on the correct sides of all shifted hyperplanes. Note that \eqref{eq:Dirksen} completely neglects the knowledge about the hyperplane shifts and thus simply treats the dither as additive noise.\\

\subsection{Outline} We state and explain the main results of the paper, Theorem \ref{thm:noisy_recovery} \& \ref{thm:MultiBitMain}, in Section \ref{sec:MainResults}. The proofs of both results are then provided in Section \ref{sec:Proofs}. Section \ref{sec:Numerics} supports our theoretical findings by numerically comparing \eqref{eq:P} to different competing recovery schemes in both the one-bit and multi-bit setting. The proofs of some technical tools are deferred to the Appendix.

\subsection{Notation} 
We will use the following notation throughout the paper:

\begin{enumerate}
    \item For $k\in \mathbb{N}$ we set $[k]:=\{1, \ldots, k\}$.
    Matrices and vectors are denoted by upper- and lowercase boldface letters, respectively.
    \item For $\x\in \R^n$ we set $\|\x\|_0:=|\{i\in [n]\; : \; x_i\neq 0\}|$. A vector $\x\in \R^n$ is called $s$-sparse if $\|\x\|_0\leq s$. The set of all $s$-sparse vectors in $\R^n$
    is $\Sigma_s^n:=\{\x\in \R^n \; : \; \|\x\|_0\leq s\}$.
    Given $p\geq 1$, the $\ell_p$-norm of $\x$ is denoted by $\norm{\x}{p}$ and the associated unit ball is $B^n_p$.
    The Euclidean unit sphere in $\R^n$ 
    is $\S^{n-1}$. Further, for a subset $S\subset \R^n$ we define $d_2(S):=\sup_{\x\in S}\norm{\x}{2}$.
    \item The (unnormalized) Hamming distance between vectors $\x, \y\in \R^n$ is $d_H(\x,\y):=|\{i\in [n]\; : \; x_i\neq y_i\}|$.
    \item The $\sign$-function acts componentwise on a vector and we set $\sign(0) := 1$.
    \item The Gaussian width of a set $\T\subset \R^n$ is denoted by 
    \begin{displaymath}
 w_*(\T) := \E\sup_{\x \in \T} |\inner{\g}{\x}|,
\end{displaymath}
where $\g$ is an $n$-dimensional standard Gaussian random vector.

\item For $\eps>0$, the $\eps$-covering number of $\T\subset \R^n$ is denoted by $\mathcal{N}(\T, \eps)$. It is the smallest number of Euclidean balls with radius $\eps$ needed to cover $\T$.  
    \item 
    For $p\geq 1$, the $L^p$ norm of a random variable $X$ will be denoted by $\norm{X}{L^p}=(\E |X|^p)^{1/p}$.
    Further, $X$ is subgaussian if its subgaussian norm 
    \begin{displaymath}
 \norm{X}{\psi_2}:=\inf\{t>0\; : \; \E \exp(X^2/t^2)\leq 2\}
\end{displaymath}
    is finite. \blue{In particular, $X$ satisfies the tail bound
    \begin{equation}
        \Pr(|X| \geq t) \leq 2\exp(-ct^2/\norm{X}{\psi_2}^2)
        \;,
    \end{equation}
    which holds for every $t>0$ and an absolute constant $c>0$.
    }
\item The letters $C,c>0$ (possibly with subscripts, that is, $C_i,c_i$) will always denote constants which may only depend on the subgaussian parameter $L$. We write $A\lesssim B$ if $A\leq C B$ for a constant $C$ (respectively
$A\gtrsim B$ if $A\geq c B$ for a constant $c$). Finally, we use the abbreviation 
$A\sim B$ if both $A\lesssim B$ and $A\gtrsim B$.
\end{enumerate}


\section{Main Results} \label{sec:MainResults}

Let us begin by stating the main results of the paper. We split this section into two parts, one containing the results for one-bit quantization and one discussing the more general multi-bit quantization setting. In both settings, we assume that
the linear measurements of a vector $\x$ prior to the quantization step are of the form
\begin{equation}
    \A \x + \ttau+\nnu\in \R^m,
\end{equation}
where
\begin{itemize}
    \item the rows $\a_i^T$ of the measurement matrix $\A$ consist of independent and identically distributed copies of an isotropic, symmetric and $L$-subgaussian random vector $\a\in \R^n$. Recall that
    a random vector $\a\in \R^n$ is said to be isotropic if 
$\norm{\inner{\a}{\x}}{L^2}=\norm{\x}{2}$ for all $\x\in \R^n$. Further, $\a$ is $L$-subgaussian if 
$\norm{\inner{\a}{\x}}{L^p} \leq  L  \sqrt{p} \norm{\inner{\a}{\x}}{L^2}$ for all $\x \in \R^n$ and $p\geq 1$.
Equivalently (up to absolute constants), this means that for every $\x \in \R^n$ the subgaussian norm of $\inner{\a}{\x}$ is bounded by $L \norm{\inner{\a}{\x}}{L^2}$.
    \item $\ttau\in \R^m$ is a random vector with entries $\tau_i$ that are independent copies of a random variable $\tau \sim \unif([-\la, \la])$ for a parameter $\la>0$. 
    \item $\nnu\in \R^m$ denotes a random vector with entries $\nu_i$ that are independent copies of a mean-zero random variable $\nu$ which is $L$-subgaussian, i.e., $\norm{\nu}{L^p}\leq L \sqrt{p} \norm{\nu}{L^2}$ for every $p\geq 1$. Again, equivalently up to absolute constants this means that $\norm{\nu}{\psi_2}$ is bounded by $ L\norm{\nu}{L^2}$. 
\end{itemize}
We assume that the random vectors/matrices $\A, \ttau, \nnu$ are independent. In contrast to $\A$ and $\ttau$, the noise vector $\nnu\in \R^m$ is not known to us.
\vskip 0.1cm
In the following, in order to enhance readability, we will always suppress the dependency of constants on the subgaussian parameter $L$. That is, if we speak of a constant $C$, then it is either a numerical constant or it is a constant which only depends on $L$. In a similar fashion, when we write $\gtrsim$ (or $\lesssim$) then we mean that the inequality holds for a constant that may only depend on $L$.


\subsection{One-Bit Quantization} \label{sec:OneBit}

As already mentioned above, we are interested in recovering high-dimensional signal vectors $\x \in \R^n$ from possibly corrupted one-bit measurements $\qc\in \{-1,1\}^m$ which satisfy 
\begin{equation*}
    d_H(\qc, q(\x))\leq \beta m,
\end{equation*}
where $q(\x)= \sign(\A \x + \ttau+\nnu)\in \{-1,1\}^m$.
Hence, in this model we permit that up to $\beta m$ bits are arbitrarily (possibly adversarially) corrupted.\\

The following theorem is our main recovery result in the one-bit case. 
\begin{theorem} \label{thm:noisy_recovery}
There are constants $c,c_0, c_1, c_2, c_3>0$ and $C\geq e$ such that the following holds.
Let $\T\subset R B^n_2$ denote a convex set. Fix an approximation accuracy $\rho \in (0, R]$.
\begin{enumerate}
    \item[(i)] Low-noise regime: 
    if $\norm{\nu}{L^2}\leq c_0 \rho/\sqrt{\log(C\lambda/\rho)}$,
    then the following holds.
    Suppose the dithering parameter satisfies $\la \gtrsim R$, the number of measurements satisfies
    \begin{equation}
       m \gg \frac{\lambda}{\rho} \left(\frac{w_*((\T-\T)\cap \rho B^n_2)^2}{\rho^2} 
       +\log \N(\T,c \rho/\sqrt{\log(C\lambda/\rho)})\right),
    \end{equation}
    and $\beta \in (0,1)$ is a parameter such that $\beta\log(e/\beta)\leq c_2 \rho/ \la $. Then,
    with probability exceeding 
    $
     1- 2\exp(- c_3 m \rho / \lambda)
    $,
    the following holds true: for all $\x\in \T$ and all bit sequences $\qc \in \{-1,1\}^m$
    with $d_H(\qc, q(\x)) \leq \beta m$, every minimizer $\x^\#$ of the program \eqref{eq:P} 
    satisfies $$\norm{\x-\x^\#}{2}\leq \rho.$$
    \item[(ii)] High-noise regime: 
    if $\norm{\nu}{L^2}\geq c_0 \rho/\sqrt{\log(C\lambda/\rho)}$, then the following holds.
    Suppose the dithering parameter satisfies $\la\gtrsim (R+\norm{\nu}{L^2})\sqrt{\log(\la/\rho)}$, the number of measurements satisfies
    \begin{equation}
       m \gg \left(\frac{\lambda}{\rho}\right)^2 \left(\frac{w_*((\T-\T)\cap \rho B^n_2)^2}{\rho^2} 
       +\log \N(\T,c \rho/\log(C\lambda/\rho))\right),
    \end{equation}
    and $\beta \in (0,1)$ is a parameter such that $\beta\log(e/\beta)\leq c_2 \rho/ \la $.
    Then, with probability exceeding 
    $
     1- 2\exp(- c_3 m (\rho / \lambda)^2 )
    $,
    the following holds true: for all $\x \in \T$ and all bit sequences $\qc \in \{-1,1\}^m$
    with $d_H(\qc, q(\x)) \leq \beta m$, every minimizer $\x^\#$ of the program \eqref{eq:P} 
    satisfies $$\norm{\x-\x^\#}{2}\leq \rho.$$
\end{enumerate}
\end{theorem}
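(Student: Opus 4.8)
\emph{Reduction.} The plan is to exploit the convexity of $\L_{\qc}$ together with a precise analysis of its random population landscape, reducing everything to a statement on a single sphere around the true signal. Since $\L_{\qc}$ is convex and $\x^\#$ minimizes it over the convex set $\T$, if $\norm{\x^\#-\x}{2}>\rho$ then the point $\z_0$ on the segment $[\x,\x^\#]$ with $\norm{\z_0-\x}{2}=\rho$ lies in $\T$ and satisfies $\L_{\qc}(\z_0)\le\max\{\L_{\qc}(\x),\L_{\qc}(\x^\#)\}=\L_{\qc}(\x)$. So it suffices to show, on the claimed event and uniformly over $\x\in\T$ and all admissible $\qc$, that $\L_{\qc}(\z)>\L_{\qc}(\x)$ for every $\z\in\T$ with $\norm{\z-\x}{2}=\rho$. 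Writing $h_i(\z)=\inner{\a_i}{\z}+\tau_i$ and letting $S\subseteq[m]$ with $|S|\le\beta m$ be the adversarial set of flipped coordinates, so $(q_{\text{corr}})_i=q(\x)_i=\sign(h_i(\x)+\nu_i)$ for $i\notin S$ and $(q_{\text{corr}})_i=-q(\x)_i$ otherwise, the identity $\relu{a}-\relu{-a}=a$ gives the clean decomposition $\L_{\qc}=\L_{q(\x)}+\Delta$ with $\Delta(\z)=\frac1m\sum_{i\in S}q(\x)_i h_i(\z)$.

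\emph{Population excess.} Next I would compute the mean of the uncorrupted functional. Conditioning on $\a_i,\nu_i$ and integrating out the uniform dither, one finds (whenever the relevant shift lies inside $[-\la,\la]$) that $\E[\relu{-q(\x)_ih_i(\z)}\mid\a_i,\nu_i]=(\inner{\a_i}{\z-\x}-\nu_i)^2/(4\la)$, hence $\E\L_{q(\x)}(\z)=\frac{1}{4\la}(\norm{\z-\x}{2}^2+\norm{\nu}{L^2}^2)$ by isotropy and mean-zeroness of $\nu$, so that the population excess $\E\L_{q(\x)}(\z)-\E\L_{q(\x)}(\x)$ equals $\norm{\z-\x}{2}^2/(4\la)$, i.e.\ $\rho^2/(4\la)$ on the relevant sphere. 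The exceptional event $\{\max_i(|\inner{\a_i}{\z}|,|\inner{\a_i}{\x}|,|\nu_i|)>\la\}$ is exactly where the hypotheses $\la\gtrsim R$ (low noise) and $\la\gtrsim(R+\norm{\nu}{L^2})\sqrt{\log(\la/\rho)}$ (high noise) are used to make its contribution negligible next to $\rho^2/\la$.

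\emph{Uniform concentration --- the crux.} The main work is to show that $|\L_{q(\x)}(\z)-\E\L_{q(\x)}(\z)|$ and $|\L_{q(\x)}(\x)-\E\L_{q(\x)}(\x)|$ stay below $\rho^2/(16\la)$, uniformly over $\x\in\T$ and $\z$ on the $\rho$-sphere around $\x$. I would study the increment $f_i(\z)=\relu{-q(\x)_ih_i(\z)}-\relu{-q(\x)_ih_i(\x)}$, which is $\norm{\a_i}{2}$-Lipschitz in $\z$, vanishes at $\z=\x$ (so $|f_i(\z)|\le|\inner{\a_i}{\z-\x}|$), and is supported on the event that the $i$-th shifted hyperplane separates $\z$ from the noisy copy of $\x$; this yields $\E f_i(\z)^2\lesssim(\rho^3+\rho^2\norm{\nu}{L^2})/\la$. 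A Bernstein bound for $\frac1m\sum_i f_i(\z)$ at fixed $\z$, together with a generic-chaining argument over the localized set $(\T-\T)\cap\rho B^n_2$ (the discontinuous indicator factors being tamed by a sufficiently fine net, in the spirit of Dirksen--Mendelson's non-Gaussian tessellation estimates), produces precisely the complexity terms $w_*((\T-\T)\cap\rho B^n_2)^2/\rho^2$ and $\log\N(\T,\cdot)$ of the theorem, the latter at scale $\rho/\sqrt{\log(C\la/\rho)}$. The per-coordinate variance $\rho^3/\la$ (dominant once $\norm{\nu}{L^2}\lesssim\rho/\sqrt{\log(C\la/\rho)}$) versus $\rho^2\norm{\nu}{L^2}/\la$ with $\norm{\nu}{L^2}\sim\la$ (in the high-noise regime) is exactly what separates the sampling requirements $m\gtrsim(\la/\rho)(\cdots)$ from $m\gtrsim(\la/\rho)^2(\cdots)$ and the probabilities $e^{-cm\rho/\la}$ from $e^{-cm(\rho/\la)^2}$, while the extra $\sqrt{\log}$ refinement of the covering scale in the high-noise case absorbs the heavier tails the indicators pick up from the subgaussian noise.

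\emph{Corruption and conclusion.} Finally, the decomposition gives $\Delta(\z)-\Delta(\x)=\frac1m\sum_{i\in S}q(\x)_i\inner{\a_i}{\z-\x}$, so $|\Delta(\z)-\Delta(\x)|\le\frac1m\max_{|S|\le\beta m}\sum_{i\in S}|\inner{\a_i}{\z-\x}|$; a uniform bound on the sum of the $\beta m$ largest among $m$ subgaussians (with a lower-order chaining term absorbed into the condition on $m$) makes this $\lesssim\rho\,\beta\sqrt{\log(e/\beta)}$, which is $\le\rho^2/(16\la)$ under $\beta\log(e/\beta)\le c_2\rho/\la$; note the decisive cancellation here --- since $\Delta$ is affine and we only compare $\z$ against $\x$, the corruption cost scales with $\norm{\z-\x}{2}=\rho$, not with $\la$. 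Combining the three pieces, for every $\z$ on the $\rho$-sphere one obtains $\L_{\qc}(\z)-\L_{\qc}(\x)\ge\frac{\rho^2}{4\la}-\frac{2\rho^2}{16\la}-\frac{\rho^2}{16\la}>0$, the desired contradiction. I expect the uniform concentration step to be the main obstacle: making the Bernstein/chaining estimate uniform simultaneously over the base point $\x\in\T$ and the surrounding $\rho$-sphere while the summands, though bounded, are discontinuous, and extracting the complexity terms at exactly the scales claimed --- this is where the non-Gaussian hyperplane-tessellation technology is essential and where the split into the two noise regimes originates.
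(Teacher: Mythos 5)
Your convexity reduction, the corruption decomposition $\L_{\qc}(\z)=\L_{q(\x)}(\z)+\frac1m\sum_{i\in S}q(\x)_i(\inner{\a_i}{\z}+\tau_i)$ (which makes the adversarial cost an affine perturbation bounded by $\max_{|S|\le\beta m}\frac1m\sum_{i\in S}|\inner{\a_i}{\z-\x}|$), and the population-mean identity are all sound; in fact your constant $\tfrac{1}{4\la}$ in the mean is the correct one, the $\tfrac{1}{2\la}$ in Lemma~\ref{lem:exp_functional} being off by a factor $2$ (the two case-contributions are exclusive, not additive). For the high-noise regime (ii), your plan — population excess $\gtrsim\rho^2/\la$ via the mean, symmetrization plus net/chaining for uniform concentration, Lemma-\ref{lem:sup_max_subg}-style control of the corrupted indices — is essentially the paper's proof via Theorem~\ref{thm:concencration_functional}.

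The low-noise regime (i) is where your argument has a genuine gap, because you propose the same template there and it cannot deliver the claimed rate. Two concrete problems. First, to use the population mean you must suppress the exceptional event $\{|\inner{\a_i}{\x}+\nu_i|>\la\ \text{or}\ |\inner{\a_i}{\z}|>\la\}$, whose contribution is $\sim\la\exp\bigl(-c\la^2/(R^2+\norm{\nu}{\psi_2}^2)\bigr)$; under the hypothesis $\la\gtrsim R$ of part (i) this is of constant order and is \emph{not} negligible against $\rho^2/\la$ unless $\la\gtrsim R\sqrt{\log(\la/\rho)}$, a strengthening the theorem does not impose in (i). Second and more seriously, your ``Bernstein at net points plus local generic chaining'' hits a hard bottleneck in the chaining remainder: the increments $\z\mapsto\relu{-q(\x)_i(\inner{\a_i}{\z}+\tau_i)}$ are $\norm{\cdot}{2}$-subgaussian, so the symmetrized local process is bounded by $\gtrsim w_*((\T-\T)\cap\rho B_2^n)/\sqrt{m}$; forcing this below $\rho^2/\la$ gives $m\gtrsim\la^2\rho^{-4}w_*^2$, i.e.\ precisely the $(\la/\rho)^2$ requirement of regime (ii). Your observation that the per-coordinate variance drops to $\rho^3/\la$ in the low-noise case only sharpens the single-point Bernstein tail; it does nothing for the $\gamma_2$ term, which is variance-agnostic. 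So the Bernstein-plus-chaining route can only reprove (ii), not (i).

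The paper escapes this by \emph{not} concentrating $\L_{q(\x)}(\z)-\L_{q(\x)}(\x)$ around its mean in the low-noise case at all. Instead it bounds the two pieces separately and one-sidedly: the term $\frac1m\sum_{(\qc)_i=q(\x)_i}\relu{-q(\x)_i(\inner{\a_i}{\z}+\tau_i)}$ is lower bounded directly by $\theta\rho\,|I(\x,\z,\nnu,\theta)|/m\gtrsim\rho^2/\la$ via the well-separating-hyperplane count of Theorem~\ref{thm:noisy_uniform}, and the spoiling term $\L_{q(\x)}(\x)$ is bounded above pointwise by $Y(\x)=\frac1m\sum_i\1_{\{\mathrm{sign}(\inner{\a_i}{\x}+\nu_i+\tau_i)\ne\mathrm{sign}(\inner{\a_i}{\x}+\tau_i)\}}|\nu_i|$, which depends only on the noise and whose mean $\lesssim\norm{\nu}{L^2}^2/\la$ and fluctuations (Lemma~\ref{lem:concentration_functional}) are comfortably below $\rho^2/\la$ exactly when $\norm{\nu}{L^2}\lesssim\rho/\sqrt{\log(C\la/\rho)}$. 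The hyperplane-count lower bound is established by a net reduction combined with the stability Lemma~\ref{lem:noise_stability} and the ``spoiled-hyperplane'' counting Lemma~\ref{lem:bound_for_spoiled_hyperplanes}, a Chernoff/$\ell_\infty$-type argument that never passes through a $w_*/\sqrt{m}$ supremum, which is why it reaches $m\gg\frac{\la}{\rho}(\cdot)$ and probability $1-2e^{-cm\rho/\la}$. To fix your proof of (i) you would need to replace the concentration-of-excess step with this one-sided counting argument.
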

\paragraph{Two regimes} The result shows that the performance of the program \eqref{eq:P} 
depends on the ratio of the noise level $\norm{\nu}{L^2}$ and the reconstruction accuracy $\rho>0$. As long as (ignoring log-factors)
$\rho \gtrsim \norm{\nu}{L^2}$, the performance of \eqref{eq:P} is comparable to the performance of the (non-tractable) Hamming distance minimization program
(see \cite[Theorem 1.5]{dirksen_robust_2018}),
\begin{equation}\label{eq:hamming_program}
    \min_{\z \in \R^n} d_H(\qc, \sign(\A \z + \ttau)) \quad
    \text{subject to} \quad \z \in \T \; .
\end{equation}
Hence, in this accuracy regime, the program \eqref{eq:P} can be viewed as a convex proxy for the
Hamming distance minimization program and achieves near-optimal reconstruction guarantees (see the discussion on optimality below Theorem~\ref{thm:MultiBitMain}).
\blue{ Moreover, similar to the program \eqref{eq:Dirksen} but in contrast to \eqref{eq:hamming_program}, the program
\eqref{eq:P} also achieves near-optimal reconstruction accuracies well below the noise level (for optimality see \cite[Section 5]{dirksen_robust_circulant_2018})}. However, this comes at the cost of a worse rate-distortion relationship and a worse probability of success.

\paragraph{Sparse Recovery} \blue{For sparse recovery (i.e., there are no sparsity defects on the signal) from non-adaptive one-bit measurements, it has been shown in~\cite{jacques_robust_2013} that the optimal error decay rate is $\mathcal{O}(\tfrac{1}{m})$ while, to the author's knowledge, the best proven rate for a tractable program is $\mathcal{O}(\frac{1}{\sqrt{m}})$, see \cite[Theorem 6]{dirksen2019quantized}. In practice, however, one has to deal with sparsity defects. Here, a commonly considered prior is the set of $s$-compressible signals given by $\T=\sqrt{s}B_{1}^n \cap B_{2}^n$ for which Theorem \ref{thm:noisy_recovery} can be applied. 
}
Since (see \cite[Lemma 3.1]{plan_one-bit_2013})
\begin{equation}
    \conv(\Sigma_s^n\cap B^n_2)\subset \sqrt{s}B_{1}^n \cap B_{2}^n\subset 2\conv(\Sigma_s^n\cap B^n_2),
\end{equation}
$\T$ is a proxy for the convex hull of all $s$-sparse vectors in the Euclidean unit ball. Using $w_\ast(\sqrt{s}B_{1}^n \cap B_{2}^n)^2\sim s\log(2n/s)$ and Sudakov's inequality, we can deduce from 
Theorem~\ref{thm:noisy_recovery} that if $\norm{\nu}{L^2}\lesssim 1$ and $\beta=0$, then
any solution $\x^\#$ satisfies $\norm{\x-\x^\#}{2}\leq \rho$ if
\begin{itemize}
    \item $m \gtrsim \rho^{-3}\log(\rho^{-1})s \log(2n/s)$
    provided that 
    $\rho /\sqrt{\log(1/\rho)} \gtrsim \norm{\nu}{L^2}$ and we choose $\la\sim 1$,
    \item $m \gtrsim \rho^{-4}\log^3(\rho^{-1})s \log(2n/s)$
    provided that 
    $\rho /\sqrt{\log(1/\rho)} \lesssim \norm{\nu}{L^2}$ and we choose $\la\sim \sqrt{\log(\rho^{-1})}$.
\end{itemize}
In words, for reconstruction accuracies 
above the noise floor, the reconstruction error essentially decays as
$\mathcal{O}\Big(\big(\frac{s\log(2n/s)}{m}\big)^{1/3}\Big)$
if the dithering random variables $\tau_i$ are uniformly distributed on the interval $[-\la, \la]$ for $\la$ a constant that only depends on $L$. If we want to achieve reconstruction accuracies
$\rho$ below the noise floor, then we have to increase $\la\sim \sqrt{\log(\rho^{-1})}$. In this case, the error decays as  $\mathcal{O}\Big(\big(\frac{s\log(2n/s)}{m}\big)^{1/4}\Big)$ \blue{ which has previously been the best known guarantee for recovery of compressible signals from one-bit measurements even in the noiseless setting (see the discussion after \cite[Theorem 1.7]{dirksen_robust_2018}). Whenever the expected noise level $\norm{\nu}{L^2}$ is unknown, practitioners can choose $\la \sim \sqrt{\log(\rho^{-1})}$ to have guaranteed approximation in both regimes thus paying an additional log-factor in the number of measurements if the noise is small.} 

\subsection{Multi-Bit Quantization} \label{sec:MultiBit}

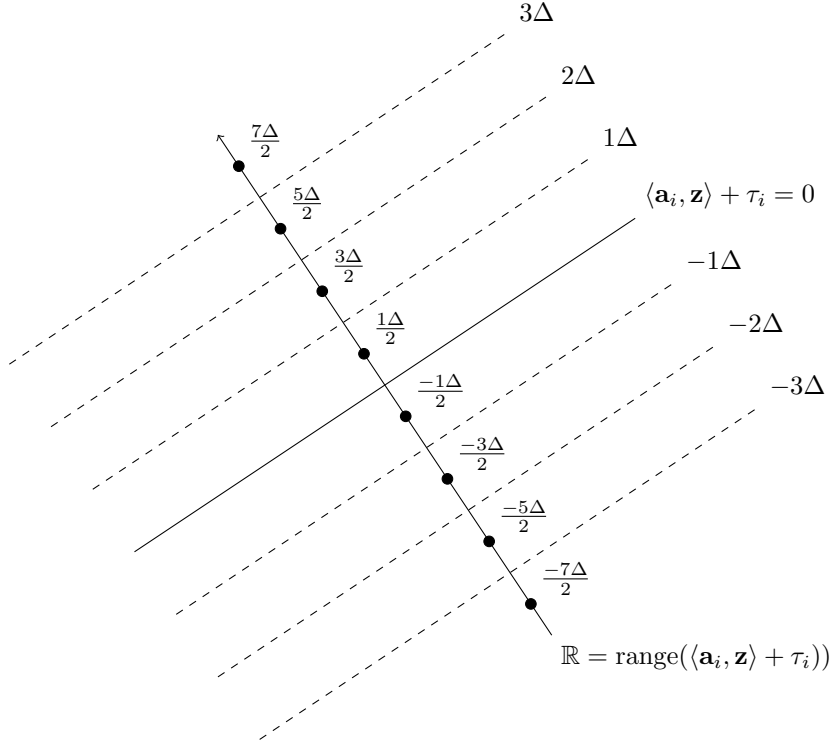
\begin{figure}
    \centering
    	\begin{tikzpicture}
	%
	\def \aa {0.83205}
    \def \b {0.5547}    
    \def \scale {4.0}
    \def \d{1}
	\draw (-\scale*\aa,-\scale*\b) -- (\scale*\aa,\scale*\b) node [above right] {$\inner{\a_i}{\z} +\tau_i = 0$};
	\draw[<-] (-\scale*\b, \scale*\aa) -- (\scale*\b,-\scale*\aa) node [below right] {$\R = \text{range} (\inner{\a_i}{\z} +\tau_i))$};
    \foreach \niter in {-3,-2,-1,1,2,3} {
        \draw[dashed] (-\scale*\aa -\niter*\b ,-\scale*\b + \niter*\aa ) -- (\scale*\aa-\niter*\b,\scale*\b+\niter*\aa) node[above right] {$\niter \Delta$};
      }
    \foreach \xiter in {-7,-5,-3,-1,1,3,5,7} {
        \filldraw (-\xiter/2*\b,\xiter/2*\aa) circle (2pt) node[above right] {$\frac{\xiter \Delta}{2}$};
        }
	
	\end{tikzpicture}
    \caption{Two ways of characterizing a uniform $3$-Bit quantizer: either by the separating hyperplanes as in \eqref{eq:measMulti} or by the quantization points in $\mathcal{A}_{\Delta,B}^m$ onto which the real measurements $\langle \a_i,\x \rangle + \tau_i$ are projected.}
     \label{fig:multibit}
\end{figure}

In the $B$-bit case we generalize \eqref{eq:OneBitCSdithered} to uniform multi-bit measurements of the form
\begin{align} \label{eq:measMulti}
    Q(\x) \in \{-1,1\}^{m\times (2^B-1)}, \quad Q(\x)_{i,j} = \sign(\langle \a_i,\x \rangle + (\tau_i + j\Delta)),
\end{align}
where we use the index convention $i \in [m], j \in \{-(2^{B-1}-1),...,(2^{B-1}-1)\}$ and $\tau_i \sim \unif([-\Delta,\Delta])$ (cf.\ separating hyperplanes in Figure \ref{fig:multibit}). For simplicity, we do not consider noise on the measurements here. A non-uniform quantizer could be characterized by replacing $j \Delta$ with general shifts $\Delta_j$. As uniform quantizers asymptotically yield optimal performance \cite{gray1998quantization}, we restrict ourselves to this conceptually clearer setting. Our loss function generalizes in a straight-forward way to
\begin{align*}
    \L_{Q(\x)}(\z) = \frac{1}{m} \sum_{i = 1}^m \sum_{j = -(2^{B-1}-1)}^{(2^{B-1}-1)} \left[ -Q(\x)_{i,j} (\langle \a_i,\z \rangle + (\tau_i + j\Delta)) \right]_+,
\end{align*}
leading to the multi-bit ReLU recovery program
\begin{equation} \label{eq:Pmulti}
\tag{$P_{Q}$}
 \min_{\z \in \R^n} \L_{Q(\x)}(\z) \quad \text{subject to} \quad  \z\in \T.
\end{equation}
Apparently, writing the multi-bit measurements as above embeds them into the much larger space $\{-1,1\}^{m\times (2^B-1)}$. Since $Q(\x)_{i,j} \le Q(\x)_{i,k}$, for $j \le k$, only a small part of the possible sequences in $\{-1,1\}^{m\times (2^B-1)}$ are attained, namely those which consist of $-1$ up to some index $j_0$ and stay $1$ for any index $j > j_0$. 
\vskip 0.1cm
As an alternative way of representing the quantized measurements in the multi-bit case we can set
\begin{equation}
    q(\x)_i=q_{\mathcal{A}_{\Delta,B}}(\inner{\a_i}{\x}+\tau_i), \quad i=1, \ldots, m,
\end{equation}
where
$q_{\mathcal{A}_{\Delta,B}}: \R 
\to \mathcal{A}_{\Delta,B}$ denotes the 
one-dimensional quantizer
defined by 
\begin{equation}\label{eq:Alphabet_Quantizer}
    q_{\mathcal{A}_{\Delta,B}}(x):=\text{argmin}_{l\in \mathcal{A}_{\Delta,B}}|x-l|
\end{equation}
and
$\mathcal{A}_{\Delta,B}$ is the $B$-bit quantization alphabet with resolution $\Delta>0$ given by
\begin{equation}\label{eq:Alphabet}
\mathcal{A}_{\Delta,B}=\left\{ -(2^{B}-1)\frac{\Delta}{2},...,-3\frac{\Delta}{2},-\frac{\Delta}{2},\frac{\Delta}{2},3\frac{\Delta}{2},...,(2^{B}-1)\frac{\Delta}{2} \right\}\subset \R.
\end{equation}
For $j\in \mathbb{Z}$ set $q_j=(2j+1)\frac{\Delta}{2}$. Then
$$\mathcal{A}_{\Delta,B}=\{q_j\; : \;
j\in \{-2^{B-1},\ldots, -2, -1,0,1,2,\ldots, 2^{B-1}-1\}\}$$
and for $j\in \{-2^{B-1}+1,\ldots, -2, -1,0,1,2,\ldots, 2^{B-1}-2\}$,
\begin{align*}
    q(\x)_i = q_j=\frac{j\Delta + (j+1)\Delta}{2} \iff (\langle \a_i,\x \rangle + \tau_i) \in (j\Delta,(j+1)\Delta).
\end{align*}
In words, instead of characterizing the multi-bit quantizer by its separating hyperplanes, one characterizes it by the $2^B$ centers of the quantization intervals in between (cf.\ Figure \ref{fig:multibit}). While the first representation allows a nice geometrical and general interpretation of our functional $\L$ (all additional separating hyperplanes have the same influence as the original one-bit hyperplane), the second one allows straight-forward analysis. In any case, both representations can be related via the simple bijection
\begin{align*}
    q(\x) = \left( Q(\x) \cdot \mathbbm{1} \right) \frac{\Delta}{2}
\end{align*}
where $\mathbbm{1} \in \R^{2^B-1}$ is a vector whose entries are $1$. 
\begin{remark}
\blue{ Note that the second quantization representation gives rise to an alternative formulation of $\L_{Q(\x)}$ which does not require an exponentially growing number of summands. In fact, the inner sum of $\L_{Q(\x)}$ adds up the distances between $(\langle \a_i,\z \rangle + \tau_i)$ and the hyperplanes separating $\x$ and $\z$. One can thus deduce, cf.\ Figure \ref{fig:multibit}, that if
\begin{align*}
    K = K(\z,q(\x),\a_i) = 1 + \left\lfloor \frac{|\langle \a_i,\z \rangle - q(\x)_i | - \frac{\Delta}{2}}{\Delta} \right\rfloor
\end{align*}
denotes the number of hyperplanes separating $\x$ and $\z$ in measurement $\a_i$ ($\lfloor r \rfloor$ is the largest number in $\mathbb{Z}$ less or equal to $r \in \mathbb{R}$), then 
\begin{align*}
    \L_{Q(\x)} (\z) &= \L_{q(\x)} (\z) 
    =  \frac{1}{m} \sum_{i=1}^m \sum_{k = 1}^{K} \left( |(\langle \a_i,\z \rangle - q(x)_i| - \left( k - \frac{1}{2} \right) \Delta \right) \\
    &= \frac{1}{m} \sum_{i=1}^m \left( K \left( |(\langle \a_i,\z \rangle - q(x)_i| + \frac{\Delta}{2} \right) - \frac{K (K-1)}{2} \Delta \right).
\end{align*}
This representation of $\L_{Q(x)}$ as a quadratic function of $K(\z,q(\x),\a_i)$ is of special interest for fine quantization, i.e., whenever $B$ becomes large.}
\end{remark}

We are ready to state the main theorem. As mentioned in the introduction, we have to distinguish two regimes in the multi-bit setting. If the aimed for approximation accuracy $\rho > 0$ is above the quantizer resolution $\Delta$, we expect the sufficient number of measurements for signal sets $\T$ whose local complexity behaves similar to their global complexity, \emph{i.e.} $w_*((\T - \T) \cap \rho B_2^n) \simeq \rho w_*(\T)$, to be independent of $\rho$ as in the classical compressed sensing theory. On the other hand, if we ask for $\rho < \Delta$, then the number of measurements should behave more like in the one-bit case and depend explicitly on $\rho$. The following theorem guarantees in both regimes reconstruction of unknown signals from measurements of type \eqref{eq:measMulti} by \eqref{eq:Pmulti}.

\begin{theorem} \label{thm:MultiBitMain} 
There exist constants $\Gamma\geq 1, c, c_1, c_2\in (0,1)$ 
and a numerical constant $C\geq 1$
such that the following holds. 
Let $\T\subset R B^n_2$ be a convex set. Assume that $\Delta > 0$ and $B\in \mathbb{N}$ are chosen such that $\Gamma R\leq (2^{B-1}-1)\Delta - \frac{\Delta}{2}$ (meaning that the quantizer's range includes w.h.p.\ most of the linear measurements). 
\begin{enumerate}{}
    \item[(i)] For any $\rho\geq C\Delta$, if
    \begin{align*}
         m \gtrsim \rho^{-2} w_*((\T-\T)\cap \rho B^n_2)^2 + R^{-2} w_*(\T)^2,
    \end{align*}
    then, with probability exceeding 
    $
     1- 2\exp(- c_1 m)
    $,
    the following holds: for all $\x\in \T$, 
    every minimizer $\x^\#$ of the program \eqref{eq:Pmulti} 
    with $Q(\x)_{i,j} = \sign (\langle \a_i,\x \rangle + \tau_i + j\Delta)$
    satisfies $$\norm{\x-\x^\#}{2}\leq \rho.$$
    \item[(ii)] For any $0 < \rho < \Delta$, if
    \begin{align*}
        m \gtrsim \frac{\Delta}{\rho} \Big(\frac{w_*((\T-\T)\cap \rho B^n_2)^2}{\rho^2} + \log(\N(\T, c\rho/ \sqrt{\log(e\Delta/\rho)})) \Big),
    \end{align*}
    then, with probability exceeding 
    $
    1- 2\exp(-c_2 m(\rho/\Delta))
    $,
    the following holds: for all $\x\in \T$, 
    every minimizer $\x^\#$ of the program \eqref{eq:Pmulti} 
    with $Q(\x)_{i,j} = \sign (\langle \a_i,\x \rangle + \tau_i + j\Delta)$
    satisfies $$\norm{\x-\x^\#}{2}\leq \rho.$$
\end{enumerate}
\end{theorem}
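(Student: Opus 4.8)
The plan is to reduce the multi-bit statement to the geometry of a single quantized measurement and then invoke a uniform hyperplane-tessellation result of Dirksen--Mendelson, exactly as in the one-bit case but now exploiting the finer cell structure. First I would fix $\x,\x^\# \in \T$ and set $\z = \x^\#-\x$, $\h = \x^\# $; since $\L_{Q(\x)}(\x)=0$ (the true signal sits consistently inside all quantization cells) and $\x^\#$ is a minimizer, we get $\L_{Q(\x)}(\x^\#)\le \L_{Q(\x)}(\x)=0$, hence $\L_{Q(\x)}(\x^\#)=0$. By the remark's closed form for $\L_{Q(x)}$ this forces, for every $i\in[m]$, that $\langle \a_i,\x^\#\rangle + \tau_i$ lies in the \emph{same} length-$\Delta$ quantization interval as $\langle\a_i,\x\rangle+\tau_i$ whenever that interval is interior to the quantizer range; in particular $|\langle \a_i, \z\rangle| < \Delta$ for all such $i$. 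So consistency with the multi-bit data is, up to boundary effects, the statement that no shifted hyperplane of the form $\{\langle\a_i,\cdot\rangle + \tau_i + j\Delta = 0\}$ separates $\x$ and $\x^\#$.

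The core step is then a tessellation/anticoncentration argument. I want: with high probability, \emph{uniformly} over all pairs $\x,\x'\in\T$ with $\|\x-\x'\|_2 \ge \rho$, at least a constant fraction of the $m$ shifted hyperplane-bundles separates $\x$ from $\x'$. Equivalently, defining for a unit-speed direction the event that $\langle\a_i,\x-\x'\rangle + \tau_i + j\Delta$ changes sign for some admissible $j$, I need a lower bound on its probability and a uniform concentration of the empirical count. The probability bound comes from the fact that $\tau_i$ is uniform on $[-\Delta,\Delta]$: conditionally on $\a_i$, the shifted threshold $\tau_i \bmod \Delta$ is (essentially) uniform on an interval of length $\Delta$, so a bundle separates $\x,\x'$ with probability $\gtrsim \min\{|\langle\a_i,\x-\x'\rangle|/\Delta,\,1\}$. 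For $\|\x-\x'\|_2 \ge \rho$ this is $\gtrsim \rho/\Delta$ in the regime $\rho<\Delta$ (giving the $\Delta/\rho$ oversampling factor and the $\exp(-c_2 m\rho/\Delta)$ probability in part (ii)), and $\gtrsim 1$ once $\rho\ge C\Delta$ (giving the $\rho$-independent count and $\exp(-c_1 m)$ in part (i)). The \emph{uniform} version — controlling this empirical separation count simultaneously over the (non-compact, only covering-number-bounded) set $\T$ — is precisely where the Dirksen--Mendelson non-Gaussian tessellation machinery enters: one discretizes $\T$ at scale $\rho/\sqrt{\log(e\Delta/\rho)}$ (hence the covering-number term $\log\N(\T,c\rho/\sqrt{\log(e\Delta/\rho)})$), handles the fluctuation term via a localized Gaussian-width bound $w_*((\T-\T)\cap\rho B_2^n)$, and absorbs the subgaussian tails of $\langle\a_i,\cdot\rangle$ with the logarithmic correction. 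In part (i) the second summand $R^{-2}w_*(\T)^2$ absorbs the global (coarse) discretization of $\T\subset RB_2^n$.

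Finally, I would combine the two ingredients. On the good event, suppose a minimizer $\x^\#$ satisfied $\|\x-\x^\#\|_2 > \rho$; then a constant fraction of the $m$ hyperplane-bundles separates $\x$ and $\x^\#$, so for each such $i$ there is an admissible $j$ with $\sign(\langle\a_i,\x^\#\rangle+\tau_i+j\Delta)\ne \sign(\langle\a_i,\x\rangle+\tau_i+j\Delta)$, forcing a strictly positive contribution to $\L_{Q(\x)}(\x^\#)$ — contradicting $\L_{Q(\x)}(\x^\#)=0$. Boundary terms (measurements whose cell is the outermost, unbounded one) are controlled by the range assumption $\Gamma R \le (2^{B-1}-1)\Delta - \Delta/2$: a standard subgaussian tail bound shows $|\langle\a_i,\x\rangle| \le \Gamma R$ for all $\x\in\T$ simultaneously on an event of probability $\ge 1-2\exp(-cm)$ (after another covering argument over $\T$), so w.h.p.\ all relevant measurements land in interior cells and the reduction above is valid. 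I expect the main obstacle to be the uniform lower bound on the separation count in the sub-noise-level regime $\rho<\Delta$: one must show the anticoncentration probability $\gtrsim\rho/\Delta$ survives passing to the supremum over $\T$, which requires carefully exploiting that $\tau \sim \unif([-\Delta,\Delta])$ to kill the subgaussian lower-tail issues (a Gaussian or generic subgaussian $\tau$ would not give the clean $\rho/\Delta$ scaling), and this is exactly the point where the proof structure diverges from the naive one-bit argument.
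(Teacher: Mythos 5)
Your overall plan is sound: reduce to a uniform lower bound on the fraction of separating hyperplane bundles, with probability $\gtrsim \min\{|\langle \a_i,\x-\x'\rangle|/\Delta,1\}$ per measurement (via the uniform dither), amplify with Chernoff, pass to a net, and control boundary cells via the range condition. This is indeed the skeleton of the paper's argument, and you correctly observe that the contradiction for a minimizer follows from $\L_{Q(\x)}(\x^\#)=0$ together with a separating bundle. But there is a genuine gap in how you handle part (i), and a technical point you are gliding over that the paper needs.

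For part (i), look at the claimed bound $m \gtrsim \rho^{-2} w_*((\T-\T)\cap \rho B^n_2)^2 + R^{-2} w_*(\T)^2$: the second term scales with $R$, not with $\rho$. Your proposal covers $\T$ directly at a $\rho$-dependent scale, and a union bound then costs $\log\N(\T,c\rho)\lesssim \rho^{-2}w_*(\T)^2$ (Sudakov), which for $C\Delta\le\rho\ll R$ is strictly worse than $R^{-2}w_*(\T)^2$. Saying that "the second summand absorbs the coarse discretization" does not close this: a coarse ($R$-scale) net of $\T$ cannot by itself localize pairs at distance $\rho$. The paper avoids this by a two-stage argument specific to the multi-bit quantizer: it first proves the uniform separation bound on the \emph{localized difference set} $\T_{17\rho}=(\T-\T)\cap 17\rho B^n_2$ (Theorem~\ref{thm:LocalMultiBit}), whose $c\rho$-covering number is controlled by $\rho^{-2}w_*(\T_{17\rho})^2$ and hence folded into the first term, and then \emph{extends} to all of $\T$ by a convexity-plus-shift-invariance argument (Theorem~\ref{thm:MultiBit}): for $\x,\y\in\T$ with $\|\x-\y\|_2>17\rho$, take $\z$ on the segment $[\x,\y]$ with $\|\x-\z\|_2=17\rho$, note that a bundle well-separating $\0$ and $\x-\z$ gives $|\langle\a_i,\x\rangle-\langle\a_i,\z\rangle|\ge\rho\ge 2\Delta$, and use that the quantizer's hyperplanes recur at every $\Delta$-shift to transport this to a well-separating bundle for $\x$ and $\z$ (and hence for $\x$ and $\y$), provided $|\langle\a_i,\x\rangle|,|\langle\a_i,\z\rangle|$ stay within range. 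Only the range control needs a cover of $\T$, at scale $R$, which is where $R^{-2}w_*(\T)^2$ comes from. Without this shift-invariance step, your argument proves a weaker theorem.

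A second, smaller issue: for the net-then-extend step to work you need \emph{well}-separation in the sense of Definition~\ref{def:MultiBit_Separation}, not mere separation; mere separation is destroyed by perturbations of order $\eps$, which is exactly what Lemma~\ref{lem:noise_stability} quantifies. The paper obtains well-separation in the multi-bit case by first proving (Lemma~\ref{lem:multi_bit_sep}, via Lemma~\ref{lem:expectation_quantizer} and Paley--Zygmund) a lower bound on the \emph{count} $N(\x,\y,\a_i)\geq \tfrac{1}{4}\Delta^{-1}\|\x-\y\|_2$ of separating hyperplanes, which for $\|\x-\y\|_2\geq 16\Delta$ forces $N\geq 3$ and hence guarantees that one of the separating hyperplanes has margin $\gtrsim\|\x-\y\|_2$ on both sides (Lemma~\ref{lem:multi_bit_separation}). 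Your phrasing "a bundle separates with probability $\gtrsim \min\{|\langle\a_i,\x-\x'\rangle|/\Delta,1\}$" only gives existence of a separating hyperplane, not a margin; if you skip the count argument you will not be able to invoke the stability lemma in the covering step. For part (ii) your description (replace $\lambda$ by $\Delta$, net at scale $\rho/\sqrt{\log(e\Delta/\rho)}$) matches the paper's intent, which there explicitly omits the proof as an adaptation of Theorem~\ref{thm:noisy_recovery}.
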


\paragraph{Two regimes} The result highlights the role of the resolution $\Delta$ as a parameter of $q_{\mathcal{A}_{\Delta,B}}$.
The statistical guarantees for the estimator $\x^\#$ computed by \eqref{eq:Pmulti} are different depending on how the desired accuracy $\rho>0$ relates to the resolution $\Delta$. For the sake of simplicity assume for the moment that $\T \subset B_2^n$ and observe that by Sudakov's inequality $\log(\mathcal{N}(\T,\ep)) \lesssim \rho^{-2}
\log(e\Delta/\rho) w_*(\T)^2$
for $\eps\sim \rho/ \sqrt{\log(e\Delta/\rho)}$. Further, we have $ w_*((\T-\T)\cap \rho B^n_2) \leq 2w_*(\T)$.  Theorem~\ref{thm:MultiBitMain} yields the following performance bounds for \eqref{eq:Pmulti} (ignoring log-factors):
\begin{itemize}
    \item If $\rho \geq C \Delta$, then with probability $1-\exp(-c_1 m)$ the estimator $\x^\#$ satisfies
    \begin{equation}
        \norm{\x - \x^\#}{2} \leq c' \Big( \frac{w_*(\T)^2}{m} \Big)^{\frac{1}{2}} \; .
    \end{equation}
    \item If $0<\rho < \Delta$, then with probability $1- 2\exp(-c_2 m(\rho/\Delta))$ the estimator $\x^\#$ 
    satisfies
    \begin{equation}
    \label{eq:mbit_perfomance_simple}
        \norm{\x - \x^\#}{2} \leq c'' \Big( \frac{\Delta w_*(\T)^2}{m} \Big)^{\frac{1}{3}} \; .
    \end{equation}
\end{itemize}
The first bound resembles a classical compressive sensing bound and is optimal in this regard (see the comments
on optimality below). If we try to reach accuracies below the resolution of the quantizer, then the performance of the estimator $\x^\#$ deteriorates to the performance in the one-bit setting (cf. Theorem~\ref{thm:noisy_recovery}).

\paragraph{Bit budgets} Let us  compare the results in Theorem~\ref{thm:noisy_recovery} and Theorem~\ref{thm:MultiBitMain}
with regard to the number of bits necessary in order to achieve a certain accuracy $\rho > 0$ in the noiseless setting. We denote the
smallest number of bits which have to be collected in order to achieve $\norm{\x - \x^\#}{2} \leq \rho$ by $\mathfrak{B}(\rho)$. For
$m$ measurements and
$B$-bits per measurement  
the total number of collected bits is $Bm$.
For the two regimes presented by Theorem~\ref{thm:MultiBitMain} only part (ii) is a
fair comparison to the results obtained by Theorem~\ref{thm:noisy_recovery}, since the bounds for $\rho \geq C \Delta$ in Theorem~\ref{thm:MultiBitMain} part (i) clearly outperform the result in Theorem~\ref{thm:noisy_recovery}. 
Hence, let us assume that $0< \rho < \Delta$. Since we can choose $\Delta\sim \frac{R}{2^B}$,   \eqref{eq:mbit_perfomance_simple} shows that the multi-bit 
estimator $\x_{\text{mbit}}^\#$ satisfies
(up to log-factors)
\begin{equation*}
        \norm{\x - \x_{\text{mbit}}^\#}{2} \leq c'' \Big( \frac{Rw_*(\T)^2}{2^{B} m} \Big)^{\frac{1}{3}} \; .
\end{equation*}
In comparison, the one-bit estimator $\x_{\text{1bit}}^\#$ achieves (up to log-factors)
\begin{equation*}
        \norm{\x - \x_{\text{1bit}}^\#}{2} \leq c' \Big( \frac{R w_*(\T)^2}{m} \Big)^{\frac{1}{3}} \; .
\end{equation*}
Hence, we find
\begin{equation} \label{eq:BitBudgetComparison}
    \mathfrak{B}_{\text{1bit}}(\rho) \gtrsim \rho^{-3} R\, w_*(\T)^2 
    \quad \text{ and } \quad
    \mathfrak{B}_{\text{mbit}}(\rho) \gtrsim B\, 2^{-B}\rho^{-3} R\, w_*(\T)^2  \; .
\end{equation}
Since the function $B \mapsto B\, 2^{-B}$ is rapidly decreasing, the comparison shows that spending more bits on a single measurement improves the overall bitrate which is necessary to attain accuracy $\rho >0$ (in the noiseless setting).

\paragraph{Near-optimality of Theorem~\ref{thm:MultiBitMain} \blue{for general convex sets}} Above the quantizer resolution, that is, for $\rho\gtrsim \Delta$, 
Theorem~\ref{thm:MultiBitMain} shows that with high probability $\sup_{\x\in B^n_1}\norm{\x-\x^\#}{2}\leq \rho$ if $m\gtrsim \rho^{-2} w_*(B_1^n)^2$. From \cite[Corollary 10.6]{foucart_mathematical_2013} it follows that for any reconstruction map $\mathcal{R}$
and matrix $\A \in \R^{m \times n}$, if
$\sup_{\x\in B^n_1}  \norm{\x - \mathcal{R}(\A \x)}{2} \leq \rho$,
then $m \gtrsim \rho^{-2} \log(en/m)$. Hence, if $m\leq \sqrt{n}$, then the number of measurements is necessarily lower-bounded by $\rho^{-2} w_*(B_1^n)^2$, 
which shows optimality of \eqref{eq:Pmulti} for reconstruction accuracies above the quantizer resolution. To see optimality of the second statement (up to log-factors), consider the convex set $\T=E\cap B^n_2$ where $E\subset \R^n$ is a $k$-dimensional subspace. Here, $R=1$ which implies that  
we can pick $\Delta\sim 2^{-B}$.
For any $\eps\in (0,1)$,  $\log(\N(\T,\ep))\lesssim k \log(1/\eps)$ and 
$(\T-\T)\cap \eps B^n_2\subset \eps (E\cap B^n_2)$. Consequently, for $\eps \sim \rho/\sqrt{\log(e\Delta/\rho)}$
if 
$m\gtrsim 2^{-B} \rho^{-1}\log(\sqrt{\log(e\Delta/\rho)}/\rho) k$ then with high probability $\sup_{\x\in \T}\norm{\x-\x^\#}{2}\leq \rho$. On the other hand, if $\mathcal{R}$ is any reconstruction map and $\mathcal{Q}:\R^m \to \mathcal{A}^m$ is any memoryless scalar $B$-bit quantizer ($|\mathcal{A}|=2^B$) such that $\sup_{\x\in \T}\norm{\x-\mathcal{R}(\mathcal{Q}(\A\x))}{2}\leq \rho$, then $\rho \gtrsim \frac{k}{m 2^B}$ (e.g. see \cite{boufounos_quantization_2015}).

\blue{\paragraph{Sparse recovery} 
Let us briefly comment on the case of sparse recovery for reconstruction accuracies below the quantizer resolution. As can be seen from \eqref{eq:mbit_perfomance_simple}, in the case of $s$-compressible signals $\T=\sqrt{s}B_{1}^n \cap B_{2}^n$, the reconstruction error decays as $\mathcal{O}\Big( \frac{\Delta s\log(2n/s)}{m} \Big)^{\frac{1}{3}}$. To the authors best knowledge this is the best known error decay rate for compressible signals. We remark, however, that in the case of $s$-sparse signals there is a tractable program with reconstruction error decaying as $\mathcal{O}(m^{-1/2})$, see \cite[Section 7.3]{xu2018quantized}. For further discussion we refer the reader to the survey \cite{dirksen2019quantized}.
}

\paragraph{Noise} \blue{If one is interested in treating noise in Theorem \ref{thm:MultiBitMain}, this is surely possible using the tools presented here. However, it requires some additional technical work like clarifying the concept of post-quantization noise and analysing the interplay between quantizer resolution, additive noise, dither range, and reconstruction accuracy in multiple cases.}


\section{Proofs} \label{sec:Proofs}

In this section, we provide the proofs omitted in Section \ref{sec:MainResults}. We first focus on the one-bit setting of Section \ref{sec:OneBit} and then turn to the more general multi-bit setting. 


\subsection{Proof of Theorem \ref{thm:noisy_recovery}}
To facilitate reading, we organize the proof in the following way:
\begin{itemize}
    \item \textbf{Geometric tools.} In Section \ref{sec:ToolsSjoerd} we recap the theory of non-Gaussian hyperplane tesselations as initially developed in \cite{dirksen_robust_2018}.
    \item \textbf{Technical supplement for the low-noise regime.} In Section \ref{sec:ToolsLowNoise} we develop the necessary technical tools to prove the first part of Theorem \ref{thm:noisy_recovery}. They require that $\rho/\sqrt{\log(\la/\rho)} \geq c  \norm{\nu}{L^2}$ but lead to sharper bounds in the end.
    \item \textbf{Technical supplement for the high-noise regime.} In Section \ref{sec:ToolsHighNoise} we develop alternative technical tools which can handle the case where $\rho/\sqrt{\log(\la/\rho)} \leq c  \norm{\nu}{L^2}$. The main tool (Theorem~\ref{thm:concencration_functional}) of this section is developed along the lines of the proof of \cite[Theorem 4.5]{dirksen_robust_2018} with additional technicalities arising from the fact that $\L_{q(\x)}(\z)$ is not linear in $\z$.
    \item \textbf{Proof of Theorem \ref{thm:noisy_recovery}.} In Section \ref{sec:ProofOneBit} we prove Theorem \ref{thm:noisy_recovery} in both regimes.
\end{itemize}

Though we first discuss the technical results in detail, for better understanding it might help to have a look at Section \ref{sec:ProofOneBit} before reading Section \ref{sec:ToolsLowNoise} \& \ref{sec:ToolsHighNoise}.

\subsubsection{Geometric Tools} \label{sec:ToolsSjoerd}

The starting point for the analysis of the functional $\L_{q(\x)}(\z)$ with $q(\x) = \sign(\A \x + \ttau+\nnu)$ is the geometric intuition explained in the beginning. For a pair $\x,\z \in\T$ the value of the functional $\L_{q(\x)}(\z)$ is a weighted counting
of the hyperplanes separating the points $\x,\z \in \R^n$. 
Indeed, if 
\begin{equation*}
    H_{\a,\tau}:=\{\z\in \R^n \; : \;
    \inner{\a}{\z}+\tau=0\}
\end{equation*}
denotes a (inhomogeneous) hyperplane 
with normal vector $\a\in \R^n$, then in the noiseless case ($\nnu=\0$) we can write
\begin{align*}
  \L_{q(\x)}(\z)
  &=  \frac{1}{m} \sum_{i=1}^m
  1_{\sign(\inner{\a_i}{\x} + \tau_i)\neq\sign(\inner{\a_i}{\z} + \tau_i) }|\inner{\a_i}{\z} + \tau_i|\\
  &=  \frac{1}{m} \sum_{i=1}^m
  1_{\{\z \text{ and } \x \text{ lie on different sides of } H_{\a_i, \tau_i} \}}|\inner{\a_i}{\z} + \tau_i|.
\end{align*}
This links the analysis
of $\L_{q(\x)}(\z)$ to results from \cite{dirksen_robust_2018} on random hyperplane tessellations. Let us recall
the notion 
of a \emph{well-separating} hyperplane from
\cite{dirksen_robust_2018}:

\begin{definition}[{\cite[Definition 3.1]{dirksen_robust_2018}}]\label{def:well_sep} Let $i\in [m]$ and $\theta>0$.
The hyperplane $H_{\a_i,\tau_i}$ $\theta$-well-separates $\x$ and $\z$ if
\begin{enumerate}
    \item $\sign(\inner{\a_i}{\x} + \tau_i + \nu_i) \neq 
    \sign(\inner{\a_i}{\z} + \tau_i)$,
    \item $|\inner{\a_i}{\x} + \tau_i + \nu_i| \geq \theta \norm{\x-\z}{2}$, and
    \item $|\inner{\a_i}{\z} + \tau_i| \geq \theta \norm{\x-\z}{2}$.
\end{enumerate}
By $I(\x,\z, \nnu,\theta) \subset [m]$ we denote the set of all indices $i\in [m]$ for which $H_{\a_i,\tau_i}$ $\theta$-well-separates $\x$ and $\z$.
\end{definition}

The next theorem essentially states that if
$\la$ is large enough, then for any two points $\x$ and $\z$ which lie in a bounded set the following holds with high probability: out of the possible $m$ shifted hyperplanes $H_{\a_i, \tau_i}$ (where each $\tau_i$ is uniformly distributed on $[-\la, \la]$), at least $\frac{\norm{\x-\z}{2}}{\la
}m$ of them are well-separating for $\x$ and $\z$.
\begin{theorem}[{\cite[Theorem 3.2]{dirksen_robust_2018}}] \label{thm:noise_well-separating}
There are constants $\kappa, c_1, c_2>0$ for which the following holds.
Set $\lambda \gg \max\{R,\norm{\nu}{L^2}\}$. Then for $\x,\z\in R B^n_2$ with 
probability at least
$1 - 4 \exp(- c_1m \norm{\x - \z}{2}/\lambda )$ 
\begin{equation}
\label{eqn:noise_lower_bound_separating}
|I(\x,\z,\nnu,\kappa)| \geq c_2 m \frac{\norm{\x-\z}{2}}{\lambda} \; .
\end{equation}
\end{theorem}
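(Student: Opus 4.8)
The plan is to combine a lower bound on the probability that a \emph{single} hyperplane $\kappa$-well-separates a fixed pair $\x,\z$ with a Chernoff bound over the $m$ independent hyperplanes. Fix $\x,\z\in RB_2^n$ and write $\delta:=\norm{\x-\z}{2}$. For $i\in[m]$ put $u_i:=\inner{\a_i}{\x}+\nu_i$ and $v_i:=\inner{\a_i}{\z}$; both are independent of $\tau_i\sim\unif([-\la,\la])$. Conditioning on $(\a_i,\nu_i)$ and integrating over $\tau_i$ alone, a short case distinction on the sign of $u_i-v_i$ shows that the three conditions of Definition~\ref{def:well_sep} with margin parameter $\theta$ confine $\tau_i$ to an interval of length $(|W_i|-2\theta\delta)_+$, where $W_i:=u_i-v_i=\inner{\a_i}{\x-\z}+\nu_i$ (the sign-flip condition alone allows an interval of length $|W_i|$, and each of the two magnitude conditions shaves off a sub-interval of length $\theta\delta$ at one end). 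On the event $E_i:=\{|\inner{\a_i}{\x}|\le\la/4,\ |\inner{\a_i}{\z}|\le\la/4,\ |\nu_i|\le\la/4\}$ this interval lies inside $[-\la,\la]$ (here I use $\la\gg R\ge\delta/2$), so
\begin{equation*}
  \Pr\bigl(i\in I(\x,\z,\nnu,\theta)\bigr)\ \ge\ \frac{1}{2\la}\,\E\bigl[(|W_i|-2\theta\delta)_+\,\mathbf{1}_{E_i}\bigr].
\end{equation*}

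Next I would lower-bound the right-hand side. Since $\inner{\a_i}{\x-\z}$ is $L$-subgaussian with $L^2$-norm $\delta$ (isotropy) and $\nu_i$ is mean-zero and $L$-subgaussian, $W_i$ is subgaussian with $\sigma_i^2:=\E W_i^2=\delta^2+\norm{\nu}{L^2}^2\ge\delta^2$, and a standard anti-concentration estimate (Hölder together with $\E W_i^4\lesssim\sigma_i^4$) gives $\E|W_i|\gtrsim\sigma_i$. Using $(|W_i|-2\theta\delta)_+\ge|W_i|-2\theta\delta$ and $\delta\le\sigma_i$, one obtains $\E[(|W_i|-2\theta\delta)_+]\ge\E|W_i|-2\theta\delta\gtrsim\sigma_i$ as soon as $\theta$ is a sufficiently small absolute constant. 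The correction coming from $E_i^c$ is controlled by Cauchy--Schwarz, $\E[|W_i|\,\mathbf{1}_{E_i^c}]\le\sigma_i\,\Pr(E_i^c)^{1/2}$, and since $\la\gg\max\{R,\norm{\nu}{L^2}\}$ the subgaussian tail bound makes $\Pr(E_i^c)$ smaller than any prescribed absolute constant; choosing it small enough yields $\E[(|W_i|-2\theta\delta)_+\mathbf{1}_{E_i}]\gtrsim\sigma_i\ge\delta$. Setting $\kappa:=\theta$, this proves $p:=\Pr(i\in I(\x,\z,\nnu,\kappa))\gtrsim\delta/\la$.

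Finally I would use independence across $i$. The indicators $\mathbf{1}_{\{i\in I(\x,\z,\nnu,\kappa)\}}$, $i\in[m]$, are i.i.d.\ Bernoulli$(p)$ with $c'\delta/\la\le p\le1$, so $\E|I(\x,\z,\nnu,\kappa)|=mp$. A multiplicative Chernoff bound gives $\Pr\bigl(|I(\x,\z,\nnu,\kappa)|<\tfrac12 mp\bigr)\le\exp(-mp/8)\le\exp(-c_1 m\delta/\la)$, and on the complementary event $|I(\x,\z,\nnu,\kappa)|\ge\tfrac12 mp\ge c_2 m\delta/\la$, which is the assertion (the constant $4$ in the stated failure probability just absorbs the constants coming from the Chernoff bound). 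The main obstacle is the single-hyperplane estimate: one has to (i) perform the case analysis carefully so that the admissible $\tau_i$-interval really has length $(|W_i|-2\theta\delta)_+$, (ii) truncate to $E_i$ so that this interval is contained in $[-\la,\la]$ without discarding too much of the mass — this is precisely where $\la\gg\max\{R,\norm{\nu}{L^2}\}$ enters — and (iii) establish the anti-concentration $\E|W_i|\gtrsim\sigma_i$ with constants depending only on $L$. The additive noise $\nnu$ turns out to be harmless: it only increases $\sigma_i$, hence can only enlarge the well-separation probability and thus the lower bound.
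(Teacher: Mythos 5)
The paper does not prove Theorem~\ref{thm:noise_well-separating}; it imports it verbatim as \cite[Theorem 3.2]{dirksen_robust_2018}, so there is no in-paper proof to compare against. Evaluated on its own, your reconstruction is correct and follows the standard route for such statements: compute a pointwise probability that a single dithered hyperplane $\kappa$-well-separates by integrating the $\tau_i$-marginal on an interval of length $(|W_i|-2\theta\delta)_+$, lower-bound that expected length via Paley--Zygmund/H\"older anti-concentration on $W_i=\inner{\a_i}{\x-\z}+\nu_i$ (using isotropy and subgaussianity to get $\E|W_i|\gtrsim\sigma_i\gtrsim\delta$), truncate on the high-probability event $E_i$ that keeps the admissible interval inside $[-\la,\la]$ (where $\la\gg\max\{R,\norm{\nu}{L^2}\}$ is used), and then apply a multiplicative Chernoff bound over the $m$ i.i.d.\ hyperplanes. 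All three ingredients you flag as "the main obstacle" are handled correctly: the case analysis for the interval length is right, the truncation to $E_i$ is automatic since the interval is a sub-interval of $[\min(-u_i,-v_i),\max(-u_i,-v_i)]\subset[-\la/2,\la/2]$ on $E_i$, and the anti-concentration constant depends only on $L$. The original Dirksen--Mendelson argument organizes essentially the same ingredients at the whole-sample level (one Chernoff bound for the $\a_i$'s conditioned on small ball, one for the $\tau_i$'s, plus truncations, which is where the prefactor $4$ comes from); you instead absorb everything into a single per-index probability $p\gtrsim\delta/\la$ and one Chernoff bound, which is a cleaner packaging of the same idea. One small remark: the parenthetical justification "$\la\gg R\geq\delta/2$" in step (ii) is not what makes the interval fit inside $[-\la,\la]$ --- that is automatic on $E_i$ --- it is what makes $\Pr(E_i^c)$ small enough that the truncation does not destroy the lower bound; the rest of your write-up says this correctly.
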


\blue{Let us point out that the notion of well-seperating hyperplanes together with Theorem~\ref{thm:noise_well-separating} lead to a tight lower bound for $\L_{\qc}$. That this lower bound is optimal can be seen from Lemma~\ref{lem:exp_functional} below, which provides an estimate for the expectation of $\L_{\qc}$.\\}
The main idea behind the notion of a well-separating hyperplane is that if $H_{\a_i,\tau_i}$ well-separates 
$\x'$ and $\z'$ and we are given points $\x\approx \x'$ and $\z\approx \z'$, then also $H_{\a_i, \tau_i}$ well-separates $\x$ and $\z$, provided that $|\inner{\a_i}{\x-\x'}|$ and $|\inner{\a_i}{\z-\z'}|$ are not too large. The next lemma, which makes this observation precise, 
appeared intrinsically in \cite{dirksen_robust_2018}. A proof is provided for convenience in Appendix \ref{app:noise_stability}. 

\begin{lemma}[Stability of well-separating hyperplanes]
\label{lem:noise_stability}
Let $\T\subset \R^n$ be a set, and $\rho,\eps>0$ parameters with $\eps\leq \rho/4$. For every $\kappa>0$, and for all
vectors $\x,\z,\x',\z'\in \T$
with $\norm{\x'-\z'}{2} \geq
\rho$, $\norm{\x-\x'}{2} \leq \ep$ and $\norm{\z-\z'}{2} \leq \ep$ we have
\begin{equation} \label{eq:noise_stability}
    |I(\x,\z,\nnu,\kappa/2)| \geq |I(\x',\z',\nnu,\kappa)|  - 2 \sup_{\y\in (\T-\T)\cap \ep B^n_2}|\{i\in [m]\; : \; |\inner{\a_i}{\y}|>\kappa \rho/4\}|\;.
\end{equation}
\end{lemma}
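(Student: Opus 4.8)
The plan is to prove Lemma~\ref{lem:noise_stability} by a direct case analysis, showing that each index $i$ which lies in $I(\x',\z',\nnu,\kappa)$ but fails to lie in $I(\x,\z,\nnu,\kappa/2)$ must be "spoiled" by the perturbation, i.e. must satisfy $|\inner{\a_i}{\x-\x'}|>\kappa\rho/4$ or $|\inner{\a_i}{\z-\z'}|>\kappa\rho/4$; since $\x-\x'$ and $\z-\z'$ both lie in $(\T-\T)\cap\eps B_2^n$, the number of such indices is bounded by $2\sup_{\y\in(\T-\T)\cap\eps B_2^n}|\{i:|\inner{\a_i}{\y}|>\kappa\rho/4\}|$, which gives \eqref{eq:noise_stability}.

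First I would record the distance comparison: by the triangle inequality $\norm{\x-\z}{2}\le\norm{\x'-\z'}{2}+2\eps\le\norm{\x'-\z'}{2}+\rho/2\le\tfrac32\norm{\x'-\z'}{2}$ and, using $\eps\le\rho/4\le\norm{\x'-\z'}{2}/4$, also $\norm{\x-\z}{2}\ge\norm{\x'-\z'}{2}-2\eps\ge\tfrac12\norm{\x'-\z'}{2}$. These two bounds let me transfer any threshold stated in terms of $\norm{\x'-\z'}{2}$ to one in terms of $\norm{\x-\z}{2}$ at the cost of a factor close to $1$. Next, fix $i\in I(\x',\z',\nnu,\kappa)$ with $|\inner{\a_i}{\x-\x'}|\le\kappa\rho/4$ and $|\inner{\a_i}{\z-\z'}|\le\kappa\rho/4$; I will verify the three defining conditions of Definition~\ref{def:well_sep} for $(\x,\z)$ with parameter $\kappa/2$. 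For condition~(2): $|\inner{\a_i}{\x}+\tau_i+\nu_i|\ge|\inner{\a_i}{\x'}+\tau_i+\nu_i|-|\inner{\a_i}{\x-\x'}|\ge\kappa\norm{\x'-\z'}{2}-\kappa\rho/4\ge\tfrac{3\kappa}{4}\norm{\x'-\z'}{2}\ge\tfrac{\kappa}{2}\norm{\x-\z}{2}$, where the last step uses $\norm{\x'-\z'}{2}\ge\tfrac23\norm{\x-\z}{2}$; condition~(3) for $\z$ is identical. For condition~(1), the sign condition: since $|\inner{\a_i}{\x'}+\tau_i+\nu_i|\ge\kappa\norm{\x'-\z'}{2}\ge\kappa\rho>|\inner{\a_i}{\x-\x'}|$, the perturbation cannot flip the sign, so $\sign(\inner{\a_i}{\x}+\tau_i+\nu_i)=\sign(\inner{\a_i}{\x'}+\tau_i+\nu_i)$; likewise $\sign(\inner{\a_i}{\z}+\tau_i)=\sign(\inner{\a_i}{\z'}+\tau_i)$, and since the primed signs differ by hypothesis, so do the unprimed ones. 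Hence $i\in I(\x,\z,\nnu,\kappa/2)$.

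To conclude, I contrapose: every $i\in I(\x',\z',\nnu,\kappa)\setminus I(\x,\z,\nnu,\kappa/2)$ satisfies $|\inner{\a_i}{\x-\x'}|>\kappa\rho/4$ or $|\inner{\a_i}{\z-\z'}|>\kappa\rho/4$, so
\begin{equation*}
  |I(\x',\z',\nnu,\kappa)\setminus I(\x,\z,\nnu,\kappa/2)|\le|\{i:|\inner{\a_i}{\x-\x'}|>\kappa\rho/4\}|+|\{i:|\inner{\a_i}{\z-\z'}|>\kappa\rho/4\}|,
\end{equation*}
and bounding each term by the supremum over $\y\in(\T-\T)\cap\eps B_2^n$ (legitimate because $\x-\x',\z-\z'\in(\T-\T)$ have norm at most $\eps$) yields $|I(\x,\z,\nnu,\kappa/2)|\ge|I(\x',\z',\nnu,\kappa)|-2\sup_{\y\in(\T-\T)\cap\eps B_2^n}|\{i:|\inner{\a_i}{\y}|>\kappa\rho/4\}|$.

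This argument is essentially bookkeeping and involves no probabilistic input, so there is no serious obstacle; the only point requiring a little care is keeping the constants consistent — in particular making sure that the slack $\kappa\rho/4$ in the perturbation bound is small enough relative to $\kappa\norm{\x'-\z'}{2}\ge\kappa\rho$ both to preserve the sign and to leave at least $\tfrac{3\kappa}{4}\norm{\x'-\z'}{2}\ge\tfrac{\kappa}{2}\norm{\x-\z}{2}$ of margin after accounting for the factor $\norm{\x'-\z'}{2}/\norm{\x-\z}{2}\ge\tfrac23$. One also has to note that the asymmetry of Definition~\ref{def:well_sep} (the noise $\nu_i$ appears in the $\x$-terms but not the $\z$-terms) causes no trouble, since the noise term is simply carried along inside $\inner{\a_i}{\x}+\tau_i+\nu_i$ throughout and never needs to be perturbed.
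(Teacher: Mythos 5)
Your proof is correct and is essentially identical to the paper's own argument in Appendix~\ref{app:noise_stability}: the same triangle-inequality comparison giving $\tfrac12\norm{\x'-\z'}{2}\leq\norm{\x-\z}{2}\leq\tfrac32\norm{\x'-\z'}{2}$, the same verification that each of the three conditions of Definition~\ref{def:well_sep} survives a perturbation of size at most $\kappa\rho/4$ in $\inner{\a_i}{\cdot}$ (including the sign-preservation step), and the same contrapositive count of indices spoiled by $\x-\x'$ or $\z-\z'$, both of which lie in $(\T-\T)\cap\eps B^n_2$. The remarks about constant bookkeeping and the asymmetry of the $\nu_i$-placement are accurate but not points of divergence from the paper.
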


The right hand side of $\eqref{eq:noise_stability}$ can be bounded using the following lemma, 
which immediately follows from Theorem~\ref{subgaussian_tail} below. For convenience a proof is included in
Appendix \ref{app:noise_stability}.

\begin{lemma} \label{lem:bound_for_spoiled_hyperplanes}
There exist constants $c_1, c_2>0$ such that the following holds.
Let $\rho>0$ and $\mathcal{K} \subset \R^n$. 
If $1\leq k \leq m$ and $u \geq 1$
then with probability at least 
$1-2\exp(-c_1 u^2 k \log(em/k))$,
\begin{displaymath}
 \sup_{\z\in \mathcal{K}}|\{i\in [m]\; : \; |\inner{\a_i}{\z}|>\rho\}|< k \; ,
\end{displaymath}
provided that 
\begin{equation}
    \frac{c_2}{\sqrt{k}}\Big(w_*(\mathcal{K}) + u\; d_2(\mathcal{K}) \sqrt{k \log(em/k)} \Big) \leq  \rho \; .
\end{equation}
\end{lemma}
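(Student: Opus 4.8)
The plan is to reduce the combinatorial event "$|\{i : |\inner{\a_i}{\z}|>\rho\}|<k$ for all $\z\in\mathcal{K}$" to a uniform bound on the norm of the $k$ largest coordinates of the random vector $(\inner{\a_i}{\z})_{i\in[m]}$, and then to invoke the general deviation estimate for subgaussian empirical processes, Theorem~\ref{subgaussian_tail}. First I would recall the elementary observation that for any $\w\in\R^m$ with non-increasing rearrangement $\w^*$ and any $1\leq k\leq m$ one has $w_k^*\leq\bigl(\tfrac1k\sum_{i=1}^k(w_i^*)^2\bigr)^{1/2}=\max_{|I|\leq k}\bigl(\tfrac1k\sum_{i\in I}w_i^2\bigr)^{1/2}$. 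Consequently, if this last quantity is at most $\rho$, then $w_k^*\leq\rho$, i.e.\ at most $k-1$ coordinates of $\w$ exceed $\rho$ in absolute value, which is precisely the desired event applied with $w_i=\inner{\a_i}{\z}$. Hence it suffices to show that with the claimed probability
\[
\S_k:=\sup_{\z\in\mathcal{K}}\max_{|I|\leq k}\Bigl(\frac1k\sum_{i\in I}\inner{\a_i}{\z}^2\Bigr)^{1/2}\leq\rho .
\]

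Next I would recognize $\sqrt{k}\,\S_k$ as the supremum of the $\ell_2$-norms of the $k$ largest entries of the image vectors, i.e.\ the behaviour of the measurement matrix $\A$ restricted simultaneously to the index set $\mathcal{K}$ and to all $k$-element coordinate supports. Applying Theorem~\ref{subgaussian_tail} to this composite set yields, for every $u\geq1$, with probability at least $1-2\exp(-c_1u^2k\log(em/k))$,
\[
\sqrt{k}\,\S_k\leq c_2\Bigl(w_*(\mathcal{K})+u\,d_2(\mathcal{K})\sqrt{k\log(em/k)}\Bigr),
\]
where the Gaussian-width term $w_*(\mathcal{K})$ captures the complexity of $\mathcal{K}$ and the second term, featuring the diameter $d_2(\mathcal{K})$ and the $\log(em/k)$ factor, accounts for the union over the $\binom{m}{k}$ choices of support. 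Combining this with the hypothesis $c_2\bigl(w_*(\mathcal{K})+u\,d_2(\mathcal{K})\sqrt{k\log(em/k)}\bigr)\leq\sqrt{k}\,\rho$ gives $\S_k\leq\rho$ on the same event, and together with the rearrangement reduction from the first step this proves the lemma.

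The main obstacle is the bookkeeping in the second step: one has to phrase the "$k$ largest coordinates" functional as an empirical process over an appropriate index set so that Theorem~\ref{subgaussian_tail} applies verbatim, and verify that the width and diameter parameters of that set collapse to $w_*(\mathcal{K})$ and $d_2(\mathcal{K})$ (together with the $\sqrt{k\log(em/k)}$ contribution of the sparse selectors). Once the process is set up correctly, the probability estimate and the matching of constants are routine.
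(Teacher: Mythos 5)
Your proposal is correct and follows the paper's proof essentially verbatim: reduce via the non-increasing rearrangement observation to bounding $\S_k$, then apply Theorem~\ref{subgaussian_tail} and compare with the hypothesis. The ``obstacle'' you flag in the second step is in fact a non-issue, since Theorem~\ref{subgaussian_tail} already states its conclusion directly for the quantity $\sup_{\x\in\S}\max_{|I|\leq k}\bigl(\sum_{i\in I}\inner{\a_i}{\x}^2\bigr)^{1/2}$, so no reformulation as an empirical process over an augmented index set is needed.
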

\begin{theorem}[{\cite[Theorem 2.10]{dirksen_robust_2018}}]
\label{subgaussian_tail}
There exist constants $c_1, c_2>0$ such that the following holds.
Let $\a_1, \ldots, \a_m\in \R^n$ denote independent, isotropic L-subgaussian random vectors and let $\S \subseteq \R^n$ . If $1\leq k \leq m$ and $u \geq 1$ then with probability at least $1 - 2 \exp(-c_1 u^2 k \log(em/k))$, 
\begin{equation}
 \sup_{\x \in \S}  \max_{|I| \leq k} \Big( \sum_{i \in I} \inner{\a_i}{\x}^2 \Big)^{\frac{1}{2}}
 \leq c_2 \Big(w_*(\S) + u d_2(\S) \sqrt{k \log(em/k)} \Big).
\end{equation}
\end{theorem}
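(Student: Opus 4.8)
The plan is to decouple the uniform‑over‑$k$‑sparse‑supports estimate into a deviation inequality for a single $k\times n$ subgaussian submatrix and a union bound over the $\binom{m}{k}$ possible support sets. Write $\A$ for the matrix with rows $\a_i^T$ and, for $I\subseteq[m]$, let $\A_I$ be the submatrix whose rows are indexed by $I$. First observe that, since $1\le k\le m$, for each fixed $\x$ the $k$ largest among the numbers $\inner{\a_i}{\x}^2$, $i\in[m]$, realise $\max_{|I|\le k}\sum_{i\in I}\inner{\a_i}{\x}^2$; hence
\begin{equation*}
    \sup_{\x\in\S}\max_{|I|\le k}\Big(\sum_{i\in I}\inner{\a_i}{\x}^2\Big)^{1/2} \;=\; \max_{I\subseteq[m],\;|I|=k}\;\sup_{\x\in\S}\norm{\A_I\x}{2},
\end{equation*}
and it suffices to bound the right‑hand side.

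Next I would fix $I$ with $|I|=k$ and apply the matrix deviation inequality for subgaussian matrices to $\A_I$ — equivalently, run Talagrand's generic chaining on the subgaussian process $\x\mapsto\norm{\A_I\x}{2}$ and use the majorizing‑measure comparison $\gamma_2(\S,\norm{\cdot}{2})\lesssim w_*(\S)$. This produces a constant $C_0\ge1$ such that, for every $s\ge0$, with probability at least $1-2\exp(-s^2)$,
\begin{equation*}
    \sup_{\x\in\S}\norm{\A_I\x}{2} \;\le\; C_0\Big(w_*(\S)+\big(\sqrt{k}+s\big)d_2(\S)\Big).
\end{equation*}
Choosing $s=\sqrt{2}\,u\sqrt{k\log(em/k)}$ and taking a union bound over the $\binom{m}{k}\le\exp(k\log(em/k))$ admissible sets $I$, the failure probability is at most
\begin{equation*}
    \exp\big(k\log(em/k)\big)\cdot2\exp\big(-2u^2k\log(em/k)\big) \;\le\; 2\exp\big(-u^2k\log(em/k)\big),
\end{equation*}
where the last inequality uses $u\ge1$. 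On the complementary event every $I$ with $|I|\le k$ obeys the displayed bound with this $s$; since $\sqrt{k}\le\sqrt{k\log(em/k)}\le u\sqrt{k\log(em/k)}$ (here $\log(em/k)\ge1$ because $m\ge k$), the bracket is $\lesssim u\sqrt{k\log(em/k)}$, and absorbing numerical constants into $c_2$ and setting $c_1=1$ finishes the argument.

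I expect the only nontrivial step to be the single‑submatrix deviation inequality: that is the genuinely non‑Gaussian ingredient where the $L$‑subgaussian hypothesis (via generic chaining / majorizing measures) is actually used, everything else being routine bookkeeping of constants. A variant that avoids the union bound altogether would instead run generic chaining directly on the subgaussian process $(\x,\mathbf{v})\mapsto\sum_{i=1}^m v_i\inner{\a_i}{\x}$ over $\S\times U_k$, where $U_k:=\{\mathbf{v}\in\R^m:\norm{\mathbf{v}}{2}\le1,\;\norm{\mathbf{v}}{0}\le k\}$; its increments are controlled by the product pseudometric $\big(\norm{\x-\x'}{2}^2+d_2(\S)^2\norm{\mathbf{v}-\mathbf{v}'}{2}^2\big)^{1/2}$ (up to a factor depending on $L$), so the chaining functional splits as $\lesssim w_*(\S)+d_2(\S)\,w_*(U_k)$ with $w_*(U_k)\sim\sqrt{k\log(em/k)}$, and the tail version of the chaining bound — whose deviation term scales with the diameter $\lesssim d_2(\S)$ — delivers the stated estimate in one shot.
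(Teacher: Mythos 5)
Your argument is correct. Note that the paper quotes this result as \cite[Theorem 2.10]{dirksen_robust_2018} without reproducing a proof, so the relevant comparison is to the source's argument rather than to anything in this manuscript. Your primary route — identifying $\sup_{\x\in\S}\max_{|I|\le k}\big(\sum_{i\in I}\inner{\a_i}{\x}^2\big)^{1/2}$ with $\max_{|I|=k}\sup_{\x\in\S}\norm{\A_I\x}{2}$, invoking the subgaussian matrix deviation inequality $\sup_{\x\in\S}\big|\norm{\A_I\x}{2}-\sqrt{|I|}\,\norm{\x}{2}\big|\lesssim w_*(\S)+s\,d_2(\S)$ with probability $1-2e^{-s^2}$, and then union-bounding over the $\binom{m}{k}\le\exp(k\log(em/k))$ supports — is valid and the bookkeeping checks out: with $s=\sqrt{2}\,u\sqrt{k\log(em/k)}$ the failure probability is at most $2\exp\big((1-2u^2)k\log(em/k)\big)\le 2\exp(-u^2k\log(em/k))$ for $u\ge 1$, and $\sqrt{k}\le\sqrt{k\log(em/k)}$ lets you absorb the $\sqrt{k}\,d_2(\S)$ contribution. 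The source instead proceeds in the spirit of the variant you sketch at the end: one observes that $\max_{|I|\le k}\big(\sum_{i\in I}w_i^2\big)^{1/2}=\sup_{\mathbf{v}\in U_k}\inner{\mathbf{w}}{\mathbf{v}}$ with $U_k=\{\mathbf{v}\in\R^m:\norm{\mathbf{v}}{2}\le1,\ \norm{\mathbf{v}}{0}\le k\}$, and then runs a single generic-chaining argument on the bilinear process $(\x,\mathbf{v})\mapsto\sum_i v_i\inner{\a_i}{\x}$ over $\S\times U_k$, using $w_*(U_k)\sim\sqrt{k\log(em/k)}$ and a tail version of the chaining bound. Your union-bound route is more modular — it black-boxes a standard single-submatrix deviation inequality and needs no product-set chaining — while the one-shot chaining avoids paying the combinatorial penalty explicitly and makes $\sqrt{k\log(em/k)}$ appear naturally as the Gaussian width of $U_k$; up to constants both land on the same estimate, and either is an acceptable proof of the statement.
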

The next result from \cite{dirksen_robust_2018} states that the lower bound on the number of well-separating hyperplanes given in Theorem~\ref{thm:noise_well-separating} can be extended to hold simultaneously for all pairs of points from $\T$ which are not too close to each other. 
The proof strategy is to first show the lower bound on the number of well-separating hyperplanes for all pairs of points in a net (which are not too close to each other), and then to extend the estimate to the whole set using the stability property of well-separating hyperplanes (Lemma~\ref{lem:noise_stability}) in combination with Lemma~\ref{lem:bound_for_spoiled_hyperplanes}. 
For convenience the detailed proof is included in Appendix \ref{app:noise_stability}. 
\begin{theorem} \label{thm:noisy_uniform}
There exist constants $c, c'>0$ and $C\geq e$ such that the following holds.
Let $\T\subset R B^n_2$ denote a convex set. 
Choose $\lambda \gg \max\{R, \norm{\nu}{L^2}\}$ and
fix $\rho\leq \lambda$. If
\begin{equation}
    m \gg \frac{\lambda}{\rho} \big(\rho^{-2} w_*((\T-\T)\cap \ep B^n_2)^2 + \log \N(\T,\ep)\big)
\end{equation}
for a number $\eps>0$ with $\ep \lesssim \rho/\sqrt{\log(C\lambda/\rho)}$, 
then with probability at least  
 $1- 2\exp(-
 c m \rho /\lambda)$,
 \begin{equation}
    \forall\, \x, \z \in \T \text{ with } \norm{\x-\z}{2} \geq \rho \; :\;   |I(\x,\z,\nnu,c')| \gg m \frac{\norm{\x-\z}{2}}{\lambda} \; .
\end{equation}
\end{theorem}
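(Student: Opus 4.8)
The plan is to carry out an $\varepsilon$-net argument combining the two-point bound of Theorem~\ref{thm:noise_well-separating} (thickness) with the stability Lemma~\ref{lem:noise_stability} and the spoiling bound of Lemma~\ref{lem:bound_for_spoiled_hyperplanes}. First I would fix $\varepsilon \sim \rho/\sqrt{\log(C\lambda/\rho)}$ (with the absolute constant in the $\sim$ to be pinned down later) and take $\mathcal{N}_\varepsilon$ to be a minimal $\varepsilon$-net of $\T$ in the Euclidean metric, so $|\mathcal{N}_\varepsilon| = \mathcal{N}(\T,\varepsilon)$. For each ordered pair $(\x',\z') \in \mathcal{N}_\varepsilon^2$ with $\norm{\x'-\z'}{2} \geq \rho/2$, Theorem~\ref{thm:noise_well-separating} gives $|I(\x',\z',\nnu,\kappa)| \geq c_2 m \norm{\x'-\z'}{2}/\lambda$ with failure probability at most $4\exp(-c_1 m \norm{\x'-\z'}{2}/\lambda) \leq 4\exp(-c_1 m \rho/(2\lambda))$. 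A union bound over the at most $\mathcal{N}(\T,\varepsilon)^2$ pairs, together with the hypothesis $m \gtrsim (\lambda/\rho)\log\mathcal{N}(\T,\varepsilon)$ (so that $2\log\mathcal{N}(\T,\varepsilon) \leq \tfrac{c_1}{4}m\rho/\lambda$, say), yields that with probability at least $1 - 4\exp(-c m\rho/\lambda)$ the bound $|I(\x',\z',\nnu,\kappa)| \geq c_2 m \norm{\x'-\z'}{2}/\lambda$ holds simultaneously for all such net pairs.

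Next I would invoke Lemma~\ref{lem:bound_for_spoiled_hyperplanes} with $\mathcal{K} = (\T-\T)\cap\varepsilon B_2^n$ (so $d_2(\mathcal{K}) \leq \varepsilon$), the threshold $\kappa\rho/4$ playing the role of $\rho$ there, $u=1$, and $k$ chosen as $k \sim m\rho/\lambda$, to get that with probability at least $1 - 2\exp(-c_1 k \log(em/k))$ one has $\sup_{\y \in \mathcal{K}}|\{i : |\inner{\a_i}{\y}| > \kappa\rho/4\}| < k$. The proviso of that lemma requires $w_*((\T-\T)\cap\varepsilon B_2^n) + \varepsilon\sqrt{k\log(em/k)} \lesssim \sqrt{k}\,\rho$. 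Using $w_*((\T-\T)\cap\varepsilon B_2^n) \leq \tfrac{\varepsilon}{\rho} w_*((\T-\T)\cap\rho B_2^n)$ (scaling of the localized width, since $\varepsilon \leq \rho$) and $k \sim m\rho/\lambda$, the first term condition becomes $m \gtrsim \lambda\rho^{-3} w_*((\T-\T)\cap\rho B_2^n)^2$, which is exactly one of the two hypotheses; the second term condition becomes $\varepsilon \lesssim \rho/\sqrt{\log(em/k)} \sim \rho/\sqrt{\log(e\lambda/\rho)}$, which is why $\varepsilon$ is chosen with that logarithmic factor (here one checks $em/k \sim \lambda/\rho$ up to the constant absorbed in $C$). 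So both the net event and the spoiling event hold simultaneously with probability at least $1 - 2\exp(-c m\rho/\lambda)$ after adjusting constants.

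On this combined event, take arbitrary $\x,\z\in\T$ with $\norm{\x-\z}{2} \geq \rho$ and pick $\x',\z'\in\mathcal{N}_\varepsilon$ with $\norm{\x-\x'}{2},\norm{\z-\z'}{2}\leq\varepsilon$. The triangle inequality gives $\norm{\x'-\z'}{2} \geq \norm{\x-\z}{2} - 2\varepsilon \geq \norm{\x-\z}{2}/2 \geq \rho/2$, so the net bound applies, and $\norm{\x'-\z'}{2} \geq \norm{\x-\z}{2}/2$. Applying Lemma~\ref{lem:noise_stability} (with $\rho/2$ in place of its $\rho$, which is legitimate since $\varepsilon \leq \rho/8 \leq (\rho/2)/4$) and then the spoiling bound $2k \leq \tfrac{c_2}{2} m\norm{\x-\z}{2}/\lambda$ (guaranteed once $k$ is chosen with a small enough constant relative to $c_2$), we obtain
\begin{equation*}
    |I(\x,\z,\nnu,\kappa/2)| \geq c_2 m \frac{\norm{\x'-\z'}{2}}{\lambda} - 2k \geq \frac{c_2}{2} m \frac{\norm{\x-\z}{2}}{\lambda} - \frac{c_2}{4} m \frac{\norm{\x-\z}{2}}{\lambda} \gtrsim m \frac{\norm{\x-\z}{2}}{\lambda},
\end{equation*}
which is the claim with $c' = \kappa/2$. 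The main obstacle — really the only delicate point — is bookkeeping the interplay of the two hypotheses on $m$, the choice of $k \sim m\rho/\lambda$, and the choice of $\varepsilon \sim \rho/\sqrt{\log(C\lambda/\rho)}$ so that (a) the spoiling lemma's proviso is met using exactly the localized-width hypothesis via the width-scaling inequality, (b) the logarithmic factor $\log(em/k) \sim \log(\lambda/\rho)$ matches the $\varepsilon$ we fixed (absorbing slack into $C$), and (c) all failure probabilities collapse to the single exponent $m\rho/\lambda$; none of these requires new ideas beyond what is already available in Theorems~\ref{thm:noise_well-separating} and \ref{subgaussian_tail} and the two lemmas, but the constants must be threaded carefully.
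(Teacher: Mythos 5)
Your plan follows the paper's proof almost step for step: an $\eps$-net, Theorem~\ref{thm:noise_well-separating} plus a union bound on net pairs, the stability Lemma~\ref{lem:noise_stability} applied with $\rho/2$, and Lemma~\ref{lem:bound_for_spoiled_hyperplanes} with $k\sim m\rho/\lambda$ to absorb the spoiling term into the lower bound. The bookkeeping of $k$, $\eps$, and the two hypotheses on $m$ is also the same, and the argument is sound.

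One point to correct, although it happens to be harmless here: the width-scaling inequality you invoke is stated in the wrong direction. Since $\T$ is convex, $\T-\T$ is star-shaped around the origin, so for $\eps\leq\rho$ one has the \emph{inclusion} $\tfrac{\eps}{\rho}\bigl((\T-\T)\cap\rho B^n_2\bigr)\subset (\T-\T)\cap\eps B^n_2$, which gives $w_*\bigl((\T-\T)\cap\eps B^n_2\bigr)\geq \tfrac{\eps}{\rho}\,w_*\bigl((\T-\T)\cap\rho B^n_2\bigr)$, not $\leq$. You did not actually need this conversion: the hypothesis of Theorem~\ref{thm:noisy_uniform} is already stated with the $\eps$-localized width $w_*((\T-\T)\cap\eps B^n_2)$, which is exactly what Lemma~\ref{lem:bound_for_spoiled_hyperplanes} requires with $\mathcal{K}=(\T-\T)\cap\eps B^n_2$, so the chain $m\gtrsim \lambda\rho^{-3}w_*((\T-\T)\cap\eps B^n_2)^2$ closes directly. (You may be thinking of the $\rho$-localized width appearing in Theorem~\ref{thm:concencration_functional}, which handles the high-noise concentration rather than the hyperplane tessellation.) With that superfluous step dropped, the rest of the argument is exactly the paper's proof.
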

Let us lastly state two more results from \cite{dirksen_robust_2018}, which will be important tools for us in order to control the error terms, which appear due to noise corruptions, in the decomposition of the functional $\L_{\qc}(\z)$ (see Section~\ref{sec:ProofOneBit}).
\begin{lemma}[{\cite[Theorem 2.9 and Lemma 4.6]{dirksen_robust_2018}\label{lem:upper}}] 
    There are constants $c_0, c_1>0$ and 
    $C\geq e$ for which the following holds. Let $\T\subset R B^n_2$ denote a convex set. Fix $\rho\leq \lambda$.
    Assume that
    \begin{equation}
        m \gg \frac{\lambda}{\rho} \big(\rho^{-2} w_*((\T-\T)\cap \ep B^n_2)^2 + \log \N(\T,\ep)\big)
    \end{equation}
    for a number $\eps>0$ with $\ep \lesssim \rho/\sqrt{\log(C\lambda/\rho)}$.
    Let $\N_\eps\subset \T$ be a minimal $\eps$-net with respect to the Euclidean metric.
    Then, with probability at least $1-2\exp(-c_0m\rho/\la)$ the following holds: 
    For all $\z\in \N_\eps$ and all $\x\in \T$ such that $\norm{\x-\z}{2}\leq \eps$
    we have
    \begin{equation*}
        |\{i\in [m] \; : \; 
        \sign(\inner{\a_i}{\x} + \nu_i +\tau_i )\neq \sign(\inner{\a_i}{\z} + \nu_i +\tau_i )\}|\leq c_1 \frac{\rho}{\la}m
    \end{equation*}
    and
    \begin{equation*}
        |\{i\in [m] \; : \; 
        \sign(\inner{\a_i}{\x} +\tau_i )\neq \sign(\inner{\a_i}{\z} +\tau_i )\}|\leq c_1 \frac{\rho}{\la}m.
    \end{equation*}
\end{lemma}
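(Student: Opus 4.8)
The plan is to bound, for each fixed net point $\z\in\N_\eps$, the supremum over all admissible $\x$ (those in $\T$ with $\norm{\x-\z}{2}\le\eps$) of the number of shifted hyperplanes separating $\z$ and $\x$, and then to take a union bound over the net. The starting point is the elementary fact that $\sign(\inner{\a_i}{\x}+\nu_i+\tau_i)\neq\sign(\inner{\a_i}{\z}+\nu_i+\tau_i)$ forces $|\inner{\a_i}{\z}+\nu_i+\tau_i|\le|\inner{\a_i}{\x-\z}|$. Hence a hyperplane can separate $\z$ from \emph{some} admissible $\x$ only if either $|\inner{\a_i}{\x-\z}|$ is large or $|\inner{\a_i}{\z}+\nu_i+\tau_i|$ is small, and I would split the index set accordingly at the threshold $\rho$.

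For the ``large'' part, $\{i:|\inner{\a_i}{\x-\z}|>\rho\}$, note that $\x-\z\in\mathcal{K}:=(\T-\T)\cap\eps B^n_2$ for every admissible pair, so this count is controlled \emph{uniformly in $\x$} by Lemma~\ref{lem:bound_for_spoiled_hyperplanes} (i.e.\ by Theorem~\ref{subgaussian_tail}) applied to $\mathcal{K}$: taking $k\sim m\rho/\la$ and $u=1$, the hypothesis $\tfrac{c}{\sqrt{k}}\big(w_*(\mathcal{K})+\eps\sqrt{k\log(em/k)}\big)\le\rho$ collapses, term by term, exactly to the assumptions $m\gg\la\rho^{-3}w_*(\mathcal{K})^2$ and $\eps\lesssim\rho/\sqrt{\log(C\la/\rho)}$ (here $\log(em/k)\sim\log(e\la/\rho)$), and it yields a bound of order $m\rho/\la$ with failure probability $2\exp(-c'm\rho/\la)$. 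For the ``small'' part, if $|\inner{\a_i}{\x-\z}|\le\rho$ then separation forces $|\inner{\a_i}{\z}+\nu_i+\tau_i|\le\rho$, and this count, $N_\z:=|\{i:|\inner{\a_i}{\z}+\nu_i+\tau_i|\le\rho\}|$, does \emph{not} depend on $\x$; this is where the dither is used. Conditioning on $(\a_i,\nu_i)$ and using that $\tau_i$ has density $\tfrac{1}{2\la}$ on $[-\la,\la]$, each indicator is Bernoulli with mean at most $\rho/\la$, whence $\E N_\z\le m\rho/\la$, and a Chernoff estimate of the crude form $\Pr\big(\sum_iZ_i\ge a\big)\le e^{-a}\prod_i\big(1+(e-1)p_i\big)\le\exp\!\big(-a+(e-1)m\rho/\la\big)$ gives $N_\z\le C_1m\rho/\la$ with failure probability $\exp(-cm\rho/\la)$ for a suitable absolute $C_1$.

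Combining the two parts for a fixed $\z$ yields a bound of order $m\rho/\la$, and a union bound over $\N_\eps$ costs a factor $|\N_\eps|\le\N(\T,\eps)$; since the hypothesis $m\gg\tfrac{\la}{\rho}\log\N(\T,\eps)$ makes $\log\N(\T,\eps)$ a small multiple of $m\rho/\la$, the overall failure probability stays $2\exp(-c_0m\rho/\la)$, which proves the first inequality. The second (noiseless) inequality follows from the identical argument with $\nu_i\equiv0$: the ``large'' part is unchanged (same set $\mathcal{K}$) and the ``small'' part uses only the conditional distribution of $\tau_i$, which is unaffected. I expect the main difficulty to be purely bookkeeping: choosing the split threshold and the parameter $k$ so that all conditions imposed along the way reduce to precisely the stated hypotheses on $m$ and $\eps$, and handling the Chernoff estimate for $N_\z$ in a way that is uniform in the a priori unknown value of $\E N_\z$ (hence the exponential-moment form above rather than the relative Chernoff bound).
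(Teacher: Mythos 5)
Your argument is correct. A few checks for completeness: the containment
\[
\{i : \sign(\inner{\a_i}{\x}+\nu_i+\tau_i)\neq\sign(\inner{\a_i}{\z}+\nu_i+\tau_i)\}
\subset \{i : |\inner{\a_i}{\x-\z}|>\rho\}\cup\{i : |\inner{\a_i}{\z}+\nu_i+\tau_i|\le\rho\}
\]
holds because differing signs force $|\inner{\a_i}{\z}+\nu_i+\tau_i|\le|\inner{\a_i}{\x-\z}|$; the first set is controlled uniformly over $\x-\z\in(\T-\T)\cap\eps B^n_2$ by Lemma~\ref{lem:bound_for_spoiled_hyperplanes} with $k\sim m\rho/\la$, $u=1$, whose hypothesis reduces (using $\log(em/k)\sim\log(C\la/\rho)$) exactly to $m\gtrsim\la\rho^{-3}w_*(\mathcal K)^2$ and $\eps\lesssim\rho/\sqrt{\log(C\la/\rho)}$, with failure probability $2\exp(-c'm\rho/\la)$; the second set has conditional Bernoulli means bounded by $\rho/\la$ (using $\rho\le\la$ and the uniform density of $\tau_i$, conditionally on $(\a,\nnu)$), so the MGF-Chernoff estimate gives $\Pr(N_\z\ge em\rho/\la)\le\exp(-m\rho/\la)$ uniformly in the conditional means; and the union bound over $\N_\eps$ is absorbed by $m\gg\tfrac{\la}{\rho}\log\N(\T,\eps)$, with the noiseless case identical after setting $\nu_i\equiv0$. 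Note that this Lemma is not proved in the paper at hand: it is imported verbatim from \cite[Theorem 2.9 and Lemma 4.6]{dirksen_robust_2018}, where the proof follows the same decomposition (a geometric tail bound \`a la Theorem~\ref{subgaussian_tail} on the net-to-perturbation part, plus a dither-based Bernoulli/Chernoff count at the net points); so your proposal reconstructs that argument rather than offering an alternative route.
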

\begin{lemma}[{\cite[Theorem 4.4]{dirksen_robust_2018}\label{lem:sup_max_subg}}]
Let $\a_1, \ldots, \a_m\in \R^n$ denote independent, isotropic L-subgaussian random vectors.
There are constants $c_1, c_2>0$ such that the following holds.
Let $\mathcal{K}\subset \R^n$ be a set that is star-shaped around $\0$. 
    For every $\alpha\in (0,1)$ and $\rho>0$, if 
    \begin{equation}\label{eq:lem:sup_max_subg_meas}
       m \geq \alpha^{-1}\log(e/\alpha)^{-1}\rho^{-2} w_*(\mathcal{K}\cap \rho \S^{n-1})^2,
    \end{equation}
then with probability at least $1 - 2 \exp(-c_1 \alpha m \log(e/\alpha))$, 
    \begin{equation*}
    \sup_{\substack{\x\in \mathcal{K}, \, \norm{\x}{2}\geq \rho}} 
    \max_{I\subset [m],\, |I|\leq \alpha m} \frac{1}{m}\sum_{i\in I} 
        \Big|\Big\langle \a_i, \frac{\x}{\norm{\x}{2}^2} \Big\rangle \Big|\leq c_2 \frac{\alpha  \sqrt{\log(e/\alpha)}}{\rho}.
\end{equation*}
\end{lemma}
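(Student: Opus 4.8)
The plan is to derive Lemma~\ref{lem:sup_max_subg} as a deterministic consequence of the restricted top-$k$ energy bound in Theorem~\ref{subgaussian_tail}. The first step is to eliminate the scaling factor $\x/\norm{\x}{2}^2$ and reduce the supremum to the sphere $\mathcal{K}\cap\rho\S^{n-1}$ using star-shapedness: for $\x\in\mathcal{K}$ with $\norm{\x}{2}\geq\rho$, the vector $\y:=\rho\,\x/\norm{\x}{2}$ satisfies $\y\in\mathcal{K}$ (since $\rho/\norm{\x}{2}\le 1$ and $\mathcal{K}$ is star-shaped around $\0$) and $\norm{\y}{2}=\rho$, hence $\y\in\mathcal{K}\cap\rho\S^{n-1}$; moreover $\x/\norm{\x}{2}^2=\y/(\rho\norm{\x}{2})$, so $\bigl|\inner{\a_i}{\x/\norm{\x}{2}^2}\bigr|\le\rho^{-2}\bigl|\inner{\a_i}{\y}\bigr|$. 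The second step is to linearize the top-$k$ $\ell_1$-functional by Cauchy--Schwarz: for any $I$ with $|I|\le\alpha m$,
\[
  \sum_{i\in I}\bigl|\inner{\a_i}{\y}\bigr|\;\le\;\sqrt{|I|}\,\Bigl(\sum_{i\in I}\inner{\a_i}{\y}^2\Bigr)^{1/2}\;\le\;\sqrt{\alpha m}\,\Bigl(\sum_{i\in I}\inner{\a_i}{\y}^2\Bigr)^{1/2}.
\]
Combining the two steps, the left-hand side of the claimed inequality is bounded by
$\tfrac{\sqrt{\alpha m}}{\rho^2 m}\,\sup_{\y\in\mathcal{K}\cap\rho\S^{n-1}}\max_{|I|\le\alpha m}\bigl(\sum_{i\in I}\inner{\a_i}{\y}^2\bigr)^{1/2}$.

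Next I would apply Theorem~\ref{subgaussian_tail} to the set $\mathcal{K}\cap\rho\S^{n-1}$ with $k=\lfloor\alpha m\rfloor$ and $u=1$. Here $d_2(\mathcal{K}\cap\rho\S^{n-1})=\rho$ and $\log(em/k)\sim\log(e/\alpha)$, so with probability at least $1-2\exp(-c\,\alpha m\log(e/\alpha))$,
\[
  \sup_{\y\in\mathcal{K}\cap\rho\S^{n-1}}\max_{|I|\le\alpha m}\Bigl(\sum_{i\in I}\inner{\a_i}{\y}^2\Bigr)^{1/2}\;\lesssim\;w_*(\mathcal{K}\cap\rho\S^{n-1})+\rho\sqrt{\alpha m\log(e/\alpha)}.
\]
The measurement hypothesis \eqref{eq:lem:sup_max_subg_meas} is exactly the statement that $w_*(\mathcal{K}\cap\rho\S^{n-1})\le\rho\sqrt{\alpha m\log(e/\alpha)}$, so the right-hand side is $\lesssim\rho\sqrt{\alpha m\log(e/\alpha)}$. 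Plugging this back yields the bound $\tfrac{\sqrt{\alpha m}}{\rho^2 m}\cdot\rho\sqrt{\alpha m\log(e/\alpha)}=\tfrac{\alpha\sqrt{\log(e/\alpha)}}{\rho}$, up to an absolute constant, which is the claim.

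The only points requiring care are bookkeeping: replacing $\alpha m$ by $\lfloor\alpha m\rfloor$ (and checking $\lfloor\alpha m\rfloor\ge 1$, which follows whenever the right-hand side of \eqref{eq:lem:sup_max_subg_meas} is at least $1$, the complementary case being vacuous), using $\log(em/\lfloor\alpha m\rfloor)\le\log(2e/\alpha)\le 2\log(e/\alpha)$, and absorbing the two appearances of $\log(e/\alpha)$ together with the constants from Theorem~\ref{subgaussian_tail} into the final $c_1,c_2$. I do not expect a genuine obstacle: all of the probabilistic content—the chaining/small-ball estimate controlling the restricted top-$k$ energy of rows of a subgaussian matrix—is packaged inside Theorem~\ref{subgaussian_tail}, which we are free to invoke, and Lemma~\ref{lem:sup_max_subg} is then obtained by the elementary Cauchy--Schwarz and homogeneity manipulations above.
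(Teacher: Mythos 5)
Your proposal is correct and follows essentially the same route as the paper's own (commented-out) proof: both use star-shapedness to reduce the supremum to $\mathcal{K}\cap\rho\S^{n-1}$, apply Cauchy--Schwarz to pass from the top-$k$ $\ell_1$-sum to the $\ell_2$-sum, invoke Theorem~\ref{subgaussian_tail} with $k\sim\alpha m$, and use the measurement condition \eqref{eq:lem:sup_max_subg_meas} to absorb the Gaussian-width term. The only cosmetic difference is that the paper phrases the reduction as an equality of suprema (noting the integrand is monotone in the radial parameter $s$), whereas you track the $\rho^{-2}$ factor explicitly via $\bigl|\inner{\a_i}{\x/\norm{\x}{2}^2}\bigr|\leq\rho^{-2}\bigl|\inner{\a_i}{\y}\bigr|$ — either form is fine.
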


\subsubsection{Technical Supplement for the Low-Noise Regime} \label{sec:ToolsLowNoise}

The proof of the low-noise regime requires two main ingredients. First, the lower bound on the number of well-separating hyperplanes given in Theorem \ref{thm:noisy_uniform} and, second, a concentration inequality for the
size of the failures produced by the noise term. The necessary concentration inequality, Lemma \ref{lem:concentration_functional}, is derived from the following result.

\begin{lemma}\label{lem:sign_flips_caused_by_noise}
There are constants $c_0, c_1>0$ and 
    $C\geq e$ for which the following holds. Let $\T\subset R B^n_2$ denote a convex set. 
    Choose $\lambda \gg \norm{\nu}{L^2}$ and fix $\rho\leq \lambda$.
    Assume that
    \begin{equation}
       m \gg \frac{\lambda}{\max\{\rho, \norm{\nu}{L^2}\}} \big((\max\{\rho,\norm{\nu}{L^2}\})^{-2} w_*((\T-\T)\cap \ep B^n_2)^2 + \log \N(\T,\ep)\big)
    \end{equation}
    for a number $\eps>0$ with $\ep \lesssim \max\{\rho,\norm{\nu}{L^2}\}/\sqrt{\log(C\lambda/\max\{\rho,\norm{\nu}{L^2}\})}$.
    Then, with probability at least \\
    $1-2\exp(-c_0m\max\{\rho, \norm{\nu}{L^2}\}/\la)$,
    \begin{equation*}
        \sup_{\x\in \T} |\{i\in [m]\; : \; \sign(\langle \a_i,\x \rangle + \nu_i + \tau_i)\neq 
    \sign(\langle \a_i,\x \rangle + \tau_i)\}|\leq c_1 \frac{\max\{\rho, \norm{\nu}{L^2}\} m }{\la} \; . 
    \end{equation*}
\end{lemma}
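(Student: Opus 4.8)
The plan is to replace the count of noise-induced sign flips by a count of hyperplanes that pass within distance $|\nu_i|$ of $\x$, and then run a net-plus-stability argument in which the anti-concentration of the dither $\tau_i$ does the probabilistic work. Abbreviate $r:=\max\{\rho,\norm{\nu}{L^2}\}$ and, for $\x\in\T$, set
\[
 S(\x):=\{i\in[m]\;:\;\sign(\inner{\a_i}{\x}+\nu_i+\tau_i)\neq\sign(\inner{\a_i}{\x}+\tau_i)\}.
\]
The first and only genuinely deterministic step is the elementary fact that (with the convention $\sign(0)=1$) a sign change between $w$ and $w+\nu_i$ forces $|w|\le|\nu_i|$; taking $w=\inner{\a_i}{\x}+\tau_i$ shows $|S(\x)|\le\bigl|\{i:|\inner{\a_i}{\x}+\tau_i|\le|\nu_i|\}\bigr|$, so it suffices to bound this last quantity uniformly over $\x\in\T$ by $c_1 rm/\lambda$.

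I would then discretize using a minimal $\eps$-net $\N_\eps\subset\T$. For $\z\in\N_\eps$ with $\norm{\x-\z}{2}\le\eps$, the bound $|\inner{\a_i}{\x}+\tau_i|\le|\nu_i|$ implies $|\inner{\a_i}{\z}+\tau_i|\le|\nu_i|+|\inner{\a_i}{\x-\z}|$, so, for a suitable constant $\alpha>0$ fixed once and for all,
\[
 \sup_{\x\in\T}|S(\x)|\;\le\;\max_{\z\in\N_\eps}\bigl|\{i:|\inner{\a_i}{\z}+\tau_i|\le|\nu_i|+\alpha r\}\bigr|\;+\;\sup_{\y\in(\T-\T)\cap\eps B^n_2}\bigl|\{i:|\inner{\a_i}{\y}|>\alpha r\}\bigr|.
\]
The second, ``spoiling'' term is handled by Lemma~\ref{lem:bound_for_spoiled_hyperplanes} applied with $\mathcal{K}=(\T-\T)\cap\eps B^n_2$, $u=1$ and $k\sim mr/\lambda$: since $d_2(\mathcal{K})\le\eps$ and $\log(em/k)\sim\log(e\lambda/r)$, the stated hypotheses on $m$ and $\eps$ are exactly what is needed to verify $\tfrac{c_2}{\sqrt{k}}\bigl(w_*(\mathcal{K})+\eps\sqrt{k\log(em/k)}\bigr)\le\alpha r$, and the lemma then gives a bound $\lesssim mr/\lambda$ with probability at least $1-2\exp(-c\,mr/\lambda)$.

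For the net term, fix $\z\in\N_\eps$. The crucial point is that $\tau_i\sim\unif([-\lambda,\lambda])$ is independent of $\inner{\a_i}{\z}$, so $\inner{\a_i}{\z}+\tau_i$ has a Lebesgue density bounded by $\tfrac{1}{2\lambda}$; conditioning on $\nu_i$ (and using $\min\{1,t\}\le t$ together with $\E|\nu_i|\le\norm{\nu}{L^2}$ and $\norm{\nu}{L^2}\le r$) gives
\[
 \Pb\bigl(|\inner{\a_i}{\z}+\tau_i|\le|\nu_i|+\alpha r\bigr)\;\le\;\E\min\Bigl\{1,\tfrac{|\nu_i|+\alpha r}{\lambda}\Bigr\}\;\le\;\frac{\norm{\nu}{L^2}+\alpha r}{\lambda}\;\le\;\frac{(1+\alpha)r}{\lambda}.
\]
Thus $\bigl|\{i:|\inner{\a_i}{\z}+\tau_i|\le|\nu_i|+\alpha r\}\bigr|$ is a sum of $m$ independent indicators, stochastically dominated by a $\mathrm{Bin}(m,(1+\alpha)r/\lambda)$ variable, and a Chernoff bound applied to the dominating binomial (so that the exponent is a multiple of $mr/\lambda$ rather than $mp$) shows it is $\lesssim mr/\lambda$ except with probability $\exp(-c\,mr/\lambda)$. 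A union bound over $\N_\eps$ costs a factor $\exp(\log\N(\T,\eps))$, which is harmless because the hypothesis on $m$ forces $\log\N(\T,\eps)$ to be at most a small multiple of $mr/\lambda$. Intersecting the two good events and relabelling constants yields the asserted estimate with probability $1-2\exp(-c_0\,mr/\lambda)$.

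The step I would pay the most attention to is the anti-concentration bound $\Pb(|\inner{\a_i}{\z}+\tau_i|\le t)\le t/\lambda$: it is the one and only place the hypothesis ``$\lambda$ large'' enters, and it is precisely what turns the crude first-moment estimate into the target scaling $r/\lambda$. The remainder is bookkeeping — choosing $\alpha$ first and then taking the implicit constant in the condition on $m$ large enough (depending on $L$ and on $\alpha$) so that the hypotheses of Lemma~\ref{lem:bound_for_spoiled_hyperplanes} hold with room to spare, and merging the various $\exp(-c\,mr/\lambda)$-type failure probabilities and the net union bound into a single term of the required form.
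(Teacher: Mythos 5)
Your proof is correct, but it takes a genuinely different route from the paper in two places. For the single-point estimate, the paper bounds $\Pr(\text{flip}) \le \Pr(|\nu|\geq|\inner{\a}{\z}+\tau|)$ by conditioning on $(\a,\tau)$, applying the subgaussian tail inequality for $\nu$, and then integrating a Gaussian-type quantity over $\tau$ to land at $\lesssim\norm{\nu}{L^2}/\lambda$; you instead condition on $\nu$, use the deterministic observation that a flip forces $|\inner{\a}{\z}+\tau|\le|\nu|$, and then exploit the $1/(2\lambda)$-bounded density of $\inner{\a}{\z}+\tau$ together with the first moment $\E|\nu|\le\norm{\nu}{L^2}$. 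Both give the same scaling, but your route is cleaner in that it only needs a first-moment bound on $|\nu|$ at this step and makes the role of $\lambda$ transparent. For the net extension, the paper invokes Lemma~\ref{lem:upper} (a black-box import from Dirksen--Mendelson) to control, over nearby points, the number of indices where either of the two signs $\sign(\inner{\a_i}{\cdot}+\nu_i+\tau_i)$ or $\sign(\inner{\a_i}{\cdot}+\tau_i)$ changes; you instead dilate the threshold by $\alpha r$ via the triangle inequality and control the resulting spoiling term directly with Lemma~\ref{lem:bound_for_spoiled_hyperplanes}. That yields a more self-contained argument at the cost of one extra constant $\alpha$ to track, whereas the paper's version leans on previously established machinery. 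Both satisfy the same measurement and $\eps$ hypotheses and produce the required $1-2\exp(-c_0 m\max\{\rho,\norm{\nu}{L^2}\}/\lambda)$ success probability.
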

\begin{proof}
    Let $\N_\ep\subset \T$ denote a minimal $\eps$-net with respect to the Euclidean metric. For every 
$\z \in \R^n$ we have
\begin{align*}
    \Pr(\sign(\langle \a,\z \rangle + \nu + \tau)\neq \sign(\langle \a,\z \rangle + \tau))
    &\leq \Pr(|\nu|\geq |\langle \a,\z \rangle  + \tau|)\\
    &\leq 2\E_{\a,\ttau} \exp(-c |\langle \a,\z \rangle  + \tau|^2/ \norm{\nu}{\psi_2}^2)\\
    &\leq 2\E_{\a,\ttau} \exp(-c |\langle \a,\z \rangle  + \tau|^2/ \norm{\nu}{L^2}^2),
\end{align*}
\blue{where we used the subgaussian tail property of $\nu$} and
that $\norm{\nu}{\psi_2}\lesssim \norm{\nu}{L^2}$.
Moreover, for any $a\in \R$,
\begin{align*}
   \E_{\ttau} \exp(- |a  + \tau|^2/ 2x^2)  = \frac{1}{2\la}\int_{-\la}^{\la} \exp(-(a+s)^2/2x^2)ds
   &=\frac{1}{2\la}\int_{-\la+a}^{\la+a} \exp(-s^2/2x^2)ds\\
   &\leq\frac{\sqrt{2\pi x^2}}{2\la}\int_{\R} \frac{1}{\sqrt{2\pi x^2}}\exp(-s^2/2x^2)ds\\
   &=\sqrt{\frac{\pi}{2}}\frac{|x|}{\la}.
\end{align*}
Therefore, for every $\z\in \R^n$,
\begin{align*}
    \Pr(\sign(\langle \a,\z \rangle + \nu + \tau)\neq \sign(\langle \a,\z \rangle + \tau))\leq c_1\frac{\norm{\nu}{L^2}}{\la}.
\end{align*}
Chernoff's inequality implies that
\begin{equation}
    \label{eqn:upper_bound_set}
    |\{i\in [m]\; : \; \sign(\langle \a_i,\z \rangle + \nu_i + \tau_i)\neq 
    \sign(\langle \a_i,\z \rangle + \tau_i)\}|\leq 2c_1 \frac{\max\{\norm{\nu}{L^2}, \rho\}m}{\la} \; 
\end{equation}
with probability at
least $1-\exp(-c\max\{\rho,\norm{\nu}{L^2}\} m/\la)$.
If $\log|\N_\ep| \leq  c\max\{\rho,\norm{\nu}{L^2}\} m/(2\la)$, a union bound shows that \eqref{eqn:upper_bound_set} holds for all $z \in \N_\ep$ with probability at least $1-\exp(-c \max\{\rho,\norm{\nu}{L^2}\} m/(2\la))$.
Lemma~\ref{lem:upper} implies that with probability at least $1-2\exp(-c'\max\{\rho,\norm{\nu}{L^2}\} m/\lambda)$ the following holds: for all $z\in \N_\eps$ and all $\x\in \T$ such that $\norm{\x-\z}{2}\leq \eps$,
    \begin{align} \label{eqn:upper_bound_set2}
    \begin{split}
        &|\{i\in [m] \; : \; 
        \1_{\{\sign(\langle \a_i,\x \rangle + \nu_i + \tau_i)\neq \sign(\langle \a_i,\x \rangle + \tau_i)\}} \neq \1_{\{\sign(\langle \a_i,\z \rangle + \nu_i + \tau_i)\neq \sign(\langle \a_i,\z \rangle + \tau_i)\}}
        \}|\\
        &\leq 
        |\{i\in [m] \; : \; 
    \sign(\langle \a_i,\x \rangle + \nu_i + \tau_i)\neq \sign(\langle \a_i,\z \rangle + \nu_i + \tau_i)
        \}| \\
        &\quad + |\{i\in [m] \; : \; 
    \sign(\langle \a_i,\x \rangle  + \tau_i)\neq \sign(\langle \a_i,\z \rangle + \tau_i)
        \}|\\
        &\leq c_2 \frac{\max\{\rho,\norm{\nu}{L^2}\}}{\la}m  \; .
        \end{split}
    \end{align}
As all measurements $i \in [m]$ for which $\sign(\langle \a_i,\x \rangle + \nu_i + \tau_i)\neq \sign(\langle \a_i,\x \rangle + \tau_i)$ are either in \eqref{eqn:upper_bound_set} or \eqref{eqn:upper_bound_set2}, we obtain that with probability at least $1-2\exp(-c_0\max\{\rho,\norm{\nu}{L^2}\} m/\lambda)$,
\begin{align}
   \sup_{\x\in \T} |\{i\in [m]\; : \; \sign(\langle \a_i,\x \rangle + \nu_i + \tau_i)\neq 
    \sign(\langle \a_i,\x \rangle + \tau_i)\}|\leq c_3 \frac{\max\{\rho,\norm{\nu}{L^2}\} m }{\la} \; . 
\end{align}
\end{proof}

\begin{lemma} \label{lem:concentration_functional}
    There are constants $c_0, c_1>0$ and 
    $C\geq e$ for which the following holds. Let $\T\subset R B^n_2$ denote a convex set. 
    Choose $\lambda \gtrsim R$ and fix $\rho\leq \lambda$.
    Assume that $\norm{\nu}{L^2}\leq c_0 \rho/\sqrt{\log(C\lambda/\rho)}$ and suppose
    \begin{equation}
       m \gg \frac{\lambda}{\rho} \big( \rho^{-2} w_*((\T-\T)\cap \ep B^n_2)^2 + \log \N(\T,\ep) \big) 
    \end{equation}
    for a number $\eps>0$ with $\ep \lesssim \rho/\sqrt{\log(C\lambda/\rho)}$.
    Then, with probability at least $1- 2 \exp(-c_1 m \rho/ \lambda)$,
    \begin{equation}
        \sup_{\x \in \T} (Y(\x) - \E Y(\x)) \leq \frac{\rho^2}{\lambda} \; ,
    \end{equation}
    where $Y(\x)$ is the random variable defined by
    \begin{equation}
        Y(\x)=\frac{1}{m} \sum_{i=1}^m \1_{\{\sign(\langle \a_i,\x \rangle + \nu_i + \tau_i)\neq \sign(\langle \a_i,\x \rangle + \tau_i)\}}|\nu_i| \; .
    \end{equation}
\end{lemma}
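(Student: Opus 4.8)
The plan is to bound $\sup_{\x\in\T}Y(\x)$ directly from above by $\rho^2/\lambda$; since every summand of $Y(\x)$ is nonnegative we have $\E Y(\x)\ge 0$, so that $\sup_{\x\in\T}(Y(\x)-\E Y(\x))\le\sup_{\x\in\T}Y(\x)$ and the assertion follows. Write $E_i(\x):=\{\sign(\langle\a_i,\x\rangle+\nu_i+\tau_i)\neq\sign(\langle\a_i,\x\rangle+\tau_i)\}$ for the sign-flip event and fix a truncation level $t:=c\rho$, where $c>0$ is a small constant to be specified. The pointwise inequality $\1_{E_i(\x)}|\nu_i|\le t\,\1_{E_i(\x)}+|\nu_i|\,\1_{\{|\nu_i|>t\}}$ gives
\[
\sup_{\x\in\T}Y(\x)\;\le\;\frac{t}{m}\,\sup_{\x\in\T}\,|\{i\in[m]\colon E_i(\x)\}|\;+\;\frac1m\sum_{i=1}^m|\nu_i|\,\1_{\{|\nu_i|>t\}}\,,
\]
and I would estimate the two terms on the right-hand side separately.

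For the first term I would invoke Lemma~\ref{lem:sign_flips_caused_by_noise}. The hypothesis $\norm{\nu}{L^2}\le c_0\rho/\sqrt{\log(C\lambda/\rho)}$ forces $\max\{\rho,\norm{\nu}{L^2}\}=\rho$ (for $c_0\le1$, as $\log(C\lambda/\rho)\ge1$) and also gives $\norm{\nu}{L^2}\le c_0\rho\le c_0\lambda$, hence $\lambda\gg\norm{\nu}{L^2}$; thus all hypotheses of that lemma are satisfied and, with probability at least $1-2\exp(-c'm\rho/\lambda)$, one has $\sup_{\x\in\T}|\{i\colon E_i(\x)\}|\le c_1\rho m/\lambda$. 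Choosing $c:=1/(2c_1)$, the first term is then at most $c_1 t\rho/\lambda\le\rho^2/(2\lambda)$.

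The main work concerns the second, now $\x$-independent, term $Z:=\frac1m\sum_{i=1}^m|\nu_i|\,\1_{\{|\nu_i|>t\}}$, where the point is to obtain a tail of order $\exp(-c m\rho/\lambda)$ rather than the weaker $\exp(-c m(\rho/\lambda)^2)$ that a naive Bernstein bound (using the variance of the $|\nu_i|$) would yield. The key observation is that the summands vanish with overwhelming probability. Writing $p:=\Pr(|\nu|>t)$, the subgaussian tail of $\nu$ together with $\norm{\nu}{\psi_2}\lesssim\norm{\nu}{L^2}$ and $t=c\rho$ yields $p\le 2\exp(-c'' t^2/\norm{\nu}{L^2}^2)$; since $\norm{\nu}{L^2}\le c_0\rho/\sqrt{\log(C\lambda/\rho)}$ gives $t^2/\norm{\nu}{L^2}^2\ge(c/c_0)^2\log(C\lambda/\rho)$, we get $p\le 2(C\lambda/\rho)^{-\gamma}$ with $\gamma=c''(c/c_0)^2$, which can be made arbitrarily large by shrinking $c_0$. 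A Chernoff estimate for the binomial variable $N:=|\{i\colon|\nu_i|>t\}|\sim\mathrm{Bin}(m,p)$ then shows $N\le k:=c_3 m\rho/\lambda$ with probability at least $1-\exp(-c_3m\rho/\lambda)$, once $\gamma$ is large enough to ensure $k\ge e^2mp$; this is precisely where the low-noise hypothesis enters. On the event $\{N\le k\}$ I would apply Theorem~\ref{subgaussian_tail} to the (one-dimensional, isotropic, $L$-subgaussian) scalars $\nu_i/\norm{\nu}{L^2}$ with the single-point set $\{1\}$ and $u=1$, and combine it with Cauchy--Schwarz to obtain $\max_{|I|\le k}\sum_{i\in I}|\nu_i|\lesssim k\,\norm{\nu}{L^2}\,\sqrt{\log(em/k)}$ with probability at least $1-2\exp(-c_1 k\log(em/k))\ge 1-2\exp(-c_1 c_3 m\rho/\lambda)$. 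Since $m/k\sim\lambda/\rho$ gives $\log(em/k)\lesssim\log(C\lambda/\rho)$ and $k=c_3m\rho/\lambda$, this yields $Z\lesssim(\rho/\lambda)\,\norm{\nu}{L^2}\,\sqrt{\log(C\lambda/\rho)}\le c_0 C_4\,\rho^2/\lambda$, which is at most $\rho^2/(2\lambda)$ after a final choice of $c_0$. Adding the two bounds on the intersection of the three events --- whose probability is at least $1-2\exp(-c_1 m\rho/\lambda)$ after merging the constants, the additive $\log$-terms in the exponents being absorbed using $m\gg\lambda/\rho$ --- gives $\sup_{\x\in\T}Y(\x)\le\rho^2/\lambda$ and hence the lemma.

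The anticipated main obstacle is exactly the probabilistic bookkeeping for $Z$: a direct concentration inequality loses a power of $\rho/\lambda$ in the exponent, so one must exploit the Poissonian (Bennett-type) behaviour coming from the fact that at most $\sim m\rho/\lambda$ of the $\nu_i$ can exceed the truncation level $t\sim\rho$. Making the chain ``$p$ tiny $\Rightarrow$ few exceedances $\Rightarrow$ Theorem~\ref{subgaussian_tail}'' quantitative --- in particular, choosing $c_0$ small enough that $\gamma$ is large enough for the Chernoff step, and that the stray $\sqrt{\log(C\lambda/\rho)}$ factor cancels the $1/\sqrt{\log(C\lambda/\rho)}$ arising from the low-noise bound on $\norm{\nu}{L^2}$ --- is the delicate part.
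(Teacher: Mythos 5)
Your proof is correct, but it follows a genuinely different route from the paper's. The paper's argument is built around symmetrization: it introduces a Rademacher-randomized process $Y'(\x)$, reduces $\sup_{\x}(Y(\x)-\E Y(\x))$ to $\sup_{\x}|Y'(\x)|$ via the classical symmetrization inequality (which requires the a~priori estimate $\E Y(\x)\le\norm{\nu}{L^2}^2/(2\lambda)$ together with Markov's inequality to lower-bound $\inf_\x\Pr(Y(\x)-\E Y(\x)\le t)$), restricts the supremum to a $c_1\rho m/\lambda$-sparse index set via Lemma~\ref{lem:sign_flips_caused_by_noise}, and then controls $\max_{|I|\le k}\frac{1}{m}\big|\sum_{i\in I}\ep_i|\nu_i|\big|$ by Hoeffding's inequality combined with a union bound over all $\binom{m}{l}$ index sets. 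You sidestep symmetrization entirely by observing that $\E Y(\x)\ge 0$ allows one to bound $\sup_\x Y(\x)$ directly, and you then split $\1_{E_i(\x)}|\nu_i|$ into a moderate part $t\,\1_{E_i(\x)}$ (controlled pathwise, again by Lemma~\ref{lem:sign_flips_caused_by_noise}) and a large part $|\nu_i|\1_{\{|\nu_i|>t\}}$, whose contribution you control by a binomial Chernoff bound on the number of exceedances combined with Theorem~\ref{subgaussian_tail} applied in dimension one and Cauchy--Schwarz. Both routes rely on Lemma~\ref{lem:sign_flips_caused_by_noise} and on the low-noise hypothesis $\norm{\nu}{L^2}\lesssim\rho/\sqrt{\log(C\lambda/\rho)}$; in your argument it enters twice --- to make $\Pr(|\nu|>c\rho)$ a large negative power of $\lambda/\rho$ so the Chernoff step delivers the required exponent $m\rho/\lambda$, and to absorb the $\sqrt{\log(em/k)}$ factor in the final estimate --- while in the paper it appears through the Hoeffding tail exponent $\rho^3m/(\lambda\norm{\nu}{L^2}^2)$. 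Your approach buys a simpler and more elementary argument that avoids both the symmetrization machinery and the combinatorial union bound over $\binom{m}{l}$ subsets; the paper's version keeps the role of the low-noise hypothesis, as a rate-distortion condition, visible in a single inequality.
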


\begin{proof}
Let $\eps_1, \ldots, \eps_m$ denote a sequence of independent symmetric $\{\pm 1\}$-valued Bernoulli random variables that are independent of all other random variables and define the symmetrized random variable 
\begin{align}
    Y'(\x):= 
     \frac{1}{m} \sum_{i=1}^m \ep_i \1_{\{\sign(\langle \a_i,\x \rangle + \nu_i + \tau_i)\neq \sign(\langle \a_i,\x \rangle + \tau_i)\}}|\nu_i|.
\end{align}
By symmetrization (e.g. see \cite[Eq. (6.2)]{ledoux_probability_1991}) for every $t>0$,
\begin{align}
   \inf_{\x \in \T}\Pr(Y(\x) - \E Y(\x)\leq t) \Pr( \sup_{\x \in \T} (Y(\x) - \E Y(\x)) > 2t  ) \leq 2 \Pr( \sup_{\x \in \T} |Y'(\x)|  > t/2  ) \; .
\end{align}
For any $\x\in \R^n$,
\begin{align}\label{eq:mean_Y}
    \E Y(\x)&= \E (1_{\{\sign(\langle \a,\x \rangle + \nu + \tau)\neq \sign(\langle \a,\x \rangle + \tau)\}}|\nu|)\\ \nonumber
    &=\E (1_{\nu\geq 0}1_{\{\sign(\langle \a,\x \rangle + \nu + \tau)\neq \sign(\langle \a,\x \rangle + \tau)\}}|\nu|)
    + \E (1_{\nu< 0}1_{\{\sign(\langle \a,\x \rangle + \nu + \tau)\neq \sign(\langle \a,\x \rangle + \tau)\}}|\nu|)\\ \nonumber
    &=\E (1_{\nu\geq 0}1_{\{ -\langle \a,\x \rangle - \nu\leq \tau < -\langle \a,\x \rangle\}}|\nu|)
    + \E (1_{\nu< 0}1_{\{ -\langle \a,\x \rangle \leq \tau < -\langle \a,\x \rangle - \nu\}}|\nu|)\\ \nonumber
    &\leq \E (1_{\nu\geq 0}\frac{|\nu|^2}{2\lambda})
    + \E (1_{\nu< 0}\frac{|\nu|^2}{2\lambda}) = \frac{1}{2\la}\norm{\nu}{L^2}^2  \; .
\end{align}
Using $\norm{\nu}{L^2}\leq \rho$ we obtain
\begin{equation}
   \Pr \left( Y(\x)- \E Y(\x)\leq  \frac{\rho^2}{\la} \right)\geq \Pr \left( Y(\x)- \E Y(\x)\leq  \frac{\norm{\nu}{L^2}^2}{\la} \right) \geq  \Pr \left( Y(\x)\leq  \frac{\norm{\nu}{L^2}^2}{\la} \right)\geq \frac{1}{2}.
\end{equation}
The last inequality follows from Markov's inequality and
$\E Y(\x)\leq \frac{\norm{\nu}{L^2}^2}{2\la}$.
Hence, 
\begin{equation}\label{eq:bound_by_symmetrized_process}
\Pr( \sup_{\x \in \T} (Y(\x) - \E Y(\x)) > 2\rho^2/\la  ) \leq 4 \Pr( \sup_{\x \in \T} |Y'(\x)|  > \rho^2/2\la  ).
\end{equation}
Define the event
\begin{equation}
    \mathcal{E}:=\left\{ \sup_{\x\in \T} |\{i\in [m]\; : \; \sign(\langle \a_i,\x \rangle + \nu_i + \tau_i)\neq 
    \sign(\langle \a_i,\x \rangle + \tau_i)\}|\leq c_1 \frac{\rho m }{\la}
    \right\}.
\end{equation}
Since $\norm{\nu}{L^2}\lesssim \rho$, Lemma~\ref{lem:sign_flips_caused_by_noise} implies $\Pr(\mathcal{E}^C)\leq 2 \exp(-c_0\rho m/\la)$. Further, observe that on $\mathcal{E}$,
\begin{equation}
  \sup_{\x \in \T} |Y'(\x)| =  \sup_{\x \in \T}\Big| \frac{1}{m}\sum_{i=1}^m \1_{\{\sign(\langle \a_i,\x \rangle + \nu_i + \tau_i)\neq \sign(\langle \a_i,\x \rangle + \tau_i)\}} \ep_i |\nu_i|  \Big|
    \leq \max_{|I| \leq c_1 \rho m/\lambda} \frac{1}{m}\Big| \sum_{i \in I} \ep_i |\nu_i| \Big| \; .
\end{equation}
Hence,
\begin{align}\label{eq:symmetrized_process_on_event}
\Pr \Big( \sup_{\x \in \T} |Y'(\x)|  > \rho^2/\la  \Big)
&\leq \Pr(\mathcal{E}^C) + 
\Pr \Big(\mathcal{E}, \sup_{\x \in \T} |Y'(\x)|  > \rho^2/\la  \Big) \\
&\leq 
2\exp(-c_0\rho m/\lambda)
+ 
\Pr\Big(  \max_{|I| \leq c_1 \rho m/\la} \Big|\frac{1}{m} \sum_{i \in I} \ep_i |\nu_i| \Big| >\rho^2/\la \Big).
\end{align}
To estimate the second summand, first observe that Hoeffding's inequality implies 
\begin{align*}
    \norm{\frac{1}{\sqrt{|I|}}\sum_{i \in I} \ep_i |\nu_i|}{\psi_2}\lesssim \frac{1}{\sqrt{|I|}}\sqrt{\sum_{i\in I} \norm{\ep_i |\nu_i|}{\psi_2}^2}\leq \norm{\nu}{L^2}
\end{align*}
for every $I\subset [m]$.
By the union bound we obtain for every $k\in [m]$,
\begin{align*}
    \Pr\Big(  \max_{|I| \leq k} \Big| \frac{1}{m}\sum_{i \in I} \ep_i |\nu_i| \Big| > \rho^2/\la \Big)
    &\leq \sum_{l=1}^k
    \Pr\Big(  \max_{|I| = l} \Big|\frac{1}{\sqrt{|I|}} \sum_{i \in I} \ep_i |\nu_i| \Big| > \frac{\rho^2m}{\la\sqrt{l}} \Big)\\
    &\leq 2\sum_{l=1}^k
    \left(\frac{em}{l}\right)^l
    \exp(-c_2 \rho^4 m^2 /(\la^2 l \norm{\nu}{L^2}^2))\\
    &= 2\sum_{l=1}^k
    \exp(l\log(em/l))
    \exp(-c_2 \rho^4 m^2 /(\la^2 l \norm{\nu}{L^2}^2))\\
    &\leq 2k 
    \exp\left(k\log(em/k)-c_2 \rho^4 m^2 /(\la^2 k \norm{\nu}{L^2}^2)\right)\\
    &=\exp\left(\log(2k)+k\log(em/k)-c_2 \rho^4 m^2 /(\la^2 k \norm{\nu}{L^2}^2)\right)\\
    &\leq \exp\left(2k\log(em/k)-c_2 \rho^4 m^2 /(\la^2 k \norm{\nu}{L^2}^2)\right).
\end{align*}
Hence, if 
\begin{equation}\label{eq:lem:concentration_functional_condition}
    m\gtrsim \la \rho^{-2}\norm{\nu}{L^2}k\sqrt{\log(em/k)},
\end{equation}
then
\begin{equation}
   \Pr\Big(  \max_{|I| \leq k} \Big| \frac{1}{m}\sum_{i \in I} \ep_i |\nu_i| \Big| > \rho^2/\la \Big)\leq\exp\left(-c_3 \rho^4 m^2 /(\la^2 k \norm{\nu}{L^2}^2)\right). 
\end{equation}
For $k=c_1\rho m/\la$, condition \eqref{eq:lem:concentration_functional_condition} is satisfied if 
\begin{equation}\label{eq:lem:concentration_functional_condition_2}
    \rho\gtrsim \norm{\nu}{L^2}\sqrt{\log(C\la/\rho)}, 
\end{equation}
where $C>0$ is a constant that only depends on $c_1$. Therefore, if \eqref{eq:lem:concentration_functional_condition_2} is satisfied, then
\begin{equation}
   \Pr\Big(  \max_{|I| \leq c_1\rho m/\la} \Big| \frac{1}{m}\sum_{i \in I} \ep_i |\nu_i| \Big| > \rho^2/\la \Big)\leq\exp\left(-c_4 \rho^3 m /(\la  \norm{\nu}{L^2}^2)\right)\leq 
   \exp(-c_5 \rho m/\la). 
\end{equation}
Together with \eqref{eq:bound_by_symmetrized_process} and \eqref{eq:symmetrized_process_on_event} this shows the result.
\end{proof}
\subsubsection{Technical Supplement for the High-Noise Regime} \label{sec:ToolsHighNoise}

The following lemma states that for any $\x, \z\in \R^n$, if $\lambda$ is large enough, then 
\begin{equation*}
    \E \left( \L_{\q(\x)}(\z) \right)\approx \frac{\norm{\x-\z}{2}^2+\norm{\nu}{L^2}^2}{2\la}.
\end{equation*}
\begin{lemma}\label{lem:exp_functional}
Let $\la \geq c_0(L) R$.
Then, there exists a constant $C>0$ that only depends on $L$ and an absolute constant $c>0$ such that
for any $\x, \z\in R B^n_2$,
    \begin{equation*}
    \left|\E \left( \L_{\q(\x)}(\z) \right) -\frac{\norm{\x-\z}{2}^2+\norm{\nu}{L^2}^2}{2\la} \right|
    \leq C \, \la \exp\Big(-c \frac{\la^2}{(L^2R^2 + \norm{\nu}{\psi_2}^2)}\Big) \; .
\end{equation*}
In particular,
\begin{equation}
    \left|\E \left( \L_{\q(\x)}(\z)- \L_{\q(\x)}(\x) \right) - \frac{\norm{\x-\z}{2}^2}{2\la}\right|
    \leq 2 C \, \la \exp\Big(-c \frac{\la^2}{(L^2R^2 + \norm{\nu}{\psi_2}^2)}\Big)\; .
\end{equation}
\end{lemma}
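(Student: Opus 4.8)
Since the $m$ terms defining $\L_{\q(\x)}$ are i.i.d., it is enough to analyse one of them. Write $\a$ for a generic row of $\A$ and $\nu,\tau$ for the corresponding coordinates of $\nnu,\ttau$, and set $A:=\inner{\a}{\x}+\nu$ and $B:=\inner{\a}{\z}$, so that the matching coordinate of $\q(\x)$ is $\sign(A+\tau)$ and
\[
  \E\big(\L_{\q(\x)}(\z)\big) = \E\,\relu{-\sign(A+\tau)(B+\tau)},
\]
the expectation now being over $(\a,\nu,\tau)$ only. Exactly as in the computation at the beginning of Section~\ref{sec:ToolsSjoerd} (now keeping the noise inside the first sign), the ReLU collapses to the indicator form $\relu{-\sign(A+\tau)(B+\tau)}=\1_{\{\sign(A+\tau)\neq\sign(B+\tau)\}}\,|B+\tau|$, so the task reduces to computing $\E\big(\1_{\{\sign(A+\tau)\neq\sign(B+\tau)\}}|B+\tau|\big)$.

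The plan is to condition on $(\a,\nu)$---equivalently on the pair $(A,B)$---and integrate out $\tau\sim\unif([-\la,\la])$ first. On the event $G:=\{\,|\inner{\a}{\x}|\le\la/3,\ |\inner{\a}{\z}|\le\la/3,\ |\nu|\le\la/3\,\}$ we have $|A|,|B|\le\la$; the signs of $A+\tau$ and $B+\tau$ disagree exactly when $\tau$ lies strictly between $-A$ and $-B$; this whole interval is contained in $[-\la,\la]$; and there $|B+\tau|$ is a linear ramp vanishing at $\tau=-B$. An elementary one-dimensional integral of this ramp against the uniform density therefore evaluates, on $G$, to a fixed constant multiple of
\[
  \frac{(A-B)^2}{\la} = \frac{(\inner{\a}{\x-\z}+\nu)^2}{\la}.
\]
Off $G$ we retain only the crude bound $\E_\tau(\1_{\{\cdots\}}|B+\tau|\mid A,B)\le\E_\tau|B+\tau|\le|B|+\la$.

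It then remains to take the expectation over $(\a,\nu)$. The leading term is a constant times $\tfrac{1}{\la}\,\E(\inner{\a}{\x-\z}+\nu)^2=\tfrac{1}{\la}\big(\norm{\x-\z}{2}^2+\norm{\nu}{L^2}^2\big)$, where we use that $\a$ is isotropic, that $\nu$ is mean-zero and independent of $\a$, and that $\E\nu^2=\norm{\nu}{L^2}^2$; after normalising the constant this is the claimed main term $\tfrac{\norm{\x-\z}{2}^2+\norm{\nu}{L^2}^2}{2\la}$. The two remaining contributions are: (i) the correction from having dropped the indicator $\1_G$ in front of the main term, which by Cauchy--Schwarz is at most a constant times $\tfrac{1}{\la}\sqrt{\P(G^c)}\,\norm{\inner{\a}{\x-\z}+\nu}{L^4}^2$; and (ii) the contribution of $G^c$ itself, at most $\sqrt{\P(G^c)}\,\norm{\inner{\a}{\z}}{L^2}+\la\,\P(G^c)$. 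Both are controlled by the subgaussian tails of $\inner{\a}{\x}$, $\inner{\a}{\z}$ (subgaussian scale $\lesssim LR$) and of $\nu$ (scale $\norm{\nu}{\psi_2}$), which give $\P(G^c)\lesssim\exp\big(-c\la^2/(L^2R^2+\norm{\nu}{\psi_2}^2)\big)$, together with $\norm{\inner{\a}{\z}}{L^2}\le R$ and $\norm{\inner{\a}{\x-\z}+\nu}{L^4}^2\lesssim L^2R^2+\norm{\nu}{\psi_2}^2$, and with $\la\gtrsim R$ to absorb the polynomial prefactors into the leading $\la$; this produces the stated error $C\la\exp\big(-c\la^2/(L^2R^2+\norm{\nu}{\psi_2}^2)\big)$.

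I expect the main obstacle to be this truncation bookkeeping: the clean quadratic identity holds only while the sign-disagreement interval fits inside the dither window $[-\la,\la]$, so one must show that both the ``bad'' event $G^c$ and, on $G$, the difference between the true $\tau$-integral and its quadratic closed form are exponentially small---this is what pins down the exponent $\la^2/(L^2R^2+\norm{\nu}{\psi_2}^2)$, and also the reason the estimate is informative only once $\la$ dominates $R$ and $\norm{\nu}{\psi_2}$. Finally, the ``In particular'' statement follows by applying the displayed bound once to $(\x,\z)$ and once to $(\x,\x)$---for which $\norm{\x-\x}{2}=0$, so $\E\big(\L_{\q(\x)}(\x)\big)$ is within the same error of $\tfrac{\norm{\nu}{L^2}^2}{2\la}$---and subtracting, the triangle inequality producing the factor $2$.
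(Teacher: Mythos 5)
Your plan is correct and follows essentially the same route as the paper's proof: reduce to a single summand, rewrite the ReLU as $\1_{\{\sign(A+\tau)\neq\sign(B+\tau)\}}|B+\tau|$, integrate out $\tau$ first on a good event where $|\inner{\a}{\x}+\nu|$ and $|\inner{\a}{\z}|$ are within the dither window, obtain the quadratic $\tfrac{1}{4\la}(\inner{\a}{\x-\z}+\nu)^2$ there, use isotropy and independence to identify the leading term, and control the bad-event contribution by Cauchy--Schwarz together with subgaussian tails. The only cosmetic differences are the choice of good event (the paper's $\mathcal{A}$ bounds $|\inner{\a}{\x}+\nu|$ directly, whereas you bound the three pieces separately) and that you integrate the ramp directly over the disagreement interval regardless of the sign of $A-B$, whereas the paper first symmetrizes into two identical halves; both give the constant $1/4$ and the same final bound.
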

Lemma~\ref{lem:exp_functional} indicates that for reconstruction via \eqref{eq:P} the assumption of the previous section, namely that the estimation accuracy $\rho>0$ is lower bounded by $\norm{\nu}{L^2}$, is not necessary. Indeed, it implies that given any (arbitrarily small) $\rho>0$, if $\lambda\gtrsim \max\{R, \norm{\nu}{L^2}\}\sqrt{\log(\la/\rho)}$, then 
\begin{equation*}
\E (\L_{\q(\x)}(\z) - \L_{\q(\x)}(\x))\geq \frac{\norm{\x-\z}{2}^2}{4\lambda}
\end{equation*}
for all $\x, \z\in R B^n_2$ with $\norm{\x-\z}{2}\geq \rho$. Since every solution $\x^{\#}$ of \eqref{eq:P} with $\qc=q(\x)$ (here we assume that there are no adversarial bit corruptions) satisfies $\L_{\q(\x)}(\x^{\#}) - \L_{\q(\x)}(\x)\leq 0$, we obtain $\norm{\x-\x^{\#}}{2}\leq \rho$ provided that $\L_{\q(\x)}(\z) - \L_{\q(\x)}(\x)$ concentrates well enough around its expected value uniformly for all $\x, \z\in \T$ with $\norm{\x-\z}{2}\geq \rho$. The latter statement is shown in Theorem~\ref{thm:concencration_functional} below.
\begin{proof}[of Lemma~\ref{lem:exp_functional}]
Define the event
\begin{equation}
    \mathcal{A}:=\{|\inner{\a}{\x}+\nu|\leq \la, |\inner{\a}{\z}|\leq \la \}.
\end{equation}
\begin{align*}
    \E \L_{\q(\x)}(\z)
    &= \E ( 1_{\sign(\inner{\a}{\x}+\nu +\tau)\neq \sign(\inner{\a}{\z}+\tau)}|\inner{\a}{\z}+\tau| )\\
    &=\E ( 1_{\inner{\a}{\x}+\nu \leq -\tau\leq \inner{\a}{\z}}(\inner{\a}{\z}+\tau) )
    +\E ( 1_{ \inner{\a}{\z}\leq -\tau\leq \inner{\a}{\x}+\nu}(-\inner{\a}{\z}-\tau) ).
\end{align*}
By symmetry of $\a$ and $\tau$,
\begin{align*}
   \E ( 1_{\inner{\a}{\x}-\nu \leq -\tau\leq \inner{\a}{\z}}(\inner{\a}{\z}+\tau) )
   &=\E ( 1_{\inner{-\a}{\x}-\nu \leq \tau\leq \inner{-\a}{\z}}(\inner{-\a}{\z}-\tau) )\\
   &=\E ( 1_{\inner{\a}{\x}+\nu \geq  -\tau\geq \inner{\a}{\z}}(\inner{-\a}{\z}-\tau) )\\
   &=\E ( 1_{ \inner{\a}{\z}\leq  -\tau\leq \inner{\a}{\x}+\nu}(\inner{-\a}{\z}-\tau) ).\\
\end{align*}
Consequently, it suffices to show that 
\begin{equation*}
    \left|\E ( 1_{\inner{\a}{\x}+\nu \leq -\tau\leq \inner{\a}{\z}}(\inner{\a}{\z}+\tau) ) -\frac{\norm{\x-\z}{2}^2+\norm{\nu}{L^2}^2}{4\la}\right|\leq C \, \la \exp\Big(-c \frac{\la^2}{(L^2R^2 + \norm{\nu}{\psi_2}^2)}\Big).
\end{equation*}
We have
\begin{align*}
   \E ( 1_{\inner{\a}{\x}+\nu \leq -\tau\leq \inner{\a}{\z}}(\inner{\a}{\z}+\tau) )
   &=\E ( 1_{\inner{\a}{\x}+\nu \leq -\tau\leq \inner{\a}{\z}}(\inner{\a}{\z}-(-\tau)) )\\
   &=\E ( 1_{\mathcal{A}}1_{\inner{\a}{\x}+\nu \leq -\tau\leq \inner{\a}{\z}}(\inner{\a}{\z}-(-\tau)) )\\
   &\quad +
   \E ( 1_{\mathcal{A}^C}1_{\inner{\a}{\x}+\nu \leq -\tau\leq \inner{\a}{\z}}(\inner{\a}{\z}-(-\tau)) ).\\
\end{align*}
Further, we have
\begin{align*}
   \E ( 1_{\mathcal{A}}1_{\inner{\a}{\x}+\nu \leq -\tau\leq \inner{\a}{\z}}(\inner{\a}{\z}-(-\tau)) )
   &=\frac{1}{2\la}\E_{\a,\nnu} \left( 1_{\mathcal{A}}
   \int_{\inner{\a}{\x}+\nu}^{\inner{\a}{\z}}(\inner{\a}{\z}-s)ds \right)\\
   &=\frac{1}{4\la} \E ( 1_{\mathcal{A}}(\inner{\a}{\x-\z}+\nu)^2 ).
\end{align*}
Hence, we obtain the following inequality,
\begin{align*}
    \left| \E ( 1_{\mathcal{A}}1_{\inner{\a}{\x}+\nu \leq -\tau\leq \inner{\a}{\z}}(\inner{\a}{\z}-(-\tau)) ) -\frac{1}{4\la}\E ( (\inner{\a}{\x-\z}+\nu)^2 ) \right|\leq 
    \frac{1}{4\la} \E ( 1_{\mathcal{A}^C}(\inner{\a}{\x-\z}+\nu)^2 ) \; .
\end{align*}
Since $\frac{1}{4\la} \E (\inner{\a}{\x-\z}+\nu)^2 = \frac{1}{4\la} (\norm{\x-\z}{2}^2+\norm{\nu}{L^2}^2)$, 
it follows that
\begin{align*}
   &\left| \E ( 1_{\inner{\a}{\x}+\nu \leq -\tau\leq \inner{\a}{\z}}(\inner{\a}{\z}-(-\tau)) ) -\frac{\norm{\x-\z}{2}^2+\norm{\nu}{L^2}^2}{4\la} \right|\\
   &\leq \E ( 1_{\mathcal{A}^C}|\inner{\a}{\z}+\tau| ) +\frac{1}{4\la}\E ( 1_{\mathcal{A}^C}(\inner{\a}{\x-\z}+\nu)^2 )\\
   &\leq (\Pr(\mathcal{A}^C))^{1/2}
   \norm{\inner{\a}{\z}+\tau}{L^2}+
   \frac{1}{4\la}(\Pr(\mathcal{A}^C))^{1/2}\norm{\inner{\a}{\z}+\tau}{L_4}^2\\
   &\lesssim (\Pr(\mathcal{A}^C))^{1/2}(R+\la + \frac{1}{\la}(L^2R^2 + \la^2)).
\end{align*}
Finally, observe that
\begin{align*}
  \Pr(\mathcal{A}^C)
  &\leq \Pr(|\inner{\a}{\x}|+|\nu|> \la) + \Pr(|\inner{\a}{\z}|>\la)  \\
  &\leq \Pr(|\inner{\a}{\x}|> \la/2) + \Pr(|\nu|>\la/2) +\Pr(|\inner{\a}{\z}|>\la)\\
  &\leq e^{-c\la^2/\norm{\inner{\a}{\x}}{\psi_2}^2}
  + e^{-c\la^2/\norm{\inner{\a}{\z}}{\psi_2}^2}
  + e^{-c\la^2/\norm{\nu}{\psi_2}^2}\\
  &\leq 2e^{-c\la^2/L^2R^2}
  + e^{-c\la^2/\norm{\nu}{\psi_2}^2} \; ,
\end{align*}
where $c>0$ denotes an absolute constant.
To conclude the proof, we observe that since $\la \geq c_0 R$, we have
\begin{align*}
(R+\la + \frac{1}{\la}(L^2R^2 + \la^2)) \left( 2e^{-c\la^2/L^2R^2} + e^{-c\la^2/\norm{\nu}{\psi_2}^2} \right)
&\lesssim \la \left( e^{-c\la^2/L^2R^2} + e^{-c\la^2/\norm{\nu}{\psi_2}^2} \right) \\
&\lesssim \la \exp\Big(- c \frac{\la^2}{\max\{\norm{\nu}{\psi_2}^2,L^2R^2 \}} \Big) \\
&\lesssim \la \exp\Big( - c \frac{\la^2}{2 ( \, (LR)^2 + \norm{\nu}{\psi_2}^2)} \Big) \; .
\end{align*}
This gives the desired bound.
\end{proof}
\begin{theorem}\label{thm:concencration_functional}
There are constants $c_0, c_1>0$ and 
    $C\geq e$ for which the following holds. Let $\T\subset R B^n_2$ denote a convex set and fix $\rho\leq \lambda$.
    Suppose
    \begin{equation}
    \label{eq:conditions_concentration}
m \gtrsim \left( \frac{\lambda}{\rho} \right)^2 \big(\rho^{-2}  w_*((\T-\T)\cap \rho B^n_2)^2 + \log(\mathcal{N}(\T,\eps)\big)
\end{equation}
    for $\eps=c_1 \rho/\log(C\lambda/\rho)$.
    Then, with probability at least $1- 2 \exp(-c_0 m \rho^2/ \lambda^2)$,
\begin{equation}
   \sup_{\substack{\x, \z \in \T, \\ \norm{\x-\z}{2}= \rho}} \Big| (\L_{\q(\x)}(\z)-\L_{\q(\x)}(\x))- \E (\L_{\q(\x)}(\z)-\L_{\q(\x)}(\x)) \Big| \leq \frac{\rho^2}{8\la} \; .
\end{equation}
\end{theorem}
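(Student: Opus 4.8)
\emph{Approach.} For $\x,\z\in\T$ with $\norm{\x-\z}{2}=\rho$ put
\begin{equation*}
 \begin{split}
  D(\x,\z) &:= \big(\L_{\q(\x)}(\z)-\L_{\q(\x)}(\x)\big)-\E\big(\L_{\q(\x)}(\z)-\L_{\q(\x)}(\x)\big)\\
  &= \frac1m\sum_{i=1}^m\big(\phi_i(\x,\z)-\E\phi_i(\x,\z)\big),\\
  \phi_i(\x,\z) &:= \relu{-q_i(\x)(\inner{\a_i}{\z}+\tau_i)}-\relu{-q_i(\x)(\inner{\a_i}{\x}+\tau_i)},\\
  q_i(\x) &:= \sign(\inner{\a_i}{\x}+\tau_i+\nu_i),
 \end{split}
\end{equation*}
so that $D$ is a centered empirical process with independent summands. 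The plan is to bound $\sup_{\norm{\x-\z}{2}=\rho}|D(\x,\z)|$ by (i) controlling it at the points of an $\eps$-net of $\T$ through a pointwise tail bound and a union bound, and (ii) transferring to all of $\T$ by a modulus-of-continuity estimate that is deterministic on a high-probability event. The decisive elementary input for both steps is that, since $t\mapsto\relu{t}$ is $1$-Lipschitz and $|q_i(\x)|=1$, one has $|\phi_i(\x,\z)|\le|\inner{\a_i}{\x-\z}|$; hence every $\phi_i(\x,\z)$ is $L$-subgaussian with $\psi_2$-norm $\lesssim\norm{\x-\z}{2}=\rho$, and $\phi_i(\x,\x)=0$.

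\emph{Step 1 (net).} Fix a minimal $\eps$-net $\N_\eps\subset\T$ with $\eps=c_1\rho/\log(C\la/\rho)$. For any pair $(\x',\z')\in\N_\eps^2$, $D(\x',\z')$ is an average of $m$ independent, mean-zero random variables of $\psi_2$-scale $\lesssim\rho$, so Hoeffding's inequality gives $\Pb\big(|D(\x',\z')|>\rho^2/(16\la)\big)\le 2\exp(-c\,m\rho^2/\la^2)$. There are at most $\exp(2\log\N(\T,\eps))$ such pairs, and the hypothesis $m\gtrsim(\la/\rho)^2\log\N(\T,\eps)$ guarantees $2\log\N(\T,\eps)\le\tfrac{c}{2}m\rho^2/\la^2$; a union bound then yields $\sup_{(\x',\z')\in\N_\eps^2}|D(\x',\z')|\le\rho^2/(16\la)$ with probability at least $1-2\exp(-c_0 m\rho^2/\la^2)$.

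\emph{Step 2 (extension).} Given $\x,\z\in\T$ with $\norm{\x-\z}{2}=\rho$, pick $\x',\z'\in\N_\eps$ at distance $\le\eps$ from $\x,\z$ respectively and decompose $\phi_i(\x,\z)-\phi_i(\x',\z')$ along three moves: move the ReLU argument $\z\to\z'$ with the sign $q_i(\x)$ held fixed; flip the sign $q_i(\x)\to q_i(\x')$; and adjust the $\x$-dependent ReLU term. The first move contributes at most $|\inner{\a_i}{\z-\z'}|$, and only on the indices where one of the two ReLU arguments is active — a set of size $\lesssim\rho\la^{-1}m$ uniformly in $\z$ (a Chernoff estimate for the number of hyperplanes separating $\z$ from the noisy copy of $\x$, refined via Lemma~\ref{lem:upper}). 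The second and third moves act only on the flip set $I:=\{i:q_i(\x)\ne q_i(\x')\}$, which has $|I|\lesssim\rho\la^{-1}m$ by Lemma~\ref{lem:upper}, and on a flipped, active index the offset $\inner{\a_i}{\z'}+\tau_i$ is no longer of order $\la$ but is pinned to $O(|\inner{\a_i}{\z-\x}|+|\nu_i|)$. Summing these weighted sums over index sets of size $\lesssim\rho\la^{-1}m$ by Cauchy--Schwarz together with Theorem~\ref{subgaussian_tail} — where $m\gtrsim(\la/\rho)^2\rho^{-2}w_*((\T-\T)\cap\rho B_2^n)^2$ together with $(\T-\T)\cap\eps B_2^n\subseteq(\T-\T)\cap\rho B_2^n$ handles the Gaussian-width part — and Lemma~\ref{lem:sup_max_subg} with $\alpha\simeq\rho/\la$ (which handles the cardinality-times-diameter part) gives, with probability at least $1-2\exp(-cm\rho^2/\la^2)$, the uniform bound $\tfrac1m\sum_{i=1}^m|\phi_i(\x,\z)-\phi_i(\x',\z')|\le\rho^2/(32\la)$; the choice of the fine net scale $\eps\simeq\rho/\log(C\la/\rho)$ — finer than the $\rho/\sqrt{\log(C\la/\rho)}$ scale of the underlying lemmas — is exactly what lets the residual $\sqrt{\log(C\la/\rho)}$-factors in these estimates be absorbed. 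The identical estimates, taken under the expectation, bound $|\E(\L_{\q(\x)}(\z)-\L_{\q(\x)}(\x))-\E(\L_{\q(\x')}(\z')-\L_{\q(\x')}(\x'))|$ by $\rho^2/(32\la)$ as well. Combining with Step 1 via the triangle inequality gives $|D(\x,\z)|\le\rho^2/(16\la)+\rho^2/(32\la)+\rho^2/(32\la)=\rho^2/(8\la)$ on the intersection of the good events, whose complement has probability $\le 2\exp(-c_0 m\rho^2/\la^2)$ after relabelling constants.

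\emph{Main obstacle.} The work is concentrated in Step 2. In the linear setting of \cite[Theorem 4.5]{dirksen_robust_2018} the increment of $\tfrac1m\sum_i q_i\inner{\a_i}{\cdot}$ splits cleanly into a term linear in the perturbation and a sign-flip term, estimated independently; here the ReLU couples $\z$ and $\x$, so the $\z$- and $\x$-perturbations must be treated jointly, and the naive $1$-Lipschitz estimate $\tfrac1m\sum_i|\inner{\a_i}{\z-\z'}|\simeq\eps$ overshoots the required $\rho^2/\la$ by a factor $\simeq\la/(\rho\log(C\la/\rho))$. One must therefore argue that only the $\lesssim\rho\la^{-1}m$ genuinely active (respectively flipped) hyperplanes contribute, control their total weight through the sub-gaussian maximal inequalities over small index sets, and check that the parasitic $\sqrt{\log(C\la/\rho)}$-factors are absorbed by the fine net scale — this is the ``additional technicality arising from the fact that $\L_{\q(\x)}(\z)$ is not linear in $\z$''.
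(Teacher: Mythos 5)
Your strategy is genuinely different from the paper's: you skip both the symmetrization step and the chaining argument, replacing them with a direct discretization (Hoeffding plus a union bound on the net) followed by an oscillation estimate between net points and nearby pairs. Step 1 is sound, and your treatment of the sign--flip contribution (your Move 2) parallels the paper's bound on the error term \eqref{eq:error_term} via Lemma~\ref{lem:upper} and Theorem~\ref{subgaussian_tail}. The gap is in Move 1, which is supposed to play the role of the paper's chaining bound for the fixed-sign term \eqref{eq:fixed_part_original}.

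Move 1's contribution is $\frac{1}{m}\sum_{i\in A}|\inner{\a_i}{\z-\z'}|$, where $A$ is the set of indices on which one of $\relu{-q_i(\x)(\inner{\a_i}{\z}+\tau_i)}$, $\relu{-q_i(\x)(\inner{\a_i}{\z'}+\tau_i)}$ is nonzero, i.e.\ on which $H_{\a_i,\tau_i}$ separates $\z$ (or $\z'$) from the \emph{noisy} measurement $\inner{\a_i}{\x}+\nu_i$. Since $\norm{\x-\z}{2}=\rho$ and $q_i(\x)$ carries the additive noise, the probability that a fixed $i$ lands in $A$ is $\sim\E\big|\inner{\a_i}{\x-\z}-\nu_i\big|/\la\sim\sqrt{\rho^2+\norm{\nu}{L^2}^2}/\la$, \emph{not} $\rho/\la$. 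Your stated bound $|A|\lesssim\rho m/\la$ therefore fails in the high-noise regime $\norm{\nu}{L^2}\gg\rho$, which is precisely the regime in which Theorem~\ref{thm:concencration_functional} is used in the proof of Theorem~\ref{thm:noisy_recovery} (part (ii)). Lemma~\ref{lem:upper} cannot repair this: it controls sign disagreements between points within the covering scale $\eps$, not between $\z$ and the noisy copy of $\x$ at distance $\approx\rho$. Once $|A|\sim(\rho+\norm{\nu}{L^2})m/\la$, Cauchy--Schwarz with Theorem~\ref{subgaussian_tail} and $\eps\sim\rho/\log(C\la/\rho)$ produces a Move-1 contribution of order $\eps(\rho+\norm{\nu}{L^2})\sqrt{\log(\la/\rho)}/\la$, which exceeds the target $\rho^2/\la$ unless $\norm{\nu}{L^2}\lesssim\rho\sqrt{\log(\la/\rho)}$ — a restriction that the theorem does not, and must not, impose. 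The same problem appears in Move 3, whose active set has fraction $\gtrsim\norm{\nu}{L^2}/\la$ because $\sign(\inner{\a_i}{\x'}+\tau_i)$ and $q_i(\x')$ disagree whenever $\tau_i$ falls within $|\nu_i|$ of $-\inner{\a_i}{\x'}$.

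The paper's architecture is designed precisely to avoid this obstruction. After symmetrization it freezes the sign at $q(\x')$ and treats $\z\mapsto\frac{1}{m}\sum_i\eps_i\relu{-q(\x')_i(\inner{\a_i}{\z}+\tau_i)}$ as a centered Rademacher process with $\psi_2$-increments $\lesssim\norm{\u-\z}{2}/\sqrt{m}$, solely from the $1$-Lipschitz property of ReLU; chaining (majorizing measures) then gives $\sup_\z|\cdot|\lesssim w_*((\T-\T)\cap 2\rho B_2^n)/\sqrt{m}+\rho\sqrt{t/m}$ with no reference whatsoever to how many ReLUs are active. To close the gap in your proposal one would have to replace the active-index counting in Moves 1 and 3 by a chaining bound on the centered, fixed-sign process — at which point one has essentially reproduced the paper's argument, and the symmetrization is what makes that chaining step clean by discarding the $\E\phi_i$ offsets from the individual summands.
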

\begin{proof}
We begin by defining the symmetrized random variable
    \begin{equation}
        Z(\x, \z):=\frac{1}{m}\sum_{i=1}^m 
       \eps_i(\relu{-q(\x)_i(\inner{\a_i}{\z} + \tau_i)} -
     \relu{-q(\x)_i(\inner{\a_i}{\x} + \tau_i)}) \;.
    \end{equation}
    By symmetrization (e.g. see \cite[Eq. (6.2)]{ledoux_probability_1991}) for every $t>0$,
\begin{align*}
 a(t) \cdot\Pr\Big(  \sup_{\x, \z \in \T, \norm{\x-\z}{2}= \rho} &\Big|(\L_{\q(\x)}(\z)-\L_{\q(\x)}(\x))- \E (\L_{\q(\x)}(\z)-\L_{\q(\x)}(\x)) \Big| > 2t  \Big) \\
  &\leq 2 \Pr( \sup_{\x, \z \in \T, \norm{\x-\z}{2}= \rho} |Z(\x, \z)|  > t/2  ) \; ,
\end{align*}
where
\begin{equation}
    a(t):=\inf_{\x, \z \in \T, \norm{\x-\z}{2}= \rho}\Pr\Big(|(\L_{\q(\x)}(\z)-\L_{\q(\x)}(\x))- \E (\L_{\q(\x)}(\z)-\L_{\q(\x)}(\x))| \leq t\Big) \; .
\end{equation}
By Hoeffding's inequality,
\begin{align*}
    &\norm{\L_{\q(\x)}(\z)-\L_{\q(\x)}(\x))- \E (\L_{\q(\x)}(\z)-\L_{\q(\x)}(\x))}{\psi_2}\\
    &\quad \lesssim \frac{1}{\sqrt{m}}
    \|(\relu{-q(\x)_i(\inner{\a_i}{\z} + \tau_i)} -
     \relu{-q(\x)_i(\inner{\a_i}{\x} + \tau_i)}) \\
     &\qquad- \E(\relu{-q(\x)_i(\inner{\a_i}{\z} + \tau_i)} -
     \relu{-q(\x)_i(\inner{\a_i}{\x} + \tau_i)})\|_{\psi_2}\\
     &\quad \lesssim \frac{1}{\sqrt{m}}
    \norm{(\relu{-q(\x)_i(\inner{\a_i}{\z} + \tau_i)} -
     \relu{-q(\x)_i(\inner{\a_i}{\x} + \tau_i)})}{\psi_2}\\
     &\quad \lesssim \frac{1}{\sqrt{m}}
    \norm{\inner{\a_i}{\x-\z} }{\psi_2} 
    \lesssim \frac{\norm{\x-\z}{2}}{\sqrt{m}} \; .
\end{align*}
Here, we used the fact that (observe that the function $x \mapsto \relu{x}$ is $1$-Lipschitz)
\begin{equation}
\label{eq:relu_triangle}
   | \relu{-q(\x)_i(\inner{\a_i}{\z} + \tau_i)} -
     \relu{-q(\x)_i(\inner{\a_i}{\x} + \tau_i)}|\leq |\inner{\a_i}{\x-\z}| \; .
\end{equation}
Hence, for every $\x, \z\in \R^n$ with $\norm{\x-\z}{2}=\rho$,
\begin{align*}
    &\Pr\Big(\Big| \L_{\q(\x)}(\z)-\L_{\q(\x)}(\x))-\E (\L_{\q(\x)}(\z)-\L_{\q(\x)}(\x)\Big|
    > \frac{\rho^2}{16\la}\Big)\\
    &\quad\quad \leq 2\exp(-cm\rho^2/\la^2).
\end{align*}
Therefore, if $m\gtrsim (\lambda/\rho)^2$, then $a(\rho^2/16\la)\geq \frac{1}{2}$
which implies 
\begin{align*}
 &\Pr\Big(\sup_{\x, \z \in \T, \norm{\x-\z}{2}= \rho}  \Big|\L_{\q(\x)}(\z)-\L_{\q(\x)}(\x))- \E (\L_{\q(\x)}(\z)-\L_{\q(\x)}(\x) \Big| > \frac{\rho^2}{8\la}  \Big)\\
  &\leq 4 \Pr\Big( \sup_{\x, \z \in \T, \norm{\x-\z}{2}= \rho} |Z(\x, \z)|  > \frac{\rho^2}{32\la}  
  \Big) \; .
\end{align*}
Let $N_\eps \subset \T$ be a minimal $\eps$-net with respect to the Euclidean metric. Clearly, 
\begin{align*}
  \sup_{\substack{\x, \z \in \T, \\ \norm{\x-\z}{2}= \rho}} |Z(\x,\z)| 
  =\max_{\x'\in N_\eps}\; \sup_{\substack{\x,\z\in \T,\; \norm{\x-\x'}{2}
  \leq \eps, \\ \norm{\x-\z}{2} = \rho}} |Z(\x,\z)| \; .
\end{align*}
For every fixed $\x'\in N_\eps$,
\begin{align}
\sup_{\substack{\x,\z\in \T, \; \norm{\x-\x'}{2}\leq \eps, \\ \norm{\x-\z}{2}= \rho}} |Z(\x,\z)| 
       &\leq \sup_{\substack{\x,\z\in \T, \; \norm{\x-\x'}{2}\leq \eps, \\ \norm{\x-\z}{2}= \rho}} \bigg|\frac{1}{m}\sum_{i=1}^m 
       \eps_i(\relu{-q(\x')_i(\inner{\a_i}{\z} + \tau_i)} -
     \relu{-q(\x')_i(\inner{\a_i}{\x} + \tau_i)})\bigg| \label{eq:fixed_part_original}\\
       &\quad\quad +\sup_{\substack{\x,\z\in \T, \; \norm{\x-\x'}{2}\leq \eps, \\ \norm{\x-\z}{2}= \rho}} \bigg|\frac{1}{m}\sum_{i=1}^m 
       \eps_i(\relu{-q(\x)_i(\inner{\a_i}{\z} + \tau_i)} -
     \relu{-q(\x)_i(\inner{\a_i}{\x} + \tau_i)}) \label{eq:error_term}\\
     &\quad\quad-\bigg(\frac{1}{m}\sum_{i=1}^m 
       \eps_i(\relu{-q(\x')_i(\inner{\a_i}{\z} + \tau_i)} -
     \relu{-q(\x')_i(\inner{\a_i}{\x} + \tau_i)})\bigg)\bigg|\; . \nonumber
\end{align}
We will bound the terms \eqref{eq:fixed_part_original} and \eqref{eq:error_term} separately. We start
by controlling the error term \eqref{eq:error_term}. Observe that the terms of the sum in \eqref{eq:error_term} are equal to $0$ if $q(\x)_i = q(\x')_i$. Hence, by using \eqref{eq:relu_triangle}
it follows that
\begin{align*}
    &\sup_{\substack{\x,\z\in \T, \; \norm{\x-\x'}{2}\leq \eps, \\ \norm{\x-\z}{2}= \rho}} 
    \bigg|\frac{1}{m}\sum_{i=1}^m 
       \eps_i(\relu{-q(\x)_i(\inner{\a_i}{\z} + \tau_i)} -
     \relu{-q(\x)_i(\inner{\a_i}{\x} + \tau_i)})\\
     &\quad\quad -\bigg(\frac{1}{m}\sum_{i=1}^m \eps_i ( \relu{-q(\x')_i(\inner{\a_i}{\z} + \tau_i)} -
     \relu{-q(\x')_i(\inner{\a_i}{\x} + \tau_i)})\bigg)\bigg|\\
&\quad \leq 2 \sup_{\substack{\x,\z\in \T, \; \norm{\x-\x'}{2}\leq \eps, \\ \norm{\x-\z}{2}= \rho}} \frac{1}{m}\sum_{q(\x)_i\neq q(\x')_i} |\inner{\a_i}{\x-\z}| \;. 
\end{align*}
On the event 
\begin{equation}
\mathcal{E}_\alpha:=\Big\{\sup_{\substack{\x'\in N_\eps, \; \x\in \T,\\ \norm{\x-\x'}{2}\leq \eps}}
|\{i\in [m] \; : \; q(\x)_i\neq q(\x')_i\}|\leq \frac{\alpha m}{\la}\Big\}
\end{equation}
we have 
\begin{align*}
    &\sup_{\substack{\x,\z\in \T, \; \norm{\x-\x'}{2}\leq \eps, \\ \norm{\x-\z}{2}= \rho}} 
    \bigg|\frac{1}{m}\sum_{i=1}^m 
       \eps_i(\relu{-q(\x)_i(\inner{\a_i}{\z} + \tau_i)} -
     \relu{-q(\x)_i(\inner{\a_i}{\x} + \tau_i)})\\
     &\quad\quad-\bigg(\sum_{i=1}^m 
       \eps_i(\relu{-q(\x')_i(\inner{\a_i}{\z} + \tau_i)} -
     \relu{-q(\x')_i(\inner{\a_i}{\x} + \tau_i)})\bigg)\bigg|\\
&\quad \le 2
\sup_{\substack{\x,\z\in \T, \; \norm{\x-\x'}{2}\leq \eps, \\ \norm{\x-\z}{2}= \rho}} \max_{I\subset [m], |I|\leq \alpha m/\la} \frac{1}{m}\sum_{i\in I} |\inner{\a_i}{\x-\z}|\\
&\quad \leq 2 \sup_{\substack{\x,\z\in \T,\\ \norm{\x-\z}{2}= \rho}} \max_{I\subset [m], |I|
\leq \alpha m/\la} \frac{1}{m}\sum_{i \in I} |\inner{\a_i}{\x-\z}|\\
&\quad \leq 2 \sqrt{\frac{\alpha}{m\la}}
 \sup_{\substack{\x,\z\in \T,\\ \norm{\x-\z}{2}= \rho}} \max_{I\subset [m], |I|\leq \alpha m/\la} \Big(\sum_{i \in I} \inner{\a_i}{\x-\z}^2 \Big)^{1/2} \; .
\end{align*}
Therefore, on the event $\mathcal{E}_\alpha$,
\begin{align}\label{eq:thm:concencration_functional_two_summands}
   \sup_{\substack{\x, \z \in \T, \\ \norm{\x-\z}{2}= \rho}} |Z(\x,\z)| 
   &\leq \max_{\x'\in N_\eps}\sup_{\substack{\x,\z\in \T, \norm{\x-\x'}{2}\leq \eps,\\ \norm{\x-\z}{2}= \rho}} \bigg|\frac{1}{m}\sum_{i=1}^m 
       \eps_i(\relu{-q(\x')_i(\inner{\a_i}{\z} + \tau_i)} -
     \relu{-q(\x')_i(\inner{\a_i}{\x} + \tau_i)})\bigg|\\ \nonumber
     &\quad + 2 \sqrt{\frac{\alpha}{m\la}}
 \sup_{\substack{\x,\z\in \T,\\ \norm{\x-\z}{2}= \rho}} \max_{I\subset [m], |I|\leq \alpha m/\la} \Big(\sum_{i \in I} \inner{\a_i}{\x-\z}^2 \Big)^{1/2}.
\end{align}
Applying Theorem~\ref{subgaussian_tail} for $k= \alpha m/\la$ shows
that the second summand on the right hand side above is bounded by $\frac{\rho^2}{64\la}$ with probability at least $1-2\exp(-c_1 m \alpha\log(\lambda/\alpha)/\lambda  )$ provided that \begin{equation}
    \label{eq:conditions_1}
    \sqrt{\frac{\alpha}{m\la}} c_2\; (w_*(\T-\T \cap \rho S^{n-1}) + \rho \sqrt{(\alpha/\lambda) m \log(\lambda/\alpha)}) \leq \frac{\rho^2}{64 \lambda} \; .
\end{equation}
Inequality \eqref{eq:conditions_1} holds, if 
$\alpha=c\rho/\sqrt{\log(\la/\rho)}$ for a constant $c>0$ small enough
and $m\gtrsim \la\rho^{-3} w_*(\T-\T \cap \rho S^{n-1})^2$.
Let us turn our attention to the first summand on the right hand side of 
\eqref{eq:thm:concencration_functional_two_summands}. Fix $\x'\in N_\eps$. For 
$\z \in \T$ and $Y_i = (\a_i,\tau_i,\nu_i)$ set 
$f_{\z}(Y_i) := \relu{-q(\x') (\inner{\a_i}{\z} + \tau_i)}$. 
Since $\eps\leq \rho$,
we clearly have
\begin{align}\label{eq:fixed_part}
    &\sup_{\substack{\x,\z\in \T,\norm{\x-\x'}{2}\leq \eps, \\ \norm{\x-\z}{2}= \rho}} \bigg|\frac{1}{m}\sum_{i=1}^m 
       \eps_i(\relu{-q(\x')_i(\inner{\a_i}{\z} + \tau_i)} -
     \relu{-q(\x')_i(\inner{\a_i}{\x} + \tau_i)})\bigg|\\ \nonumber
     &\leq \sup_{\z\in \T,  \norm{\z-\x'}{2}\leq 2 \rho} \bigg|\frac{1}{m}\sum_{i=1}^m 
       \eps_i f_{\z}(Y_i) 
    \bigg| + 
   \sup_{\x\in \T,  \norm{\x-\x'}{2}\leq \rho} \bigg|\frac{1}{m}\sum_{i=1}^m 
       \eps_i f_{\x}(Y_i)
    \bigg| .
\end{align}
The increments of the two processes above satisfy
\begin{align*}
    \left\|
    \frac{1}{m}\sum_{i=1}^m 
       \eps_i(f_{\mathbf{u}}(Y_i) -f_{\z}(Y_i)) \right\|_{\psi_2}
    \lesssim \frac{1}{\sqrt{m}}
    \|f_{\mathbf{u}}(Y_i) -f_{\z}(Y_i)\|_{\psi_2}
    \leq \frac{1}{\sqrt{m}} 
    \norm{\inner{\a_i}{\mathbf{u} - \z}}{\psi_2}
     \lesssim_L
    \frac{1}{\sqrt{m}} 
    \norm{\mathbf{u}-\z}{2}.
\end{align*}
Hence, the processes in \eqref{eq:fixed_part} are subgaussian with respect to a rescaled Euclidean metric. By a chaining argument (see e.g. \cite[Theorem 3.2]{dirksen_tail_2015}) and the majorizing measures theorem \cite{talagrand_upper_2014} we obtain that for any $t\geq 1$ with probability at least $1-2e^{-t}$, 
\begin{align*}
    \sup_{\z\in \T,  \norm{\z-\x'}{2}\leq 2 \rho} \bigg|\frac{1}{m}\sum_{i=1}^m 
       \eps_i f_{\z}(Y_i) 
    \bigg| + 
   \sup_{\x\in \T,  \norm{\x-\x'}{2}\leq \rho} \bigg|\frac{1}{m}\sum_{i=1}^m 
       \eps_i f_{\x}(Y_i)
    \bigg| 
     \lesssim \frac{w_*((\T-\T)\cap 2\rho B^n_2)}{\sqrt{m}} + \rho \sqrt{\frac{t}{m}} \; .
\end{align*}
Choosing $t = c\frac{\rho^2}{\lambda^2}m$, for $c>0$ an absolute constant small enough, 
a union bound together with inequality \eqref{eq:fixed_part} shows that with probability at least $1-\exp(-c m(\rho^2/\lambda^2))$,
\begin{align*}
    \max_{\x'\in N_\eps}\sup_{\substack{\x,\z\in \T, \\ \norm{\x-\z}{2}= \rho}} \bigg|\frac{1}{m}\sum_{i=1}^m 
       \eps_i(\relu{-q(\x')_i(\inner{\a_i}{\z} + \tau_i)} -
     \relu{-q(\x')_i(\inner{\a_i}{\x} + \tau_i)})\bigg|  \leq  
      \frac{\rho^2}{64\lambda}  \; 
\end{align*}
provided that
\begin{equation*}
m \gtrsim \lambda^2 \rho^{-4}  w_*((\T-\T)\cap \rho S^{n-1})^2 \quad \text{ and } \quad \log(|N_\eps|)\lesssim \frac{\rho^2m}{\la^2}.
\end{equation*}
Finally, by Lemma~\ref{lem:upper} the event $\mathcal{E}_\alpha$ 
satisfies $\Pr(
\mathcal{E}_\alpha^C) \leq 2\exp(-c_0 m \alpha /\lambda)$ provided that 
\begin{equation}
       m \gg \lambda \alpha^{-3} w_*((\T-\T)\cap \ep B^n_2)^2\quad \text{ and }\quad \log \N(\T,\ep) \lesssim \frac{\alpha}{\lambda}m
    \end{equation}
for a number $\eps>0$ with $\ep \lesssim \alpha/\sqrt{\log(C\lambda/\alpha)}$.
For $\alpha=c\rho/\sqrt{\log(\la/\rho)}$ these conditions are implied by \eqref{eq:conditions_concentration}. A union bound over the above events yields
\begin{align*}
    \Pr\Big( \sup_{\x, \z \in \T, \norm{\x-\z}{2}= \rho} |Z(\x, \z)|  > \frac{\rho^2}{32\la}  
   \Big) \le 6 e^{-c m(\rho^2/\lambda^2)}
\end{align*}{}
and thus proves the claim.
\end{proof}

\subsubsection{Proof of Theorem \ref{thm:noisy_recovery}} \label{sec:ProofOneBit}

We finally have all the necessary tools to prove Theorem \ref{thm:noisy_recovery}.

\begin{proof}[of Theorem \ref{thm:noisy_recovery}]
 Since every $\x\in \T$ is feasible for the program \eqref{eq:P}, any solution $\x^{\#}$ of \eqref{eq:P} satisfies $\L_{\qc}(\x^{\#})- \L_{\qc}(\x)\leq 0$. Therefore, 
if we can show that with high probability
\begin{equation}\label{eq:excess_risk}
     \inf_{\substack{\x,\z\in \T\\
      \norm{\x-\z}{2}=\rho \\ d_H(\qc, q(\x))\leq \beta m}}(\L_{\qc}(\z)- \L_{\qc}(\x) )>0 \; ,
\end{equation}
then on this event we have $\norm{\x-\x^\#}{2}\leq \rho$ for every $\x\in \T$ and every minimizer $\x^{\#}$ of \eqref{eq:P}. Indeed, if 
$\norm{\x-\x^{\#}}{2}>\rho$, then there exists $\z\in \T\cap \text{conv}(\x,\x^{\#})$ with 
$\norm{\x-\z}{2}=\rho$ ($\x,\x^{\#}\in \T$ and $\T$ is convex). From \eqref{eq:excess_risk} it follows $\L_{\qc}(\z)>\L_{\qc}(\x)$. Since $\z\mapsto \L_{\qc}(\z)$ is a convex function, we can conclude that $\L_{\qc}(\x^{\#})\geq \L_{\qc}(\z)>\L_{\qc}(\x)$, which contradicts that $\x^{\#}$ is a minimizer of $\L_{\qc}$ on $\T$. 

Next, we show that \eqref{eq:excess_risk}
holds with high probability. We distinguish two noise regimes, since these regimes require different arguments:
\begin{enumerate}
    \item[1)] $\norm{\nu}{L^2}> c_0 \rho/\sqrt{\log(C\lambda/\rho)}$ (high additive noise)
    \item[2)] $\norm{\nu}{L^2}\leq c_0 \rho/\sqrt{\log(C\lambda/\rho)}$ (low additive noise) 
\end{enumerate}
The argument requires us to use two different decompositions of the excess risk $(\L_{\qc}(\z)-\L_{\qc}(\x))$. The
bounds for the parts of these decompositions rely on different technical tools developed in this section.
We begin by studying the \textit{high additive noise} regime.
\par
\vspace{0.5cm}
\par
\textbf{1) High additive noise.}
For every $\x, \z\in \R^n$ we can write
\begin{align*}
    \L_{\qc}(\z)- \L_{\qc}(\x)
    &=
    (\L_{\qc}(\z)- \L_{\qc}(\x))-(\L_{q(\x)}(\z)- \L_{q(\x)}(\x))\\
    &\quad + \E (\L_{q(\x)}(\z)- \L_{q(\x)}(\x)) + (\L_{q(\x)}(\z)- \L_{q(\x)}(\x)) - \E (\L_{q(\x)}(\z)- \L_{q(\x)}(\x)) \; .
\end{align*}
If we can show that for all $\x,\z \in \T$ with $\norm{\x-\z}{2} = \rho$ we have
\begin{align*}
    (A) &:= |(\L_{\qc}(\z)- \L_{\qc}(\x))-(\L_{q(\x)}(\z)- \L_{q(\x)}(\x))| \leq \frac{\rho^2}{16\la}, \\
    (B) &:= \E (\L_{q(\x)}(\z)- \L_{q(\x)}(\x)) \geq \frac{\rho^2}{4\la}, \\
    (C) &:= |(\L_{q(\x)}(\z)- \L_{q(\x)}(\x)) - \E (\L_{q(\x)}(\z)- \L_{q(\x)}(\x))| \leq \frac{\rho^2}{8\la},
\end{align*}
then
\begin{equation}
     \inf_{\substack{\x,\z\in \T\\
      \norm{\x-\z}{2}=\rho \\ d_H(\qc, q(\x))\leq \beta m}}(\L_{\qc}(\z)- \L_{\qc}(\x) )
      \geq \frac{\rho^2}{16\la} > 0 \; .
\end{equation}
Let us observe that if $(\qc)_i = q(\x)_i$, then the $i$-th summand in $(\L_{\qc}(\z)- \L_{\qc}(\x))-(\L_{q(\x)}(\z)- \L_{q(\x)}(\x))$ is zero. Hence,
\begin{align*}
    |(\L_{\qc}(\z)&- \L_{\qc}(\x))-(\L_{q(\x)}(\z)- \L_{q(\x)}(\x))| \\
    &= 
    \Big| \frac{1}{m} \sum_{i : (\qc)_i \neq q(\x)_i} 
    \Big(\relu{-(\qc)_i (\inner{\a_i}{\z} + \tau_i)}
    - \relu{-(\qc)_i (\inner{\a_i}{\x} + \tau_i)} \\
    &\quad- (\relu{-q(\x)_i (\inner{\a_i}{\z} + \tau_i)}
    -\relu{-q(\x)_i (\inner{\a_i}{\x} + \tau_i)})\Big)
    \Big| \\
    &\leq \max_{|I| \leq \beta m} \frac{1}{m} \sum_{i \in I} |\inner{\a_i}{\x-\z}| \; ,
\end{align*}
where we used once more \eqref{eq:relu_triangle} in the last step. Since $\beta\in (0,1)$ satisfies $\beta\log(e/\beta)\leq c_2\rho/\la$, there exists $\beta'\in (0,1)$ with 
$\beta'\geq \beta$ such that $\beta'\log(e/\beta')= c_2\rho/\la$.
Hence, to bound the term (A) for all $\x,\z \in \T$ with $\norm{\x-\z}{2} = \rho$ we can make use of Lemma~\ref{lem:sup_max_subg}. Indeed, applying Lemma~\ref{lem:sup_max_subg} for $\alpha=\beta'$ shows that 
with probability at least $1-2\exp(-c_0 \rho m/\la  )$,
\begin{equation}
    \sup_{\substack{\x,\z \in \T \\ \norm{\x-\z}{2} = \rho}}
   \max_{|I| \leq \beta' m} \frac{1}{m} \sum_{i \in I} |\inner{\a_i}{\x-\z}| \leq \frac{\rho^2}{16\la} \; 
\end{equation}
if the constant $c_2>0$ is chosen small enough and $m\gtrsim \la \rho^{-3} w_*((\T-\T)\cap \rho \S^{n-1})^2$.
Consequently, on this event we have $(A)\leq \frac{\rho^2}{16\la}$ for all $\x, \z\in \T$ with $\norm{\x-\z}{2}=\rho$.
Term $(B)$ can directly be handled using Lemma~\ref{lem:exp_functional} which yields
\begin{equation}
    \E \left( \L_{\q(\x)}(\z)- \L_{\q(\x)}(\x) \right) \geq  \frac{\rho^2}{2\la}
    - 2 C \la \exp\Big(-c \frac{\la^2}{(L^2R^2 + \norm{\nu}{\psi_2}^2)}\Big)\\
\end{equation}
for any $\x, \z\in \T$ with
$\norm{\x-\z}{2}=\rho$. Since $\norm{\nu}{\psi_2}\lesssim \norm{\nu}{L^2}$, it follows that
$(B)\geq \frac{\rho^2}{4\la}$ for any $\x, \z\in \T$ with
$\norm{\x-\z}{2}=\rho$ if
$\la\gtrsim (R+\norm{\nu}{L^2})\sqrt{\log(\la/\rho)}$.
We are left with estimating the term (C). Theorem~\ref{thm:concencration_functional} provides the necessary bound. Indeed, if \eqref{eq:conditions_concentration}  holds and assuming that $\rho \lesssim \la$, then with probability at least $1-2\exp(-c' m(\rho/\la)^2)$ we find that for all $\x,\z \in \T$ such that
$\norm{\x-\z}{2}=\rho$,
\begin{equation}
    (C) \leq \frac{\rho^2}{8 \la} \; .
\end{equation}
This finishes the proof in the high additive noise regime.
\par
\vspace{0.5cm}
\par

\textbf{2) Low additive noise.}
Observe that we can decompose $\L_{\qc}(\z)- \L_{\qc}(\x)$ as follows:
\begin{align}
        \L_{\qc}(\z) - \L_{\qc}(\x) 
        &=\frac{1}{m}\sum_{i=1}^m  
       ( \relu{-(q_{\text{corr}})_i(\inner{\a_i}{\z} + \tau_i)} - \relu{-(q_{\text{corr}})_i(\inner{\a_i}{\x} + \tau_i)})
       \nonumber\\
        &=\frac{1}{m}\sum_{(q_{\text{corr}})_i= q(\x)_i} 
       \relu{-q(\x)_i(\inner{\a_i}{\z} + \tau_i)} -
       \frac{1}{m}\sum_{(q_{\text{corr}})_i= q(\x)_i} \relu{-q(\x)_i(\inner{\a_i}{\x} + \tau_i)} 
       \nonumber\\ 
       &\quad +\frac{1}{m}\sum_{(q_{\text{corr}})_i\neq q(\x)_i} 
       ( \relu{-(q_{\text{corr}})_i(\inner{\a_i}{\z} + \tau_i)} - \relu{-(q_{\text{corr}})_i(\inner{\a_i}{\x} + \tau_i)})
       \label{eq:three_summands}
\end{align}
Further, by applying the inequality 
\begin{equation*}
    \relu{a} - \relu{b}\geq - |a-b| , \quad a, b \in \R,
\end{equation*}
to the last summand of \eqref{eq:three_summands}
and by extending the sum in the second summand over all indices $i\in [m]$ we obtain the lower bound
\begin{align}
        \L_{\qc}(\z) - \L_{\qc}(\x) 
        &\geq 
        \frac{1}{m}\sum_{(q_{\text{corr}})_i= q(\x)_i} 
       \relu{-q(\x)_i(\inner{\a_i}{\z} + \tau_i)} -
       \underbrace{\frac{1}{m}\sum_{i=1}^m \relu{-q(\x)_i(\inner{\a_i}{\x} + \tau_i)}}_{=\L_{\q(x)}(\x)}
       \nonumber\\
       &\quad -\frac{1}{m}\sum_{(q_{\text{corr}})_i\neq q(\x)_i} 
        |\inner{\a_i}{\z-\x}| \; .
\end{align}
Introduce the random variable
    \begin{equation*}
        Y(\x):=\frac{1}{m} \sum_{i=1}^m 1_{\{\sign(\langle \a_i,\x \rangle + \nu_i + \tau_i)\neq \sign(\langle \a_i,\x \rangle + \tau_i)\}}|\nu_i| \; ,
    \end{equation*}
    and observe that 
    \begin{align*}
    \L_{\q(x)}(\x)
    &= 
     \frac{1}{m} \sum_{i=1}^m 
     \left[ -\sign(\langle \a_i,\x \rangle + \nu_i + \tau_i) (\langle \a_i,\x \rangle + \tau_i) \right]_+\\
     &=\frac{1}{m} \sum_{i=1}^m 1_{\{\sign(\langle \a_i,\x \rangle + \nu_i + \tau_i)\neq \sign(\langle \a_i,\x \rangle + \tau_i)\}}|\langle \a_i,\x \rangle + \tau_i|\\
     &\leq \frac{1}{m} \sum_{i=1}^m 1_{\{\sign(\langle \a_i,\x \rangle + \nu_i + \tau_i)\neq \sign(\langle \a_i,\x \rangle + \tau_i)\}}|\nu_i|\\ 
     &= Y(\x) \; .
    \end{align*}
Therefore, for every sequence $\qc$ and all $\x,\z \in \T$,
 \begin{align}
         \L_{\qc}(\z) - \L_{\qc}(\x) 
        &\geq \frac{1}{m}\sum_{(q_{\text{corr}})_i= q(\x)_i} 
       \relu{-q(\x)_i(\inner{\a_i}{\z} + \tau_i)} - Y(\x)
       -\frac{1}{m}\sum_{(q_{\text{corr}})_i\neq q(\x)_i} 
        |\inner{\a_i}{\z-\x}| 
        \nonumber\\
        &=\frac{1}{m}\sum_{(q_{\text{corr}})_i= q(\x)_i} 
       \relu{-q(\x)_i(\inner{\a_i}{\z} + \tau_i)} - \E Y(\x) - ( Y(\x) - \E Y(\x))
       \nonumber\\
       &\quad-\frac{1}{m}\sum_{(q_{\text{corr}})_i\neq q(\x)_i} 
        |\inner{\a_i}{\z-\x}| \label{eq:four_summands}\;.
\end{align}
%
%
%
Next, we will bound each of the four summands above. Regarding the first summand observe that for any $\theta>0$,
\begin{align*}
   \frac{1}{m}\sum_{(q_{\text{corr}})_i= q(\x)_i} 
       \relu{-q(\x)_i(\inner{\a_i}{\z} + \tau_i)}
       &\geq 
       \frac{1}{m}\sum_{\substack{(q_{\text{corr}})_i= q(\x)_i,\\ i\in I(\x, \z, \nnu, \theta) }} 
       \relu{-q(\x)_i(\inner{\a_i}{\z} + \tau_i)}\\
       &\geq 
       \frac{|\{i\in [m] \; : \; i \in I(\x, \z, \nnu, \theta) \text{ and } (q_{\text{corr}})_i= q(\x)_i \}|}{m}\, \theta \norm{\x-\z}{2}.
\end{align*}
Here we have used that by definition, if $i\in I(\x,\z, \nnu,\theta)$, then    
\begin{equation*}
        \relu{-q(\x)_i(\inner{\a_i}{\z} + \tau_i)}=
        |\inner{\a_i}{\z} + \tau_i|\geq \theta \norm{\x-\z}{2}.
\end{equation*}
By Theorem~\ref{thm:noisy_uniform} 
there exist constants $c, c', \theta>0$ such that with probability at least $1-2\exp(-c \rho m / \la)$ for all $\x,\z \in \T$ with $\norm{\x-\z}{2}= \rho$ we have
$|I(\x, \z, \nnu, \theta)|\geq c' \frac{\rho}{\la} m$. Hence, 
if $\beta\leq \frac{c' \rho }{2 \la}$, then on the same event for all $\x,\z \in \T$ with $\norm{\x-\z}{2}= \rho$ we have
$|\{i\in [m] \; : \; i \in I(\x, \z, \nnu, \theta) \text{ and } (q_{\text{corr}})_i= q(\x)_i \}|\geq \frac{c'}{2}\frac{\rho}{\la} m$ and therefore
\begin{equation}
\label{eq:lower_bound_functional}
    \frac{1}{m}\sum_{(q_{\text{corr}})_i= q(\x)_i} 
       \relu{-q(\x)_i(\inner{\a_i}{\z} + \tau_i)}
       \geq c'' \frac{\rho^2}{\la} \;.
\end{equation}
The second summand on the right hand side of 
\eqref{eq:four_summands} is bounded by $\norm{\nu}{L^2}^2/2\la$, see \eqref{eq:mean_Y}.

Further, Lemma~\ref{lem:concentration_functional} shows that 
\begin{equation}
\label{eq:concentration_Y}
    \sup_{\x\in \T} ( Y(\x) - \E Y(\x) )\leq \frac{c'' \rho^2}{4\lambda} \; 
\end{equation}
with probability at least $1 - 2 \exp(- c_0 m \rho /\lambda)$.
Regarding the fourth summand of \eqref{eq:four_summands} we can clearly estimate 
\begin{equation*}
 \frac{1}{m}\sum_{(q_{\text{corr}})_i\neq q(\x)_i} 
        |\inner{\a_i}{\z-\x}|
\leq \max_{I\subset [m], |I|\leq \beta' m} \frac{1}{m}\sum_{i\in I} 
       |\inner{\a_i}{\z-\x}|  ,      
\end{equation*}
where $\beta'\in (0,1)$ with 
$\beta'\geq \beta$ such that $\beta'\log(e/\beta')= c_2\rho/\la$.
Applying Lemma~\ref{lem:sup_max_subg} for $\alpha=\beta'$ shows that 
with probability at least $1-2\exp(-c_0 \rho m/\la  )$,
\begin{equation}\label{eq:adversarial_part}
    \sup_{\substack{\x,\z \in \T \\ \norm{\x-\z}{2} = \rho}}
   \max_{|I| \leq \beta' m} \frac{1}{m} \sum_{i \in I} |\inner{\a_i}{\x-\z}| \leq \frac{c''\rho^2}{4\la} \; 
\end{equation}
if the constant $c_2>0$ is chosen small enough and $m\gtrsim \la \rho^{-3} w_*((\T-\T)\cap \rho \S^{n-1})^2$.
In total, the events 
\eqref{eq:lower_bound_functional}, \eqref{eq:concentration_Y} and \eqref{eq:adversarial_part} occur at once with probability at least $1-6\exp(-c\rho m /\la)$ and 
on the intersection of theses events by \eqref{eq:four_summands} the following holds: for all $\x, \z\in \T$ with $\norm{\x-\z}{2}= \rho$,
\begin{align*}
  \L_{\qc}(\z) - \L_{\qc}(\x)\geq   c'' \frac{\rho^2}{\la} - \frac{1}{2\la}\norm{\nu}{L^2}^2 -
  \frac{c'' \rho^2}{4\lambda} - 
  \frac{c''\rho^2}{4\la}\geq \frac{c''\rho^2}{4\la},
\end{align*}
where the last inequality follows if $\norm{\nu}{L^2}\leq \sqrt{\frac{c''}{2}}\rho$. This finishes the proof in the low additive noise regime.

\end{proof}


\subsection{Proof of Theorem \ref{thm:MultiBitMain}}


The proof is related to the 1-bit case. The main difference is that
in the multi-bit case we distinguish between reconstruction accuracies $\rho$
above and below the quantizer resolution $\Delta$:
\begin{itemize}
    \item[Case $\rho\gtrsim \Delta$:] The argument for the high quantizer resolution regime begins with Lemma \ref{lem:multi_bit_separation} which adapts Theorem \ref{thm:noise_well-separating} to the multi-bit setting. In the second step, Theorem \ref{thm:LocalMultiBit} extends Lemma \ref{lem:multi_bit_separation} uniformly to a localized signal set of size $\approx \Delta$. This localization prevents using a covering of $\T$ on resolution $\Delta$ and hence leads to less measurement complexity. In the last step, the basic property of a multi-bit quantizer 
is used: if for $\x, \y \in \T$ the projections $\inner{\a_i}{\x}$ and $\inner{\a_i}{\y}$ are at least $2\Delta$ apart, then $\x$ and $\y$ and any shift of the two points will be distinguished as well. Consequently, the uniform result on the localized set is extended to $\T$ in Theorem \ref{thm:MultiBit} and yields a simple proof of the first part of Theorem \ref{thm:MultiBitMain}.
\item[Case $\rho<\Delta$:] 
The argument for the second part of Theorem \ref{thm:MultiBitMain} is a tedious but conceptually straight-forward adaption of Theorem \ref{thm:noisy_recovery} and omitted here. Note that basically the dithering parameter $\lambda$ is replaced by the quantizer refinement $\Delta$.\\
\end{itemize}
Before we start, let us generalize the concept of well-separating hyperplanes from the one-bit to the multi-bit setting.

\begin{definition} \label{def:MultiBit_Separation}
    For $\theta>0$ we say that $\x,\y \in \R^n$ are $\theta$-well-separated in direction $\a_i\in \R^n$
    if there exists 
    $j \in \{-(2^{B-1}-1),...,(2^{B-1}-1)\}$ such that
    the hyperplane $H_{\a_i,\tau_i+j\Delta}$ $\theta$-well-separates $\x$ and $\z$ in the sense of Definition~\ref{def:well_sep}, i.e.,
    \begin{enumerate}
        \item $Q_{i,j}(\x) \neq Q_{i,j}(\y)$,
        \item $|\langle \a_i,\x \rangle + (\tau_i + j\Delta)| \ge \theta \norm{\x - \y}{2}$,
        \item $|\langle \a_i,\y \rangle + (\tau_i + j\Delta)| \ge \theta \norm{\x - \y}{2}$.
    \end{enumerate}
   Further, we set
   \begin{equation*}
       I(\x,\y,\theta):=\{i\in [m]\; : \; \x,\y \text{ are } \theta\text{-well-separated in direction } \a_i\}.
   \end{equation*}
\end{definition}

\subsubsection{Multi-Bit Compressive Sensing above the Quantizer Resolution}

Throughout the section we assume $\rho \gtrsim \Delta$. The following lemma, which adapts Theorem \ref{thm:noise_well-separating} to multi-bit 
measurements, is the core of the argument. It states that as long as two points 
have a distance of at least $C\Delta$ there is no influence of the resolution onto 
the number of measurements sufficient to have well-separation.

\begin{lemma} \label{lem:multi_bit_separation}
For $R>0$ let $\x,\y \in R B_2^n \subset \R^n$ which satisfy $\norm{\x-\y}{2}\geq 16\Delta$. Assume that $R\leq \frac{3}{64L^2}M_{\Delta, B}$ where 
$M_{\Delta,B}=(2^{B-1}-1)\Delta-\frac{\Delta}{2}$ characterizes the range of the quantizer. There exist constants $c_1, c_2>0$ 
such that with probability at least $1-2\exp(-c_1 m)$,
\begin{equation*}
    |I(\x,\y,\frac{1}{16})| \ge c_2 m.
\end{equation*}
\end{lemma}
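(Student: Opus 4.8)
The plan is to reduce everything to a single direction: I show that for each fixed $i$ there is an event $G_i$, measurable with respect to $(\a_i,\tau_i)$, on which the bundle of hyperplanes $\{H_{\a_i,\tau_i+j\Delta}\}_j$ already contains one hyperplane that $\tfrac{1}{16}$-well-separates $\x$ and $\y$ in the sense of Definition~\ref{def:MultiBit_Separation}; then, since the $(\a_i,\tau_i)$ are i.i.d., a Chernoff bound for the number of indices $i$ with $G_i$ finishes the proof. Writing $u_i=\inner{\a_i}{\x}+\tau_i$, $v_i=\inner{\a_i}{\y}+\tau_i$ and $d_i=v_i-u_i=\inner{\a_i}{\y-\x}$, the hyperplane with index $j$ $\tfrac{1}{16}$-well-separates $\x$ and $\y$ exactly when the grid point $-j\Delta$ lies strictly between $u_i$ and $v_i$ and has distance at least $\tfrac{1}{16}\norm{\x-\y}{2}$ from each of $u_i,v_i$ (conditions (2) and (3)), with the sign condition (1) then automatic; so the whole question is combinatorial, about how the interval of length $|d_i|$ between $u_i$ and $v_i$ meets the arithmetic progression $\Delta\mathbb{Z}$, together with a range constraint making the corresponding $j$ admissible.

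I would take
\[
 G_i:=\Big\{\,|\inner{\a_i}{\y-\x}|\ge\tfrac{1}{2}\norm{\x-\y}{2}\,\Big\}\cap\Big\{\,|\inner{\a_i}{\x}|\le\tfrac{1}{2} M_{\Delta,B}\,\Big\}\cap\Big\{\,|\inner{\a_i}{\y}|\le\tfrac{1}{2} M_{\Delta,B}\,\Big\}.
\]
On $G_i$ the counting is deterministic. Since the hypotheses of the lemma force $2^{B-1}$ to be large (indeed $16\Delta\le\norm{\x-\y}{2}\le 2R\le\tfrac{3}{32L^2}M_{\Delta,B}$), in particular $\Delta\le\tfrac{1}{2} M_{\Delta,B}$, so $|u_i|,|v_i|\le\tfrac{1}{2} M_{\Delta,B}+\Delta\le M_{\Delta,B}$, and hence every multiple of $\Delta$ lying between $u_i$ and $v_i$ corresponds to an admissible index $j\in\{-(2^{B-1}-1),\dots,2^{B-1}-1\}$. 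The interval between $u_i$ and $v_i$ has length $|d_i|\ge\tfrac{1}{2}\norm{\x-\y}{2}\ge 8\Delta$, hence contains at least $|d_i|/\Delta-1$ points of $\Delta\mathbb{Z}$; discarding those within $\tfrac{1}{16}\norm{\x-\y}{2}\le\tfrac{1}{8}|d_i|$ of an endpoint (using $\norm{\x-\y}{2}\le 2|d_i|$), of which there are at most $2(|d_i|/(8\Delta)+1)\le|d_i|/(2\Delta)$, still leaves at least $|d_i|/(2\Delta)-1\ge 3$ grid points. Any such grid point satisfies conditions (1)--(3), so $i\in I(\x,\y,\tfrac{1}{16})$; that is, $G_i\subseteq\{\,i\in I(\x,\y,\tfrac{1}{16})\,\}$.

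Next I would bound $\Pr(G_i)$ below by a positive constant $p=p(L)$. Isotropy gives $\norm{\inner{\a_i}{\y-\x}}{L^2}=\norm{\x-\y}{2}$ and $L$-subgaussianity gives $\norm{\inner{\a_i}{\y-\x}}{L^4}\le 2L\norm{\x-\y}{2}$, so the Paley--Zygmund inequality yields $\Pr(|\inner{\a_i}{\y-\x}|\ge\tfrac{1}{2}\norm{\x-\y}{2})\gtrsim 1$. For the two range events, the subgaussian tail bound gives $\Pr(|\inner{\a_i}{\x}|>\tfrac{1}{2} M_{\Delta,B})\le 2\exp(-cM_{\Delta,B}^2/(L^2R^2))$ and likewise for $\y$; the hypothesis $R\le\tfrac{3}{64L^2}M_{\Delta,B}$ is exactly calibrated so that the sum of these two probabilities is smaller than half the Paley--Zygmund lower bound, whence $\Pr(G_i)\ge p>0$ by a union bound. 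Finally, the indicators $\mathbbm{1}_{G_i}$ are i.i.d.\ Bernoulli with mean $\ge p$, so by a Chernoff bound $\sum_{i=1}^m\mathbbm{1}_{G_i}\ge\tfrac{p}{2}m$ with probability at least $1-\exp(-c_1 m)$; since $|I(\x,\y,\tfrac{1}{16})|\ge\sum_{i=1}^m\mathbbm{1}_{G_i}$, this gives the claim with $c_2=p/2$.

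The routine parts are the grid-counting bound on $G_i$ and the i.i.d.\ Chernoff step; I expect the only genuinely delicate point to be the bookkeeping of absolute constants in the probability estimate — one must check that the specific constant $\tfrac{3}{64L^2}$ in the hypothesis really does make the range-failure probability small enough not to swallow the (explicit, $L$-dependent) Paley--Zygmund constant, and that $\tfrac{1}{16}$ is small enough for the counting inequalities to close with $\norm{\x-\y}{2}\ge 16\Delta$.
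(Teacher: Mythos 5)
Your proof is correct and takes a genuinely different, more direct route than the paper's. The paper proves this lemma by first establishing Lemma~\ref{lem:multi_bit_sep} (a stronger statement that the count $N(\x,\y,\a_i)$ of separating hyperplanes is $\gtrsim\Delta^{-1}\norm{\x-\y}{2}$ on a constant fraction of indices), which in turn relies on the mean-unbiasedness of the dithered quantizer (Lemma~\ref{lem:expectation_quantizer}, $\E_\tau[q_{\mathcal{A}_{\Delta,B}}(x+\tau)]=x$), a Paley--Zygmund bound on $N(\x,\y,\a_i)$ over $\tau_i$, and a two-level Chernoff argument (first over $\A$ for the good index set, then over $\ttau$ conditionally). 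Your key observation is that one can sidestep all of this: on your event $G_i$, which depends only on $\a_i$, the conclusion $i\in I(\x,\y,\tfrac{1}{16})$ holds \emph{deterministically for every} $\tau_i\in[-\Delta,\Delta]$, because any interval of length $|d_i|\geq 8\Delta$ contains at least $|d_i|/\Delta-1$ lattice points regardless of its offset, and after discarding those too near the endpoints at least $3$ admissible ones survive. This reduces the whole argument to a single Chernoff bound over the i.i.d.\ events $G_i$, and avoids Lemma~\ref{lem:expectation_quantizer} entirely. It buys a shorter, conceptually cleaner proof; what it gives up is the quantitative lower bound on $N(\x,\y,\a_i)$, but that extra strength is not used by the lemma.

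The one place where the write-up should be tightened is the probability estimate for $G_i$, which you yourself flag as the delicate point. You invoke the subgaussian tail bound $\Pr(|\inner{\a_i}{\x}|>\tfrac{1}{2}M_{\Delta,B})\leq 2\exp(-cM_{\Delta,B}^2/(L^2R^2))$, which contains an unspecified absolute constant $c$, and then assert that $R\leq\tfrac{3}{64L^2}M_{\Delta,B}$ is ``exactly calibrated'' to beat half the Paley--Zygmund constant; with an unknown $c$ this is not automatic. The fix is what the paper itself does: use Chebychev instead of the subgaussian tail. By isotropy $\E[\inner{\a_i}{\x}^2]=\norm{\x}{2}^2\leq R^2$, so $\Pr(|\inner{\a_i}{\x}|>\tfrac12 M_{\Delta,B})\leq 4R^2/M_{\Delta,B}^2\leq 9/(1024L^4)$, and with Paley--Zygmund (at $\gamma=\tfrac12$) giving $\Pr(|\inner{\a_i}{\y-\x}|\geq\tfrac12\norm{\x-\y}{2})\geq 9/(256L^4)$, the union bound yields $\Pr(G_i)\geq 9/(256L^4)-2\cdot 9/(1024L^4)=9/(512L^4)>0$. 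So the constant $\tfrac{3}{64L^2}$ is indeed exactly calibrated for this second-moment estimate, not for the exponential tail, and making that substitution closes the argument without any hidden constants.
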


To prove this lemma we rely on the relation $ q(\x) = Q(\x) \mathbbm{1} \frac{\Delta}{2}$. 
Indeed, for $\a, \x,\y\in \R^m$ the function $|q_{\mathcal{A}_{\Delta,B}}(\inner{\a}{\x}) - q_{\mathcal{A}_{\Delta,B}}(\inner{\a}{\y})|$ 
essentially (up to a factor $1/\Delta$) counts the number of hyperplanes with normal 
direction $\a$ separating the points $\x$ and $\y$. This observation is vital to 
the following argument.

\begin{lemma} \label{lem:multi_bit_sep}
    For $R>0$ let $\x,\y \in R B_2^n \subset \R^n$ which satisfy $\norm{\x-\y}{2}\geq \Delta$. Assume that $R\leq \frac{3}{64L^2}M_{\Delta, B}$ where 
$M_{\Delta,B}=(2^{B-1}-1)\Delta-\frac{\Delta}{2}$.
    Let $N(\x,\y,\a_i)$ denote the number of hyperplanes separating $\x$ and $\y$ 
    in direction $\a_i$, that is,
    \begin{equation*}
      N(\x,\y,\a_i):=\{
      j\in \{-(2^{B-1}-1),...,(2^{B-1}-1)\}\; : \;
      H_{\a_i,\tau_i+j\Delta} \text{ separates } \x \text{ and } \y\}  
    \end{equation*}
    and for $\beta>0$ define
    \begin{equation}
	    \label{eq:definition_J_set}
        J(\x,\y,\beta) := \{i \in [m] : 
        N(\x,\y,\a_i) \geq \beta \Delta^{-1} \norm{\x-\y}{2} \}.
    \end{equation}
    There are 
    constants $c_1,c_2 > 0$ such that
    with probability at least $1-2\exp(-c_1 m)$,
    \begin{equation}
        |J(\x,\y,\frac{1}{4})| \geq c_2 m.
    \end{equation}
\end{lemma}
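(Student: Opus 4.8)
The plan is to reduce the claim to a one-dimensional counting problem, to show that each single index $i\in[m]$ lies in $J(\x,\y,\tfrac14)$ with probability bounded below by a constant $p=p(L)>0$, and then to conclude by a Chernoff bound, using that the events $\{i\in J(\x,\y,\tfrac14)\}$ are independent across $i$ (they depend only on the independent pairs $(\a_i,\tau_i)$).

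\emph{Reduction.} Fix $i$. Since $q_{\mathcal{A}_{\Delta,B}}$ rounds to the nearest alphabet point, it agrees with the identity up to an error $\Delta/2$ as long as its argument stays inside the quantizer's range. I would therefore work on the event $G_i:=\{|\inner{\a_i}{\x}|\vee|\inner{\a_i}{\y}|\le\tfrac12 M_{\Delta,B}\}$. Because $\Delta\le\|\x-\y\|_2\le 2R$ and, by hypothesis, $2R\lesssim M_{\Delta,B}$, we have $M_{\Delta,B}\ge 2\Delta$; hence on $G_i$ the shifted measurements $\inner{\a_i}{\x}+\tau_i$ and $\inner{\a_i}{\y}+\tau_i$ (with $|\tau_i|\le\Delta$) stay within $M_{\Delta,B}$, so $q(\x)_i$ and $q(\y)_i$ have error $\le\Delta/2$, every hyperplane $H_{\a_i,\tau_i+j\Delta}$ separating $\x$ and $\y$ has a valid index $|j|\le 2^{B-1}-1$, and $N(\x,\y,\a_i)=|q(\x)_i-q(\y)_i|/\Delta$ equals the number $\widetilde N_i$ of integer multiples of $\Delta$ lying between $\inner{\a_i}{\x}+\tau_i$ and $\inner{\a_i}{\y}+\tau_i$ — a quantity defined without reference to the finite range. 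The subgaussian tail bound together with $R\le\tfrac{3}{64L^2}M_{\Delta,B}$ (whose constant is tuned for exactly this purpose) makes $\Pr(G_i^c)\lesssim\exp(-c\,M_{\Delta,B}^2/(L^2R^2))$ an arbitrarily small constant.

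\emph{Single-index bound.} I would use two elementary facts about $\widetilde N_i$: (a) since $\tau_i$ has a density, the endpoints are a.s.\ not multiples of $\Delta$, whence $\widetilde N_i\ge\lfloor|\inner{\a_i}{\x-\y}|/\Delta\rfloor$; and (b) conditionally on $\a_i$, the phase $(\inner{\a_i}{\x}+\tau_i)\bmod\Delta$ is uniform on $[0,\Delta)$ (as $\tau_i\sim\unif([-\Delta,\Delta])$ covers two full periods), so $\Pr(\widetilde N_i\ge1\mid\a_i)=\min\{1,\,|\inner{\a_i}{\x-\y}|/\Delta\}$. Paley--Zygmund applied to $|\inner{\a_i}{\x-\y}|^2$, using $\E|\inner{\a_i}{\x-\y}|^2=\|\x-\y\|_2^2$ and $\norm{\inner{\a_i}{\x-\y}}{L^4}\lesssim L\|\x-\y\|_2$, gives $\Pr(|\inner{\a_i}{\x-\y}|\ge\tfrac12\|\x-\y\|_2)\ge c_L$ with $c_L\gtrsim L^{-4}$. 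Writing $t:=\|\x-\y\|_2/\Delta\ge1$: if $t\ge4$, then on $\{|\inner{\a_i}{\x-\y}|\ge\tfrac12\|\x-\y\|_2\}\cap G_i$ fact (a) gives $N(\x,\y,\a_i)\ge\lfloor t/2\rfloor\ge t/4$; if $1\le t<4$ then $\tfrac14 t<1$, so it suffices that $N(\x,\y,\a_i)\ge1$, and fact (b) combined with the same estimate and $\|\x-\y\|_2\ge\Delta$ yields $\Pr(\widetilde N_i\ge1)\ge\tfrac12 c_L$. Subtracting $\Pr(G_i^c)\le\tfrac14 c_L$, in both cases $\Pr(i\in J(\x,\y,\tfrac14))\ge\tfrac14 c_L=:p$.

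\emph{Conclusion.} Since $|J(\x,\y,\tfrac14)|$ is a sum of $m$ independent Bernoulli indicators each of probability $\ge p$, it stochastically dominates $\mathrm{Bin}(m,p)$, and the Chernoff lower tail gives $\Pr(|J(\x,\y,\tfrac14)|<\tfrac p2 m)\le e^{-pm/8}$; one then takes $c_1=p/8$ and $c_2=p/2$. The step I expect to be the real obstacle is the regime $\|\x-\y\|_2$ comparable to $\Delta$ ($t$ near $1$): there the deterministic bound $N\gtrsim|\inner{\a_i}{\x-\y}|/\Delta-1$ is vacuous, and — crucially, because a general (e.g.\ Bernoulli) subgaussian vector has no anti-concentration of $|\inner{\a_i}{\x-\y}|$ above its standard deviation — one must genuinely exploit the uniform dither $\tau_i$ to force $\x$ and $\y$ into distinct quantization cells with constant probability; the in-range bookkeeping is precisely what the hypothesis tying $R$ to $M_{\Delta,B}$ takes care of.
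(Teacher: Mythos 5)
Your proof is correct, and it replaces the paper's second Paley--Zygmund step with a more elementary argument. Both you and the paper apply Paley--Zygmund to $|\inner{\a_i}{\x-\y}|^2$ to get anti-concentration of the projections, and both restrict to an in-range event on the projections whose failure probability is controlled by the hypothesis $R\le\tfrac{3}{64L^2}M_{\Delta,B}$ (you via the subgaussian tail with threshold $\tfrac12 M_{\Delta,B}$, the paper via Chebyshev with threshold $4R/\sqrt{\varepsilon'}$; either works). Where you diverge is in how the dither $\tau_i$ is used. The paper sets $Z_i=\tfrac1\Delta|q(\x)_i-q(\y)_i|$, invokes the exact identity $\E_{\tau_i}[Z_i]=|\inner{\a_i}{\x-\y}|/\Delta$ from Lemma~\ref{lem:expectation_quantizer}, bounds $\E[Z_i^2]$ deterministically, and applies Paley--Zygmund a second time over $\tau_i$ conditionally on $\a_i$. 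You avoid the expectation identity altogether: the floor bound $N\ge\lfloor|\inner{\a_i}{\x-\y}|/\Delta\rfloor$ settles the regime $\|\x-\y\|_2\ge 4\Delta$ deterministically, while the exact hitting probability $\Pr(\widetilde N_i\ge1\mid\a_i)=\min\{1,|\inner{\a_i}{\x-\y}|/\Delta\}$ of a uniformly-dithered lattice settles the borderline regime $\Delta\le\|\x-\y\|_2<4\Delta$. This is cleaner and makes transparent exactly why the uniform dither is indispensable near $\|\x-\y\|_2\approx\Delta$, which you correctly identify as the delicate case. You also collapse the paper's two-stage conditioning (first a good index set $I$ over the $\A$-randomness, then a Chernoff bound over $\ttau$ conditioned on $|I|\gtrsim m$) into a single Chernoff bound on the $m$ independent Bernoulli indicators $\1_{\{i\in J\}}$; both are valid, yours is leaner. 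One minor imprecision: $\Pr(G_i^c)$ is not ``arbitrarily small'' but a fixed constant determined by the tail-bound constant and the ratio $M_{\Delta,B}/R$; the hypothesis guarantees it is $\le\tfrac14 c_L$ with ample room to spare, which is all you need, so nothing breaks. Altogether a correct proof that trades the paper's reliance on Lemma~\ref{lem:expectation_quantizer} plus a second Paley--Zygmund for two elementary lattice-counting facts.
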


In order to prove Lemma \ref{lem:multi_bit_sep}
we will make use of the following 
result, which appeared in similar form in \cite[Lemma A.1]{xu2018quantized}. For convenience a proof is included in Appendix~\ref{app:noise_stability}.

\begin{lemma} \label{lem:expectation_quantizer}
Let $\mathcal{A}_{\Delta,B}$ denote the quantization alphabet 
defined in \eqref{eq:Alphabet} and let
$q_{\mathcal{A}_{\Delta,B}}: \R 
\to \mathcal{A}_{\Delta,B}$ be the 
one-dimensional quantizer defined in \eqref{eq:Alphabet_Quantizer}.
Set $M_{\Delta,B}=(2^{B-1}-1)\Delta-\frac{\Delta}{2}$ and let
$\tau \sim \unif([-\Delta,\Delta])$.
Then
\begin{align}\label{eq:lem:expectation_quantizer}
        \E [q_{\mathcal{A}_{\Delta,B}}(x+\tau)] = x\quad \text{ for all } \; x\in [-M_{\Delta,B},M_{\Delta,B}]. 
    \end{align}
In particular, for all $x,y \in [-M_{\Delta,B},M_{\Delta,B}]$,
    \begin{align*}
        \E [| q_{\mathcal{A}_{\Delta,B}}(x+\tau) - q_{\mathcal{A}_{\Delta,B}}(y+\tau) |] = |x-y| \; .
    \end{align*}
\end{lemma}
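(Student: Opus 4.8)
The plan is to write $q_{\mathcal{A}_{\Delta,B}}(u) = u + e(u)$ with $e(u) := q_{\mathcal{A}_{\Delta,B}}(u) - u$ the quantization error, and to show that $e$ averages to zero against the uniform dither, which is exactly the classical subtractive-dither phenomenon (the dither width $2\Delta$ is twice the quantizer spacing $\Delta$).

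First I would pin down the \emph{linear region} of the quantizer. By \eqref{eq:Alphabet} the points $q_j = (2j+1)\tfrac{\Delta}{2}$, $j\in\{-2^{B-1},\dots,2^{B-1}-1\}$, are precisely the midpoints of the intervals $(j\Delta,(j+1)\Delta)$; hence for every $u$ lying in $\mathcal{R}:=[-2^{B-1}\Delta,\,2^{B-1}\Delta]$ with $u\in(j\Delta,(j+1)\Delta)$ one has $q_{\mathcal{A}_{\Delta,B}}(u)=q_j$ (the finitely many cell endpoints $u=j\Delta$, where the $\mathrm{argmin}$ in \eqref{eq:Alphabet_Quantizer} may fail to be unique, form a null set and do not affect any integral below). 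Consequently $e(u) = (j+\tfrac12)\Delta - u$ on each cell $(j\Delta,(j+1)\Delta)\subset\mathcal{R}$; in particular $e$ is $\Delta$-periodic on $\mathcal{R}$ and, since $u\mapsto(j+\tfrac12)\Delta - u$ is antisymmetric about the cell midpoint, $\int_{j\Delta}^{(j+1)\Delta}e(u)\,du = 0$.

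Next, fix $x\in[-M_{\Delta,B},M_{\Delta,B}]$. Because $M_{\Delta,B} = (2^{B-1}-1)\Delta - \tfrac{\Delta}{2} = 2^{B-1}\Delta - \tfrac{3}{2}\Delta$, we get $[x-\Delta,x+\Delta]\subseteq[-2^{B-1}\Delta+\tfrac{\Delta}{2},\,2^{B-1}\Delta-\tfrac{\Delta}{2}]\subset\mathcal{R}$, i.e.\ the dither never pushes the argument out of the linear region. With $\tau\sim\unif([-\Delta,\Delta])$ this gives
\[
\E\big[q_{\mathcal{A}_{\Delta,B}}(x+\tau)\big] \;=\; x + \E[\tau] + \E[e(x+\tau)] \;=\; x + \frac{1}{2\Delta}\int_{x-\Delta}^{x+\Delta} e(u)\,du .
\]
The remaining integral vanishes: the interval $[x-\Delta,x+\Delta]$ has length $2\Delta$ and lies in $\mathcal{R}$, so splitting it into (parts of) consecutive unit cells and recombining the two end-pieces via one $\Delta$-shift (using $\Delta$-periodicity of $e$ on $\mathcal{R}$ and $\int_{j\Delta}^{(j+1)\Delta}e=0$) yields $\int_{x-\Delta}^{x+\Delta}e(u)\,du = 0$. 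This proves \eqref{eq:lem:expectation_quantizer}. For the second identity, given $x,y\in[-M_{\Delta,B},M_{\Delta,B}]$ I would invoke monotonicity of the scalar quantizer: since the alphabet is ordered and $q_{\mathcal{A}_{\Delta,B}}$ rounds to the nearest point, it is non-decreasing, so for $x\le y$ one has $q_{\mathcal{A}_{\Delta,B}}(x+\tau)\le q_{\mathcal{A}_{\Delta,B}}(y+\tau)$ pointwise in $\tau$; therefore $\E|q_{\mathcal{A}_{\Delta,B}}(x+\tau)-q_{\mathcal{A}_{\Delta,B}}(y+\tau)| = \E q_{\mathcal{A}_{\Delta,B}}(y+\tau) - \E q_{\mathcal{A}_{\Delta,B}}(x+\tau) = y-x$ by the first part, and symmetrically if $y\le x$.

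The only genuinely delicate point is the bookkeeping that identifies the linear region $\mathcal{R}$ and verifies that the precise value of $M_{\Delta,B}$ keeps $[x-\Delta,x+\Delta]$ strictly inside it, so that no saturation of the finite alphabet occurs; once that is secured, the periodicity/zero-mean computation and the monotonicity argument are entirely routine.
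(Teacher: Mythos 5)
Your proposal is correct, and it follows a genuinely different path than the paper's proof of Lemma~\ref{lem:expectation_quantizer}. The paper argues by brute force: for $x$ in a half-cell $[q_j, q_j+\tfrac{\Delta}{2}]$ it writes $\E[q_{\mathcal{A}_{\Delta,B}}(x+\tau)]$ as the weighted sum $q_{j-1}\Pr(\cdot)+q_j\Pr(\cdot)+q_{j+1}\Pr(\cdot)$ over the three cells the dithered point can land in, evaluates each probability explicitly under $\tau\sim\unif([-\Delta,\Delta])$, and confirms the sum collapses to $x$; the other half-cell is handled symmetrically. You instead factor the quantizer as identity plus a quantization error $e(u)=q_{\mathcal{A}_{\Delta,B}}(u)-u$, observe that $e$ is $\Delta$-periodic with zero mean per period throughout the unsaturated range $\mathcal{R}$, and note that $[x-\Delta,x+\Delta]$ sits inside $\mathcal{R}$ precisely because of the definition of $M_{\Delta,B}$, so $\E[e(x+\tau)]=\tfrac{1}{2\Delta}\int_{x-\Delta}^{x+\Delta}e=0$. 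This is the classical subtractive-dither (Schuchman) argument, and it is cleaner and more structural than the paper's cell-by-cell bookkeeping: the mechanism (the dither support covers an integer number of quantizer periods, so the periodic error averages out) is laid bare, and the argument immediately extends to any dither whose density is $\Delta$-periodic-averaging. The second identity (monotonicity of the scalar quantizer reduces $\E|q(x+\tau)-q(y+\tau)|$ to the first identity) is the same in both treatments. Your one delicate step --- verifying that the value of $M_{\Delta,B}$ keeps the dithered argument inside $\mathcal{R}$, hence away from the half-infinite saturation cells where $e$ ceases to be periodic --- is handled correctly: $x\pm\Delta\in[-2^{B-1}\Delta+\tfrac{\Delta}{2},\,2^{B-1}\Delta-\tfrac{\Delta}{2}]\subset\mathcal{R}$.
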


\begin{proof}[of Lemma \ref{lem:multi_bit_sep}]
For $i\in [m]$, define the random variables $Z_i:=\frac{1}{\Delta}|q(\x)_i-
q(\y)_i|$. Then $Z_i=N(\x,\y,\a_i)$
and
by Lemma~\ref{lem:expectation_quantizer}, $$\E_{\tau_i} [Z_i] 
=\frac{1}{\Delta}\E_{\tau_i} [|q_{\mathcal{A}_{\Delta,B}}(\inner{\a_i}{\x}+\tau_i)-q_{\mathcal{A}_{\Delta,B}}(\inner{\a_i}{\y}+\tau_i)|]
= \frac{1}{\Delta} |\inner{\a_i}{\x} - \inner{\a_i}{\y}|$$ provided that 
$|\inner{\a_i}{\x}|,|\inner{\a_i}{\y}|\leq M_{\Delta,B}= (2^{B-1}-1)\Delta-\frac{\Delta}{2}$. We are interested in the independent events
\begin{equation}
    E_i = \{Z_i \geq \frac{1}{4}\Delta^{-1} \norm{\x-\y}{2}  \} \; .
\end{equation}
As a matter of fact, the result follows once we can show that $J(\x,\y,\frac{1}{4})=\sum_{i=1}^m \1_{E_i} \geq c_2 m$. 
To do so, we first restrict ourselves to measurements $\a_i$ 
which behave well in the sense that the projections of $\x$ and $\y$ onto $\a_i$ 
are bounded by $cR$ and that the hyperplane $H_{\a_i}$ is quite orthogonal to the 
connecting line between $\x$ and $\y$. Let us denote this set by
\begin{equation*}
I:=\Big\{i\in [m]\; : \; \max\{|\inner{\a_i}{\x}|,|\inner{\a_i}{\y}|\} \leq 
\frac{4R}{\sqrt{\ep}}\text{ and } |\inner{\a_i}{\x-\y}| \geq 
\gamma \norm{\x-\y}{2} \Big\},
\end{equation*} 
where $\eps \in (0,1)$ will be determined below. By the Paley-Zygmund inequality, 
for any $\gamma\in (0,1)$ we have
\begin{displaymath}
 \mathbb{P}(|\inner{\a_i}{\x-\y}|^2 \geq \gamma^2 
\norm{\inner{\a_i}{\x-\y}}{L^2}^2) \geq (1-\gamma^2)^2 
\frac{\norm{\inner{\a_i}{\x-\y}}{L^2}^4}{\norm{\inner{\a_i}{\x-\y}}{L^4}^4}.
\end{displaymath}
Since $\a_i, i=1,...,m$ are subgaussian we have $\norm{\inner{\a_i}{\x-\y}}{L^4} 
\leq 2 L\norm{\inner{\a_i}{\x-\y}}{L^2}$ and therefore by isotropy
\begin{displaymath}
 \mathbb{P}(|\inner{\a_i}{\x-\y}| \geq \gamma 
\norm{\x-\y}{2}) \geq (1-\gamma^2)^2 \frac{1}{(2L)^4} \; .
\end{displaymath}
Setting $\ep:=(1-\gamma^2)^2 \frac{1}{(2L)^4}\in (0,1)$ we obtain
\begin{equation} \label{eq/small_ball_condition}
  \mathbb{P}(|\inner{\a_i}{\x-\y}| \geq \gamma 
\norm{\inner{\a_i}{\x-\y}}{L^2}) \geq \ep .
\end{equation}
Further, note that the subgaussian random variables $\inner{\a_i}{\x}$ 
are square integrable. Hence, by isotropy and Chebychev's inequality, the 
following holds for $\ep' \in (0,1)$
\begin{displaymath}
 \mathbb{P}(|\inner{\a_i}{\x}| \geq \norm{\x}{2} /\sqrt{\ep'}) \leq \ep' \; .
\end{displaymath}
Considering the $m$ independent rows $\a_i$ of $\A$ a Chernoff bound reveals that
\begin{displaymath}
 | \{ i \in [m] \; : \; |\inner{\a_i}{\x}| \geq \norm{\x}{2}/\sqrt{\ep'} 
\}| 
\leq 2 \ep' m
\end{displaymath}
holds with probability at least $1 - 2 \exp(-c_1 \ep' m)$. 
Applying the small ball condition in \eqref{eq/small_ball_condition} to 
a sequence $(\a_i)_{i \in [m]}$ another Chernoff bound yields, that with probability 
at least $1- 2 \exp(-c_1\ep m)$ we have
\begin{equation}
 |\{i \in [m] \; : \; |\inner{\a_i}{\x-\y}| \geq \gamma 
\norm{\x-\y}{2}\}| \geq \frac{\ep m}{2} \; .
\end{equation}
Set $\ep'= \ep/16$. Then with probability at least $1-4\exp(-c_1\eps m)$,
\begin{align*}
    |I|&\geq |\{i\in [m]\; : \; |\inner{\a_i}{\x-\y}| \geq 
\gamma \norm{\x-\y}{2}\}|\\
&\quad -|\{i\in [m]\; :\; |\inner{\a_i}{\x}|>\frac{4R}{\sqrt{\eps}}  \}|
    -|\{i\in [m]\; : \;|\inner{\a_i}{\y}|>\frac{4R}{\sqrt{\eps}} \}|\\
    &\geq |\{i\in [m]\; : \; |\inner{\a_i}{\x-\y}| \geq 
\gamma \norm{\x-\y}{2}\}|\\
&\quad -|\{i\in [m]\; :\; |\inner{\a_i}{\x}|>\frac{\norm{\x}{2}}{\sqrt{\eps'}}  \}|
    -|\{i\in [m]\; : \;|\inner{\a_i}{\y}|>\frac{\norm{\y}{2}}{\sqrt{\eps'}} \}|\\
    &\geq \frac{\eps m}{2} - 4\eps' m =\frac{\eps m}{4}.
\end{align*}
Set $\gamma=\frac{1}{2}$ which implies $\eps = \frac{9}{256 L^4}$. By the above, $|I|\geq c_2 m$ with probability at least $1-2\exp(-c_1m)$ where $c_1, c_2>0$ are constants. In the following, we condition on the event $|I|\geq c_2m$.
By the Payley-Zygmund inequality,  
 \begin{equation} \label{eq:Ptau}
  \mathbb{P}_{\tau_i}\Big(Z_i \geq \frac{1}{2} \E_{\tau_i} [Z_i] \Big) \geq \frac{1}{4} 
  \frac{(\E_{\tau_i} [Z_i])^2}{\E_{\tau_i} [Z_i^2]} \; .
 \end{equation}
Since $|q(\x)_i-(\inner{\a_i}{\x}+\tau_i)|\leq \frac{\Delta}{2}$ for every $\x\in \R^n$, it follows that
\begin{equation}
    |q(\x)_i-q(\y)_i- (\inner{\a_i}{\x}-\inner{\a_i}{\y})|\leq \Delta
\end{equation}
and 
\begin{equation}
    |q(\x)_i-q(\y)_i|
    \leq |\inner{\a_i}{\x}-\inner{\a_i}{\y}|+\Delta
\end{equation}
for all $\x, \y\in \R^n$.
Therefore,
\begin{equation}\label{eq:bound_Z}
    Z_i\leq \frac{|\inner{\a_i}{\x}-\inner{\a_i}{\y}|}{\Delta} +1.
\end{equation}
Let $\frac{4R}{\sqrt{\eps}}\leq M_{\Delta, B}$. Then $\Delta\E_{\tau_i} [Z_i]=|\inner{\a_i}{\x-\y}|$ for every $i\in I$. 
Hence, for every $i\in I$, 
\begin{align*}
   \mathbb{P}_{\tau_i}\Big(Z_i \geq \frac{1}{2} \frac{\gamma \norm{\x-\y}{2}}{\Delta} \Big)
   \geq\mathbb{P}_{\tau_i}\Big(Z_i \geq \frac{1}{2} \frac{1}{\Delta} |\inner{\a_i}{\x} - \inner{\a_i}{\y}| \Big)
   &=\mathbb{P}_{\tau_i}\Big(Z_i \geq \frac{1}{2} \E_{\tau_i} [Z_i] \Big)\\
   &\geq \frac{1}{4} 
  \frac{(\E_{\tau_i} [Z_i])^2}{\E_{\tau_i} [Z_i^2]}.
\end{align*}
Using \eqref{eq:bound_Z} and $| \inner{\a_i}{\x} - \inner{\a_i}{\y} | \ge \gamma \norm{\x - \y}{2} \ge \frac{\Delta}{2}$ we obtain 
\begin{equation}
    Z_i^2\leq Z_i \Big(\frac{|\inner{\a_i}{\x}-\inner{\a_i}{\y}|+\Delta}{\Delta} \Big) 
    \le Z_i \Big(\frac{3 |\inner{\a_i}{\x}-\inner{\a_i}{\y}|}{\Delta} \Big)
\end{equation}
which implies for every $i\in I$,
\begin{equation}
   \frac{(\E_{\tau_i} [Z_i])^2}{\E_{\tau_i} [Z_i^2]}\geq \frac{\Delta\E_{\tau_i} [Z_i]}{3 |\inner{\a_i}{\x}-\inner{\a_i}{\y}|} = \frac{1}{3}.
\end{equation}
In total, if 
$R\leq \frac{3}{64L^2}M_{\Delta, B}$, $\Delta\leq \norm{\x-\y}{2}$, then
for every $i\in I$,
\begin{equation}
   \mathbb{P}_{\tau_i}\Big(Z_i \geq  \frac{ \norm{\x-\y}{2}}{4\Delta} \Big)\geq \frac{1}{12}.
\end{equation}
By the Chernoff bound with $\ttau$-probability at least 
$1-\exp(-c_1|I|)$,
\begin{equation}
    \sum_{i\in I} 1_{ \left\{Z_i \geq \frac{ \norm{\x-\y}{2}}{4\Delta} \right\} } \geq  \frac{|I|}{24} \; .
\end{equation}
Hence, conditioned on the event $|I|\geq c_2m$ (which occurs with probability at least $1-2\exp(-c_1m)$), with $\ttau$-probability at least $1-\exp(-c_1c_2m)$,
\begin{equation}
 \sum_{i=1}^m 1_{ \left\{ Z_i \geq \frac{ \norm{\x-\y}{2}}{4\Delta} \right\} }\geq
   \sum_{i\in I} 1_{ \left\{ Z_i \geq \frac{ \norm{\x-\y}{2}}{4\Delta} \right\} }\geq \frac{c_2 m}{24}.
\end{equation}
This shows the result.
\end{proof}

\begin{proof}[of Lemma \ref{lem:multi_bit_separation}]
If $\norm{\x-\y}{2}\geq 16\Delta$, then
\begin{equation}
    \{i\in [m] \; : \; 
    N(\x,\y,\a_i) \geq \frac{1}{4}  \Delta^{-1} \norm{\x-\y}{2}\}\subset I(\x,\y,\frac{1}{16}).
\end{equation}
Indeed,
if $N(\x,\y,\a_i) \geq \frac{1}{4}  \Delta^{-1} \norm{\x-\y}{2} \geq 3$, then 
there exists $j\in \{-(2^{B-1}-1),...,(2^{B-1}-1)\}$ such that 

\begin{equation}
    \sign(\inner{\a_i}{\x}+\tau_i+j\Delta)\neq \sign(\inner{\a_i}{\y}+\tau_i+j\Delta),
\end{equation}

\begin{equation}
    |\inner{\a_i}{\x}+\tau_i+j\Delta| \ge \frac{|\inner{\a_i}{\x}-\inner{\a_i}{\y}| - \Delta}{2},
\end{equation}
and
\begin{equation}
    |\inner{\a_i}{\y}+\tau_i+j\Delta| \ge \frac{|\inner{\a_i}{\x}-\inner{\a_i}{\y}| - \Delta}{2}.
\end{equation}
Therefore
\begin{align}
  &2|\inner{\a_i}{\x}+\tau_i+j\Delta| +\Delta
  \geq |\inner{\a_i}{\x}-\inner{\a_i}{\y}|\geq (N(\x,\y,\a_i)-1)\Delta\geq \frac{1}{4}\norm{\x-\y}{2}-\Delta.
\end{align}
Hence,
\begin{equation}
|\inner{\a_i}{\x}+\tau_i+j\Delta|\geq \frac{1}{8}\norm{\x-\y}{2}-\Delta\geq \frac{1}{16}\norm{\x-\y}{2}
\end{equation}
as $\norm{\x-\y}{2} \ge 16\Delta$. Analogously, 
\begin{equation}
    |\inner{\a_i}{\y}+\tau_i+j\Delta|\geq \frac{1}{16}\norm{\x-\y}{2}.
\end{equation}
The result now follows from 
Lemma~\ref{lem:multi_bit_sep}.
\end{proof}

\paragraph{Localized uniform result}
For $\rho>0$ define the localized set $\T_\rho = (\T-\T)\cap \rho B_2^n$. By arguing similarly to the one-bit case, we can obtain the following uniform result.

\begin{theorem} \label{thm:LocalMultiBit}
    There exist constants $c_\ast, c_1, c_2\in (0,1)$ such that the follwing holds. 
    Let $\T\subset \R^n$ be convex.
    For every $\rho\geq \Delta$ and every $c\leq c_\ast$, 
    if
    \begin{align}\label{eq:meas_assumptions_thm:LocalMultiBit}
        m \gtrsim \rho^{-2} w_*(\T_{c\rho})^2 + \log(\mathcal{N}(\T_{17\rho}, c\rho))
    \end{align}
    then with probability at least $1-2e^{-c_1m}$, 
    \begin{align*}
     \inf_{\substack{\x,\y \in T_{17\rho}\\\norm{\x - \y}{2} \ge 17 \rho}}|I(\x,\y,\frac{1}{32})| \ge c_2 m.
    \end{align*}
\end{theorem}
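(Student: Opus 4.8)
The plan is to mirror the proof of Theorem~\ref{thm:noisy_uniform} from the one-bit case, substituting the multi-bit thickness estimate (Lemma~\ref{lem:multi_bit_separation}) for its one-bit counterpart and the multi-bit notion of well-separation (Definition~\ref{def:MultiBit_Separation}) for the one-bit one. Fix $c \le c_\ast$ (the value of $c_\ast$ is pinned down at the end), put $\eps := c\rho$, and let $\N_\eps \subset \T_{17\rho}$ be a minimal Euclidean $\eps$-net, so that $\log|\N_\eps| = \log \N(\T_{17\rho}, c\rho)$. For every pair $(\x',\y') \in \N_\eps \times \N_\eps$ with $\norm{\x'-\y'}{2} \ge 16\rho$ we have $\norm{\x'-\y'}{2} \ge 16\Delta$ since $\rho \ge \Delta$, so Lemma~\ref{lem:multi_bit_separation} applies --- its boundedness hypothesis holds because $\T_{17\rho} \subset \T-\T$ lies in a Euclidean ball whose radius is controlled by $M_{\Delta,B}$ through the standing assumption that the quantizer range covers $\T$ --- and yields $|I(\x',\y',\tfrac1{16})| \ge c_2 m$ with probability at least $1 - 2e^{-c_1 m}$. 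Taking a union bound over the at most $|\N_\eps|^2 = e^{2\log \N(\T_{17\rho},c\rho)}$ such pairs, the measurement requirement $m \gtrsim \log \N(\T_{17\rho}, c\rho)$ ensures that with probability at least $1 - 2e^{-c_1' m}$,
\[
  |I(\x',\y',\tfrac1{16})| \ge c_2 m \quad\text{for all } (\x',\y') \in \N_\eps^2 \text{ with } \norm{\x'-\y'}{2} \ge 16\rho.
\]

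The second ingredient is a multi-bit stability lemma, the verbatim analogue of Lemma~\ref{lem:noise_stability} with $\nnu = \0$: if $\x',\y'$ are $\theta$-well-separated in direction $\a_i$ with a witness index $j$ (in the sense of Definition~\ref{def:MultiBit_Separation}), and if $\norm{\x-\x'}{2},\norm{\y-\y'}{2} \le \eps$ (which forces $\eps \le \tfrac14\norm{\x'-\y'}{2}$ here) together with $|\inner{\a_i}{\x-\x'}|,|\inner{\a_i}{\y-\y'}| \le \tfrac\theta4\norm{\x'-\y'}{2}$, then the \emph{same} index $j$ witnesses $\tfrac\theta2$-well-separation of $\x$ and $\y$; indeed, the sign condition $\sign(\inner{\a_i}{\x}+\tau_i+j\Delta) \ne \sign(\inner{\a_i}{\y}+\tau_i+j\Delta)$ is preserved because $|\inner{\a_i}{\x'}+\tau_i+j\Delta| \ge \theta\norm{\x'-\y'}{2}$ dominates the perturbation, and the two magnitude bounds follow from the triangle inequality and $\norm{\x-\y}{2} \le \tfrac32\norm{\x'-\y'}{2}$. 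Summing over $i$ gives
\[
  |I(\x,\y,\tfrac\theta2)| \ge |I(\x',\y',\theta)| - 2\sup_{\w}\big|\{i\in[m]:|\inner{\a_i}{\w}| > \tfrac\theta4\norm{\x'-\y'}{2}\}\big|,
\]
where the supremum runs over admissible differences $\x-\x'$ and $\y-\y'$, which lie in $(\T-\T)-(\T-\T) = 2(\T-\T)$ intersected with $\eps B_2^n$ --- a set of Euclidean diameter at most $\eps = c\rho$ and Gaussian width $\lesssim w_*(\T_{c\rho})$. With $\theta = \tfrac1{16}$ and $\norm{\x'-\y'}{2} \ge 16\rho$ the threshold $\tfrac\theta4\norm{\x'-\y'}{2} \ge \tfrac\rho4$, so the spoiling term is bounded by $\sup_{\w}|\{i:|\inner{\a_i}{\w}|>\rho/4\}|$, and I would invoke Lemma~\ref{lem:bound_for_spoiled_hyperplanes} with $u=1$ and $k = \tfrac{c_2}{4}m$: then $\log(em/k)$ is a constant depending only on $L$, the hypothesis of that lemma reduces to $m \gtrsim \rho^{-2} w_*(\T_{c\rho})^2$ together with a smallness requirement on $c$, and it gives $\sup_{\w}|\{i:|\inner{\a_i}{\w}|>\rho/4\}| < \tfrac{c_2}{4}m$ with probability at least $1 - 2e^{-c'' m}$.

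On the intersection of these two events, take arbitrary $\x,\y \in \T_{17\rho}$ with $\norm{\x-\y}{2} \ge 17\rho$ and pick $\x',\y' \in \N_\eps$ with $\norm{\x-\x'}{2},\norm{\y-\y'}{2} \le \eps$; then $\norm{\x'-\y'}{2} \ge 17\rho - 2c\rho \ge 16\rho$ for $c \le \tfrac12$, so $|I(\x',\y',\tfrac1{16})| \ge c_2 m$, and the stability inequality with $\theta = \tfrac1{16}$ gives $|I(\x,\y,\tfrac1{32})| \ge c_2 m - 2\cdot\tfrac{c_2}{4}m = \tfrac{c_2}{2}m$. Up to relabelling $c_2 \leftarrow c_2/2$ and $c_1$ by the minimum of the exponents above, and choosing $c_\ast$ small enough that all the smallness conditions on $c$ (from the spoiling-set estimate and $c \le \tfrac12$) hold, this is the claimed statement. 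The only genuinely new content is the multi-bit stability lemma --- whose sole subtlety beyond the one-bit case is that the separating hyperplane now carries the data-dependent offset $\tau_i + j\Delta$, so one must check that a single witness index $j$ survives the perturbation --- together with verifying the boundedness hypothesis of Lemma~\ref{lem:multi_bit_separation} on the localized set; everything else transcribes the one-bit argument. Accordingly I expect the main obstacle to be bookkeeping rather than any new idea: matching the net scale $\eps = c\rho$, the separation threshold $\tfrac\theta4\norm{\x'-\y'}{2}$, and the sparsity level $k = \tfrac{c_2}{4}m$ in the spoiling-set estimate so that precisely the two stated summands $\rho^{-2}w_*(\T_{c\rho})^2$ and $\log\N(\T_{17\rho},c\rho)$ appear in the measurement count.
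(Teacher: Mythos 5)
Your argument is correct and matches the paper's proof of Theorem~\ref{thm:LocalMultiBit} essentially line for line: the same net at scale $c\rho$, the same invocation of Lemma~\ref{lem:multi_bit_separation} via a union bound (with $16\rho\geq 16\Delta$ used exactly as you note), the same stability step reducing to a spoiling-set count over $(\T_{17\rho}-\T_{17\rho})\cap c\rho B^n_2\subset 2\T_{c\rho}$, and the same application of Lemma~\ref{lem:bound_for_spoiled_hyperplanes} with $k\sim m$. The one place where you add something is your explicit multi-bit stability lemma: the paper simply cites Lemma~\ref{lem:noise_stability}, which is stated for the one-bit notion of well-separation (Definition~\ref{def:well_sep}, a single offset $\tau_i$), whereas the $I(\x,\y,\theta)$ appearing in Theorem~\ref{thm:LocalMultiBit} is the multi-bit set of Definition~\ref{def:MultiBit_Separation} with a data-dependent offset $\tau_i+j\Delta$; your observation that the witness index $j$ is fixed for a given pair and therefore carries over verbatim through the perturbation argument is exactly the detail needed to make that citation rigorous, and it is correct.
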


\begin{proof}
For $c\in (0, \frac{1}{2}]$ 
let $N_{c \rho}\subset \T_{17\rho}$ be a minimal $c\rho$-net of $\T_{17\rho}$.
By Lemma~\ref{lem:multi_bit_separation} and
a union 
bound we obtain that with probability at least $1 - |N_{c\rho}|^2 2 e^{-c_1m}$
the following holds:
for all $\x',\y' 
\in N_{c\rho}$ such that $\norm{\x'-\y'}{2}\geq 16 \Delta$,
    \begin{align*}
        |I(\x',\y',\frac{1}{16})| \ge c'm.
    \end{align*}
Let $\x, \y\in \T_{17\rho}$ such that $\norm{\x-\y}{2} \geq 17\rho$. There exist $\x', \y' \in \N_{c\rho}$ with 
$\norm{\x-\x'}{2}\leq c\rho$ and $\norm{\y-\y'}{2}\leq c\rho$. 
By the triangle inequality, 
\begin{equation}
\norm{\x'-\y'}{2}\geq \norm{\x-\y}{2}-2c\rho\geq
17\rho-\rho= 
16\rho\geq 16\Delta.
\end{equation}
By Lemma~\ref{lem:noise_stability},
\begin{align*}
    |I(\x,\y,\frac{1}{32})|
    &\geq |I(\x',\y',\frac{1}{16})| - 2\sup_{\z\in (\T_{17\rho}-\T_{17\rho})\cap c\rho B^n_2}|\{i\in [m]\; : \; |\inner{\a_i}{\z}|>\frac{\rho}{4} \}.\\
\end{align*}
By Lemma~\ref{lem:bound_for_spoiled_hyperplanes},
if
\begin{equation} \label{eq:meas_LocalizedArgument}
   c_1 \frac{1}{\sqrt{m}}\Big(w_*((\T_{17\rho}-\T_{17\rho})\cap c\rho B^n_2) +  c\rho \sqrt{m \log(e)} \Big) \leq  \frac{1}{4} \rho \; ,
\end{equation}
then 
\begin{equation}
\sup_{\z\in (\T_{17\rho}-\T_{17\rho})\cap c\rho B^n_2}|\{i\in [m]\; : \; |\inner{\a_i}{\z}|>\frac{1}{4} \rho \} | \leq \frac{c'}{4}m
\end{equation}
with probability at least $1-2e^{-c_2 m}$. 
Observe that since $\T$ is convex, the set $\T-\T$ is symmetric and convex. Therefore, $\T_{17\rho}-\T_{17\rho}\subset 2\T_{17\rho}$ which implies $(\T_{17\rho}-\T_{17\rho})\cap c\rho B^n_2\subset 2\T_{17\rho}\cap c\rho B^n_2=2\T_{\frac{c}{2}\rho}\subset 2\T_{c\rho}$. Hence, if $c\leq \frac{1}{2}$ is chosen small
enough, then \eqref{eq:meas_assumptions_thm:LocalMultiBit} implies \eqref{eq:meas_LocalizedArgument}.
The result now follows on the intersection of the two events of Lemma \ref{lem:noise_stability} \& \ref{lem:bound_for_spoiled_hyperplanes}.

\end{proof}

\paragraph{Extension to the whole set} We will use now the elementary properties 
of our uniform multi-bit quantizer to extend Theorem \ref{thm:LocalMultiBit} to 
the complete signal set $T$.

\begin{theorem} \label{thm:MultiBit}
    There exist constants $c_1, c_2\in (0,1)$ and $\Gamma\geq 1$ 
such that the following holds. 
Let $\T\subset R B^n_2$ be a convex set. Assume that $\Delta > 0$ and $B\in \mathbb{N}$ are chosen such that $\Gamma R\leq (2^{B-1}-1)\Delta - \frac{\Delta}{2}$. 
    For every $\rho \geq 2\Delta$, if
    \begin{align*}
        m \gtrsim \rho^{-2} w_*(\T_{17\rho})^2 + R^{-2} w_*(\T)^2
    \end{align*}
    then with probability at least $1 - 2 e^{-c_1 m}$ 
    the following holds:
    for all $\x,\y \in \T$ 
    with $\norm{\x - \y}{2} \ge 17 \rho$ we have
    \begin{align*}
        \left| I\left(\x,\y,\frac{\rho}{4 \norm{\x - \y}{2}} \right) \right| 
	\ge c_2 m.
    \end{align*}
\end{theorem}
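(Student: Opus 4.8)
The plan is to deduce the statement from the localized uniform bound of Theorem~\ref{thm:LocalMultiBit} by a rescaling trick, after equipping ourselves with a uniform control on the sizes of the projections $\inner{\a_i}{\x}$, so that the quantizer range is not exceeded. \textbf{Reductions.} We may assume $\rho\le 2R/17$: otherwise every pair $\x,\y\in\T\subset RB^n_2$ has $\norm{\x-\y}{2}\le 2R<17\rho$ and the claim is vacuous; in particular $\rho+2\Delta\le 2\rho<R$. Given $\x,\y\in\T$ with $\norm{\x-\y}{2}\ge 17\rho$, set $\mathbf v:=\x-\y$ and $\tilde{\mathbf v}:=\tfrac{17\rho}{\norm{\mathbf v}{2}}\mathbf v$. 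Since $\T$ is convex, $\T-\T$ is convex and contains $\0$, so $\tilde{\mathbf v}\in\T-\T$ with $\norm{\tilde{\mathbf v}}{2}=17\rho$, i.e.\ $\tilde{\mathbf v}\in\T_{17\rho}$; also $\0\in\T_{17\rho}$. Thus $(\0,\tilde{\mathbf v})$ is an admissible pair in Theorem~\ref{thm:LocalMultiBit}.

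\textbf{Two good events.} First, apply Theorem~\ref{thm:LocalMultiBit} with a small $c\le c_\ast$. Its measurement requirement is implied by $m\gtrsim\rho^{-2}w_*(\T_{17\rho})^2$, since $w_*(\T_{c\rho})\le w_*(\T_{17\rho})$ and, by Sudakov's inequality, $\log\N(\T_{17\rho},c\rho)\lesssim\rho^{-2}w_*(\T_{17\rho})^2$. So on an event of probability at least $1-2e^{-c_1m}$ one has $|I(\0,\tilde{\mathbf v},\tfrac1{32})|\ge c_2m$ for \emph{every} such $\tilde{\mathbf v}$. Second, invoke Lemma~\ref{lem:bound_for_spoiled_hyperplanes} with $\mathcal K=\T$, $k=\delta m$, $u=1$, using $d_2(\T)\le R$: fixing first a small constant $\delta\le c_2/4$ and then a sufficiently large constant $\Gamma'$ (both depending only on $L$), the hypothesis $m\gtrsim R^{-2}w_*(\T)^2$ forces, with probability at least $1-2e^{-cm}$, that $\sup_{\x\in\T}|\{i\in[m]:|\inner{\a_i}{\x}|>\Gamma'R\}|\le\delta m$. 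Hence for any $\x,\y\in\T$ the good index set $I^*:=\{i\in[m]:\max(|\inner{\a_i}{\x}|,|\inner{\a_i}{\y}|)\le\Gamma'R\}$ has $|I^*|\ge(1-2\delta)m$. Set $\Gamma:=\Gamma'+1$, so that the standing assumption $\Gamma R\le M_{\Delta,B}=(2^{B-1}-1)\Delta-\tfrac\Delta2$ is exactly the one supplied.

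\textbf{Transferring separation from $(\0,\tilde{\mathbf v})$ to $(\x,\y)$.} On the intersection of the two events, work with $I^*\cap I(\0,\tilde{\mathbf v},\tfrac1{32})$, of size at least $c_2m-2\delta m\ge\tfrac{c_2}2m$. Fix $i$ in this set. Well-separation of $\0$ and $\tilde{\mathbf v}$ in direction $\a_i$ yields a shift for which $\tau_i+j_i\Delta$ and $\inner{\a_i}{\tilde{\mathbf v}}+\tau_i+j_i\Delta$ have opposite signs and modulus at least $\tfrac{17\rho}{32}$; hence $|\inner{\a_i}{\tilde{\mathbf v}}|\ge\tfrac{17\rho}{16}$, so $|\inner{\a_i}{\x}-\inner{\a_i}{\y}|=\tfrac{\norm{\mathbf v}{2}}{17\rho}|\inner{\a_i}{\tilde{\mathbf v}}|\ge\tfrac{\norm{\x-\y}{2}}{16}$. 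Writing $a:=\inner{\a_i}{\x}+\tau_i$, $b:=\inner{\a_i}{\y}+\tau_i$ (w.l.o.g.\ $a<b$, so $b-a\ge\tfrac{\norm{\x-\y}{2}}{16}\ge\tfrac{17\rho}{16}$), the interval $[\tfrac\rho4-b,\,-\tfrac\rho4-a]$ has length $b-a-\tfrac\rho2\ge\tfrac{9\rho}{16}>\Delta$ (here $\rho\ge2\Delta$ enters), hence contains some $j\Delta$ with $j\in\mathbb Z$; moreover $|j|\Delta\le\tfrac\rho4+\max(|a|,|b|)\le\tfrac\rho4+\Gamma'R+\Delta\le\Gamma R+\tfrac\Delta2\le M_{\Delta,B}+\tfrac\Delta2=(2^{B-1}-1)\Delta$ (using $\tfrac\rho4+\tfrac\Delta2<R$ and $\Gamma=\Gamma'+1$), so $j$ is a valid quantizer index. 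For this $j$, $\inner{\a_i}{\x}+\tau_i+j\Delta\le-\tfrac\rho4<0<\tfrac\rho4\le\inner{\a_i}{\y}+\tau_i+j\Delta$; thus $\sign(\inner{\a_i}{\x}+\tau_i+j\Delta)\ne\sign(\inner{\a_i}{\y}+\tau_i+j\Delta)$ and both moduli are at least $\tfrac\rho4=\tfrac{\rho}{4\norm{\x-\y}{2}}\norm{\x-\y}{2}$, so by Definition~\ref{def:MultiBit_Separation} $i\in I(\x,\y,\tfrac{\rho}{4\norm{\x-\y}{2}})$. Therefore $|I(\x,\y,\tfrac{\rho}{4\norm{\x-\y}{2}})|\ge\tfrac{c_2}2m$ for all such $\x,\y$ simultaneously, on an event of probability at least $1-2e^{-c_1'm}$; relabelling the constants gives the theorem.

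\textbf{Main obstacle.} The genuine work is the uniform range control: unlike Lemma~\ref{lem:multi_bit_sep}, which simply discards bad directions for a single fixed pair, here the bound $\sup_{\x\in\T}|\{i:|\inner{\a_i}{\x}|>\Gamma'R\}|\le\delta m$ must hold for all $\x\in\T$ at once while $k=\delta m$ stays a fixed fraction of $m$; balancing the chaining term $d_2(\T)\sqrt{k\log(em/k)}$ against $\Gamma'R$ is what pins down the constants $\delta,\Gamma'$ (hence $\Gamma$) and produces the extra summand $R^{-2}w_*(\T)^2$ in the sampling condition. Everything downstream — the scaling $\mathbf v\mapsto\tilde{\mathbf v}$ and the elementary "choose a hyperplane in the middle of a wide gap" computation — is routine.
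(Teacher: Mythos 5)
Your proof is correct and follows essentially the same route as the paper's: localize via Theorem~\ref{thm:LocalMultiBit} applied to the rescaled difference vector, control the quantizer range uniformly via Lemma~\ref{lem:bound_for_spoiled_hyperplanes} with $m\gtrsim R^{-2}w_*(\T)^2$, then transfer separation to $(\x,\y)$ by finding a multiple of $\Delta$ in the wide gap (possible since $\rho\ge 2\Delta$). One small merit of your version: where the paper writes $|\{i:\sup_{\w\in\T}|\inner{\a_i}{\w}|>\Gamma R\}|<\tfrac{c_2}{2}m$ — which does not literally follow from Lemma~\ref{lem:bound_for_spoiled_hyperplanes}, whose conclusion has the supremum outside the cardinality — you use the lemma as actually stated and union-bound over the two points $\x,\y$ to get $|I^*|\ge(1-2\delta)m$, which cleanly closes that gap.
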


\begin{proof}[of Theorem \ref{thm:MultiBit}]
    It suffices to show that 
    with probability at least $1 - 4e^{-c_1 m}$ the following holds:
    for all $\x, \z \in \T$ with $\norm{\x - \z}{2} = 17 \rho$,
    \begin{equation}
       \left| I\left(\x,\z,\frac{\rho}{4 \norm{\x - \z}{2}} \right) \right| = \left| I\left(\x,\z,\frac{1}{ 68} \right) \right|
	\ge c_2 m. 
    \end{equation}
    Indeed, if $\x, \y\in \T$ with $\norm{\x - \y}{2} > 17 \rho$, then by convexity of $\T$ there exists $\z\in \T$ on the connecting line between $\x$ and $\y$ such that $\norm{\x - \z}{2} = 17 \rho$. 
    Since $\inner{\a_i}{\z}$
    then also lies between $\inner{\a_i}{\x}$ and $\inner{\a_i}{\y}$ for every $i\in [m]$, it follows that 
    \begin{equation}
     I\left(\x,\z,\frac{1}{68} \right) \subset I\left(\x,\y,\frac{\rho}{4 \norm{\x - \y}{2}}\right).
    \end{equation}
    By Theorem \ref{thm:LocalMultiBit}     there exist constants $c, c_1, c_2\in (0,1)$ 
    such that 
    with probability at least $1-2e^{-c_1 m}$,
    \begin{align*}
        \inf_{\substack{\x, \z\in \T\\
        \norm{\x-\z}{2}=17\rho}}\left| I \left( \0,\x-\z,\frac{1}{32} \right)\right| \ge c_2m,
    \end{align*}
    provided that $m \gtrsim \rho^{-2} w_*(\T_{c\rho})^2 + \log(\mathcal{N}(\T_{17\rho}, c\rho))$. 
    Sudakov's inequality implies that this condition on $m$ is satisfied if $m\gtrsim \rho^{-2}w_*(\T_{17\rho})^2$. 
    By Lemma \ref{lem:bound_for_spoiled_hyperplanes}, if $m \gtrsim \Gamma^{-2}R^{-2}w_*(\T)^2$,
    where $\Gamma \ge 1$ is a suitable constant that only depends on $L$, 
    then with probability at least $1 - 2e^{-c_3 m}$,
    \begin{align*}
        \left| \left\{ i \in [m] \colon \sup_{\w \in \T} |\langle \a_i,\w \rangle| > \Gamma R \right\}\right| < \frac{c_2}{2} m.
    \end{align*}{}
    A union bound shows that with probability at least $1 - 4e^{-c_4 m}$,
    for all $\x, \z\in \T$ with $\norm{\x-\z}{2} = 17 \rho$
    the set
    \begin{align*}
        \mathcal{I}_{\x,\z} = I \left( \0,\x-\z,\frac{1}{32} \right) \cap \left\{ i \in [m] 
	\colon \sup_{\w \in \T} |\langle \a_i,\w \rangle| \le \Gamma R \right\}
    \end{align*}
    satisfies $|\mathcal{I}_{\x,\z}| \geq \frac{c_2}{2}m$. It remains to show that $\mathcal{I}_{\x,\z} \subset I\left(\x,\z,\frac{1}{68} \right)$. By definition, for every $i \in \mathcal{I}_{\x,\z}$ there exists $j_i \in \{-(2^{B-1}-1),...,(2^{B-1}-1)\}$ such that
    \begin{itemize}
        \item $Q_{i,j_i}(\0) \neq Q_{i,j_i}(\x-\z)$,
        \item $|\langle \a_i,\x - \z \rangle + (\tau_i + j_i \Delta)| 
		\ge \frac{1}{32} \norm{\x-\z}{2}=\frac{17}{32}\rho\geq \frac{\rho}{2}$,
        \item $|\tau_i + j_i \Delta| \ge \frac{1}{32} \norm{\x-\z}{2}=\frac{17}{32}\rho\geq \frac{\rho}{2}$.
    \end{itemize}
    This implies for all $i \in \mathcal{I}_{\x,\z}$ that
    \begin{align} \label{eq:distance}
        |\langle \a_i,\x \rangle - \langle \a_i,\z \rangle| 
	    = |\langle \a_i,\x - \z \rangle + (\tau_i + j_i\Delta) 
		- (\tau_i + j_i\Delta)| \ge \rho\geq 2\Delta,
    \end{align}
    where we used that $\sign(\langle \a_i,\x - \z \rangle + (\tau_i + j_i \Delta)) \neq \sign(\tau_i + j_i \Delta)$. 
    If $\Gamma R\leq M_{B;\Delta}$
    then for every 
    $i\in \mathcal{I}_{\x,\z}$ we have  \eqref{eq:distance} together with 
    $\max\{|\langle \a_i,\x \rangle|, |\langle \a_i,\z \rangle|\} \leq M_{B;\Delta}$ and the quantizer resolution $\Delta$ which implies 
    that there exists $j_i'\in \{-(2^{B-1}-1),...,(2^{B-1}-1)\}$ such that
    \begin{itemize}
        \item $Q_{i,j_i'}(\x) \neq Q_{i,j_i'}(\z)$,
        \item $|\langle \a_i,\x \rangle + (\tau_i + j_i' \Delta)| \ge \frac{\rho}{4}=\frac{1}{68}\norm{\x-\z}{2}$,
        \item $|\langle \a_i,\z \rangle + (\tau_i + j_i' \Delta)| \ge \frac{\rho}{4}=\frac{1}{68}\norm{\x-\z}{2}$.
    \end{itemize}
    This implies that $\mathcal{I}_{\x, \z} \subset I\left(\x,\z,\frac{1}{68} \right)$ and hence proves the claim.
\end{proof}

\begin{proof}[of Theorem \ref{thm:MultiBitMain}]
Let $\rho \geq 2\Delta$.
    By Theorem~\ref{thm:MultiBit}, if
    \begin{align*}
        m \gtrsim \rho^{-2} w_*(\T_{17\rho})^2 + R^{-2} w_*(\T)^2
    \end{align*}
    then the following holds
    with probability at least $1 - 4e^{-c_1 m}$: 
    for all $\x,\y \in \T$ 
    with $\norm{\x - \y}{2} \ge 17 \rho$ we have
    \begin{align*}
        \left| I\left(\x,\y,\frac{\rho}{4 \norm{\x - \y}{2}} \right) \right| 
	\ge c_2 m.
    \end{align*}
    On the same event for all $\x,\y \in \T$ 
    with $\norm{\x - \y}{2} \ge 17 \rho$,
    \begin{align*}
        \L_{Q(\x)}(\y) - \L_{Q(\x)}(\x) = \L_{Q(\x)}(\y) 
        &\geq
        \frac{1}{m} \sum_{i\in I\left(\x,\y,\frac{\rho}{4 \norm{\x - \y}{2}} \right)}\sum_{j = -(2^{B-1}-1)}^{(2^{B-1}-1)} \left[ -Q(\x)_{i,j} (\langle \a_i,\y \rangle + (\tau_i + j\Delta)) \right]_+\\
        &\geq \frac{1}{m} \sum_{i\in I\left(\x,\y,\frac{\rho}{4 \norm{\x - \y}{2}} \right)}\frac{\rho}{4}\\
        &\geq \frac{c_2\rho}{4}>0.
    \end{align*}{}
    In particular, on this event the following holds: 
    for all $\x\in \T$ and  
every minimizer $\x^\#$ of the program \eqref{eq:Pmulti} 
with $Q(\x)_{i,j} = \sign (\langle \a_i,\x \rangle + \tau_i + j\Delta)$,
 $$\norm{\x-\x^\#}{2}\leq 17 \rho.$$
 The result follows by rescaling $\rho$.
\end{proof}

\section{Numerical Experiments} \label{sec:Numerics}
In this section we present different numerical simulations substantiating the theoretical considerations above. After discussing beneficial properties of the functional $\L$ with respect to efficient minimization and briefly introducing the competitors we use to compare its performance, we present simulations both in the one-bit setting analyzed in Section \ref{sec:OneBit} and the multi-bit setting analyzed in Section \ref{sec:MultiBit}.

\paragraph{Formulation as a linear program} For signal sets $\T$ which are convex polytopes, the minimization in \eqref{eq:P} becomes a linear program and can be solved more efficiently than general convex programs. We illustrate this by choosing $\T = RB^n_1$. First note that the ReLU-function can be written as
\begin{align*}
    [z]_+ = \frac{z + |z|}{2}.
\end{align*}{}
Consequently, by introducing the vectors $\u_+,\u_- \in \R^n$ and $\w_+,\w_- \in \R^m$ we may re-formulate
\begin{align*}
    \min_{\norm{\z}{1} \le R} \frac{1}{m} \sum_{i=1}^m \relu{-(q_{\text{corr}})_i(\inner{\a_i}{\z} + \tau_i)}  
    &= \min_{\substack{\u_+,\u_- \ge \0 \\ [\u_+^T,\u_-^T] \1 \le R}} \frac{1}{m} \sum_{i=1}^m \relu{-(q_{\text{corr}})_i(\inner{\a_i}{\u_+ - \u_-} + \tau_i)} \\
    &= \min_{\substack{\u_+,\u_-,\w_+,\w_- \ge \0 \\ [\u_+^T,\u_-^T] \cdot \1 \le R \\ \w_+ - \w_- = \left( -(q_{\text{corr}})_i(\inner{\a_i}{\u_+ - \u_-} + \tau_i) \right)_{i\in [m]} }} \1^T \left( \frac{(\w_+ - \w_-) + |\w_+ - \w_-|}{2} \right) \\
    &= \min_{\substack{\u_+,\u_-,\w_+,\w_- \ge \0 \\ [\u_+^T,\u_-^T] \cdot \1 \le R \\ \w_+ - \w_- = \left( -(q_{\text{corr}})_i(\inner{\a_i}{\u_+ - \u_-} + \tau_i) \right)_{i\in [m]} }} \1^T \w_+
\end{align*}{}
which is a linear program in $\R^{2n+2m}$. In the last step we used that by shape of the objective function corresponding entries of the minimizing $\w_+$ and $\w_-$ are never simultaneously non-zero, \emph{i.e.}, if $(\u_+,\u_-,\w_+,\w_-)$ minimizes the program in the second line then $(w_+)_i = 0$ or $(w_-)_i = 0$, for all $i \in [m]$. The same applies to the multi-bit generalization of $\L$ presented in Section \ref{sec:MultiBit}.\\

Let us now discuss the competing algorithms we use to evaluate the performance of our program:

\paragraph{Single back-projection} As already mentioned in the introduction, the program \eqref{eq:Dirksen} has been shown to approximate signals from an near-optimal number of noisy one-bit measurements. Moreover, the minimization is equivalent to performing a single projected back-projection, \emph{i.e.},
\begin{align} \label{eq:SingleProjection}
    \x^\# = \P_\T \left( \frac{\lambda}{m} \A^T \q \right).
\end{align}{}
In the case of $\T$ being the set of $s$-sparse vectors, the projection becomes a simple hard-thresholding step which in spite of the non-convexity of \eqref{eq:Dirksen} is fast to compute. Since we are not able to perform the minimization of \eqref{eq:P} on a non-convex set and are thus forced to use its convex relaxation $\sqrt{s} B_1^n$ (here the set of $s$-sparse vectors has been restricted to the unit ball $B_2^n$ before taking the convex hull), we as well consider \eqref{eq:SingleProjection} with $\T = \sqrt{s} B_1^n$ to have a fairer comparison. We transfer \eqref{eq:SingleProjection} to the multi-bit setting by replacing $\q$ with its multi-bit version in \eqref{eq:Alphabet}, a setting which has been examined in \cite{xu2018quantized} for infinite uniform alphabets.

\paragraph{Iterative Thresholding} More sophisticated recovery schemes are adapted iterative thresholding algorithms which still lack thorough theoretical analysis, but perform exceedingly well in practice. Having been introduced in \cite{jacques_robust_2013,jacques_quantized_2013} as \emph{binary iterative hard-thresholding (BIHT)} for one-bit quantized measurements and \emph{quantized iterative hard-thresholding (QIHT)} for multi-bit quantized measurements, both algorithms follow the same concept of iterating between gradient descent steps of an $\ell_2$-data fidelity term (including projection of the involved quantities to the quantization alphabet) and hard-thresholding projections onto the set of $s$-sparse vectors, \emph{i.e.},
\begin{align} \label{eq:IHT}
    \x^{k+1} = \P_\T \left( \x^k + \mu \A^T ( \q - \P_{\mathcal{A}_{\Delta,B}^m} (\A\x^k) ) \right),
\end{align}{}
where $\T$ is chosen as the set of $s$-sparse vectors, $\mu > 0$ is a step-size parameter, and $\mathcal{A}_{\Delta,B}^m$ denotes the corresponding quantization alphabet. Obviously, \eqref{eq:SingleProjection} is the first iteration of \eqref{eq:IHT}. Similar to the single back-projection, we use versions of \eqref{eq:IHT} with $\T = \sqrt{s} B_1^n$ as well to have a fairer comparison and dub the resulting recovery schemes \emph{binary iterative soft-thresholding (BIST)} and \emph{quantized iterative soft-thresholding (QIST)}. The step-size $\lambda = \frac{1}{m} \left( 1 - \sqrt{\frac{2s}{m}} \right)$ is chosen according to the results in \cite{jacques_quantized_2013}.



\subsection{One-Bit Experiments}

For a comprehensive comparison, we ran experiments for all one-bit algorithms described above as well as single hard-thresholding step \eqref{eq:SingleProjection} with the optimal scaling for one-bit measurements without dithering (cf.\ \cite{foucart_flavors_2017}). For BIHT and BIST, we set the maximum number of iterations to 1000.
The measurement matrices have independent standard Gaussian entries. Depicted are averages over $500$ realizations. In all experiments shown here, we solved \eqref{eq:P} using the "linprog" function with the "interior-point" algorithm in Matlab. \\

We first studied the setting of 10-sparse unit norm signals in $\R^{200}$ that were drawn uniformly at random.
Figure \ref{fig:OneBitAlgComp}a compares the resulting median $\ell_2$-error (in dB) for the case of noiseless measurements. We see that BIHT outperforms all other algorithms while the median error for \eqref{eq:P} is similar to BIST. This is to be expected since \eqref{eq:P} does not minimize the support size of $\x$ and thus does not enforce sparsity as strictly as BIHT. 

Next, we studied a setting of approximately sparse signals. For the signals in $\R^{200}$, we randomly chose $s=10$ components to be Gaussian with a variance of 1 and the remaining ones to be Gaussian with a variance of $10^{-3}$. All signals were then normalized to lie in the set $\T = \sqrt{s} B_1^n$. The results are shown in Figure~\ref{fig:OneBitAlgComp}b. We see that in contrast to the last experiment \eqref{eq:P} outperforms BIHT, i.e.\ using in \eqref{eq:P} the convex formulation with $\T = \sqrt{s} B_1^n$ instead of $\T = \{\z \colon |\mathrm{supp}(\z)| \le s \}$ apparently increases stability. 

\begin{figure}
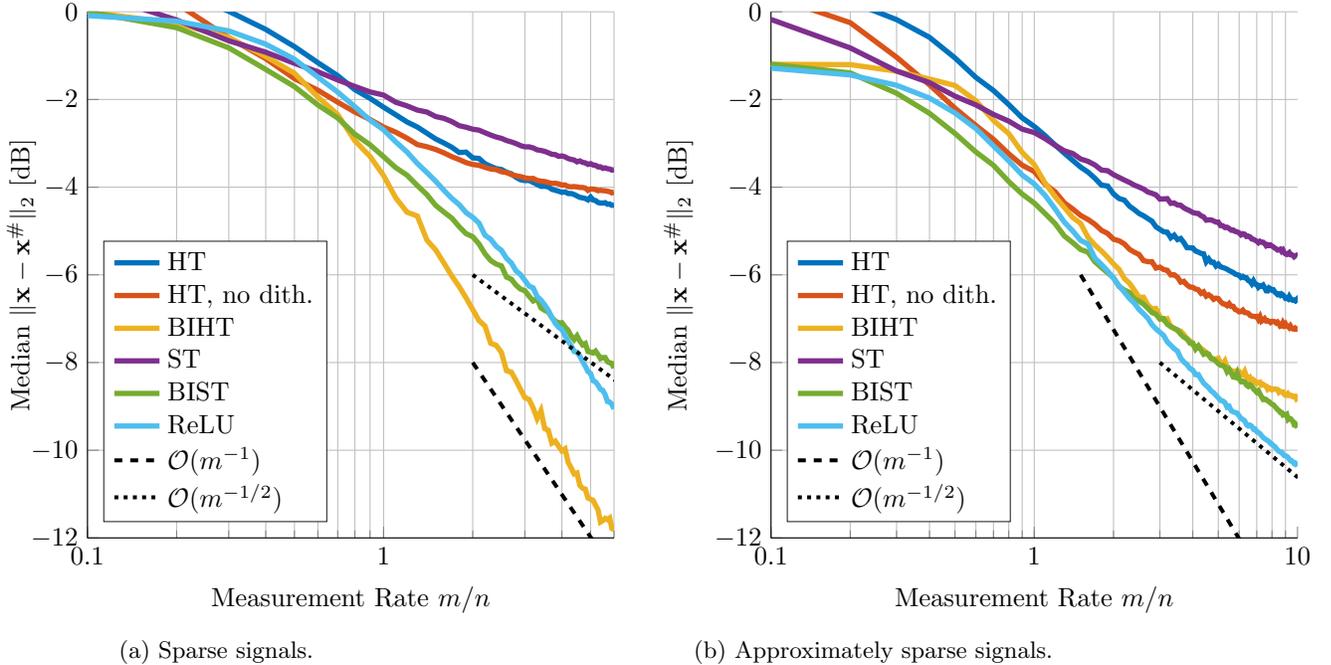

    \begin{subfigure}{0.35\textwidth}
        \centering
        \input{1bit_clean_comp.tex}
        \caption{Sparse signals.}
        \end{subfigure} \qquad\qquad\qquad\qquad
    \begin{subfigure}{0.35\textwidth}
        \centering
        \input{1bit_clean_approx_sparse_comp.tex}
        \caption{Approximately sparse signals.}
    \end{subfigure}
    \caption{Comparison of different recovery schemes.} \label{fig:OneBitAlgComp}
\end{figure}{}

Motivated by Theorem~\ref{thm:noisy_recovery}, we consider additive noise before quantizing the measurements. Here, we again wish to recover 10-sparse unit norm signals in $\R^{200}$ that were drawn uniformly at random. The noise vectors have independent Gaussian entries with zero mean and a given variance. Figure~\ref{fig:OneBitAddNoiseComp} compares the different algorithms for a fixed measurement rate of $m/n=3$ and a varying noise variance. By strict sparsity of the input, the performance of \eqref{eq:P} is again similar to BIST. In Figure~\ref{fig:OneBitAddNoiseLoglog}, we show a doubly logarithmic plot to investigate the order of decay for different noise variances (given in the legend). We clearly see that for smaller noise variances, the approximation error decays faster than for large noise. However, we cannot clearly observe the phase transition predicted by Theorem~\ref{thm:noisy_recovery} for the worst case error.

\begin{figure}
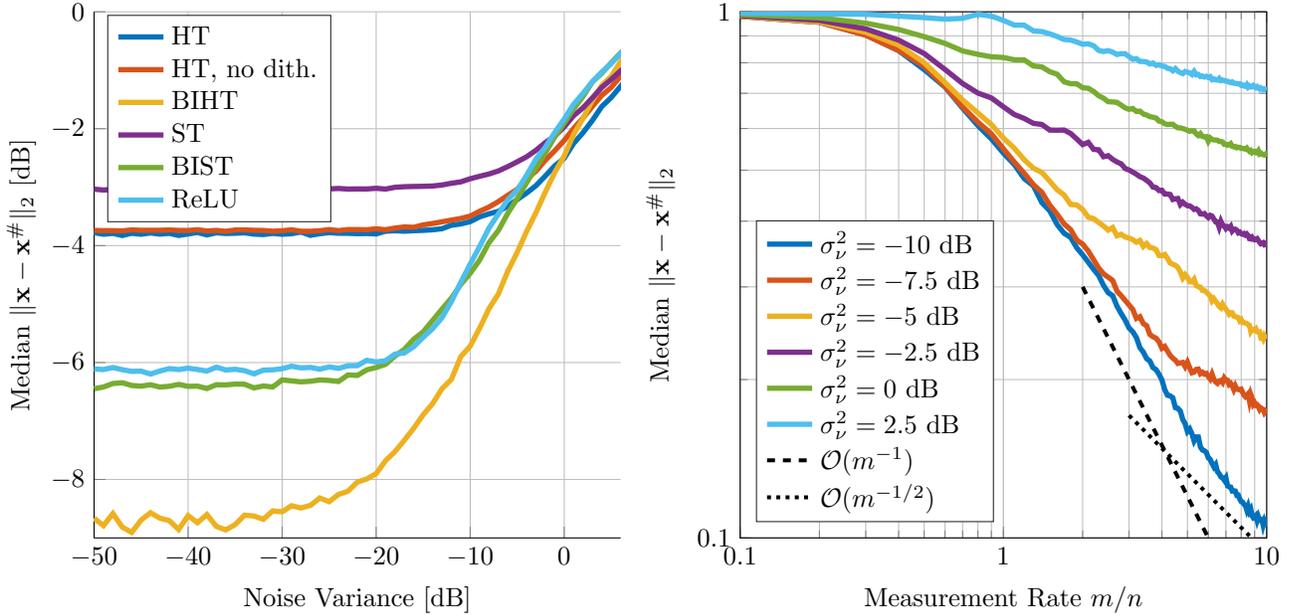

   \begin{subfigure}{0.35\textwidth}
        \centering
        \input{1bit_additive_noise_comp.tex}
        \caption{Comparison of recovery algorithms with additive noise for $m/n = 3$.} \label{fig:OneBitAddNoiseComp}
        \end{subfigure} 
        \hspace{7em}
    \begin{subfigure}{0.35\textwidth}
        \centering
        \input{1bit_additive_noise_loglog.tex}
        \caption{Median error for \eqref{eq:P} for different noise variances.}
        \label{fig:OneBitAddNoiseLoglog}
    \end{subfigure}
    \caption{Comparison of different noise levels.}
    \label{fig:NoiseComp}
\end{figure}{}


\blue{Let us briefly comment on the dashed lines in Figures \ref{fig:OneBitAlgComp} and \ref{fig:NoiseComp}, highlighting decays of order $\mathcal{O}(m^{-1})$ and $\mathcal{O}(m^{-\frac{1}{2}})$ for reference. The ReLU program apparently achieves an error decay which is beyond the theoretically predicted $\mathcal{O}(m^{-\frac{1}{3}})$. Nevertheless, the reader should view the empirical outcome with a grain of salt. Recall that our theoretical results give uniform performance bounds for \emph{all} minimizers of $\mathcal{L}$ while in experiments the numerical solver decides which specific optimal point is chosen. In particular, this might add additional priors to the reconstruction process. For instance, in Figure \ref{fig:OneBitAlgComp}a the computed minimizers of the ReLU program turn out to be sparse meaning that the linear solver of Matlab implicitly restricts the signal prior from a mere $\ell_1$-ball to exactly sparse vectors and allows to reach the same decay as BIHT.}

To check the applicability of the theoretical results, we compared the recovery performance of \eqref{eq:P} for different measurement ensembles. Comparing Gaussian, Rademacher and Hadamard matrices, we could not find any notable difference in the approximation quality.




\subsection{Multi-Bit Experiments}
To demonstrate the trade-offs between measurement rates and bit rates, we performed experiments on the 10-sparse signals used in the 1-bit experiments. For our quantizer, we chose $\Delta$ to satisfy $\Gamma R\leq (2^{B-1}-1)\Delta - \frac{\Delta}{2}$ with equality for $\Gamma = 1$ in Theorem~\ref{thm:MultiBitMain}.

Figure~\ref{fig:Multibit_Measurements} shows the approximation errors at bit rates between one and five bits per measurement. We see that for larger quantizer depths, a stronger phase transition emerges that resembles the characteristics of the compressive sensing problem. For larger measurement rates, the error decreases only slowly.

In Figure~\ref{fig:Multibit_Bits}, we compare the different quantizers for a fixed bit rate $Bm/n$. We observe what is predicted by Theorem~\ref{thm:MultiBitMain}: to obtain a certain accuracy $\rho$ with a minimal bit budget, one has to choose a $B$-bit quantizer such that $\Delta \approx \rho$. For instance, if $R = 1$ and $\rho = -2 \text{dB} \approx 0.6$ one has to choose $B$ such that $\Delta \approx 2R\cdot 2^{-B} \approx 0.6$ which is fulfilled for $B = 2$. This can be verified in Figure~\ref{fig:Multibit_Bits} by checking that the red curve is minimal in bit rate for $\norm{\x - \x^\#}{2} = -2\text{dB}$.

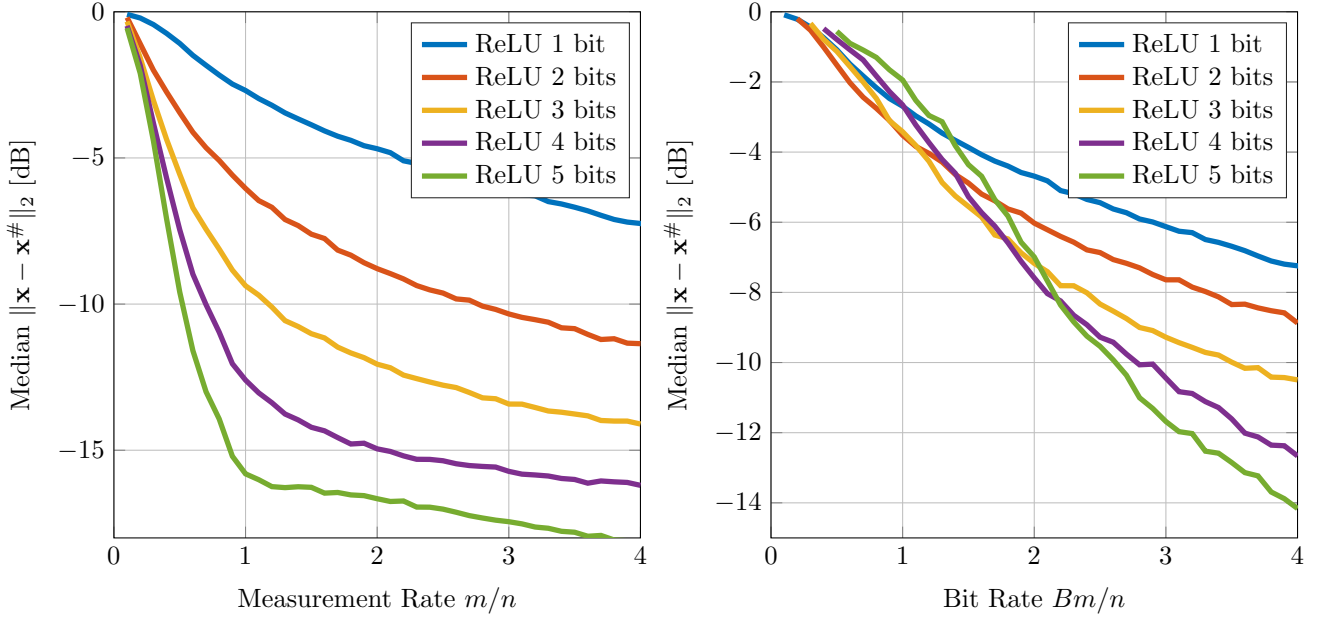
\begin{figure}
    \begin{subfigure}{0.35\textwidth}
        \centering
%
%
\definecolor{mycolor1}{rgb}{0.00000,0.44700,0.74100}%
\definecolor{mycolor2}{rgb}{0.85000,0.32500,0.09800}%
\definecolor{mycolor3}{rgb}{0.92900,0.69400,0.12500}%
\definecolor{mycolor4}{rgb}{0.49400,0.18400,0.55600}%
\definecolor{mycolor5}{rgb}{0.46600,0.67400,0.18800}%
\begin{tikzpicture}

\begin{axis}[%
width=7cm,
height=7cm,
at={(1.142917in,0.750139in)},
scale only axis,
xmin=0,
xmax=4,
xlabel={Measurement Rate $m/n$},
xmajorgrids,
ymin=-18,
ymax=0,
ylabel={Median $\|\x-\x^\# \|_2$ [dB]},
ymajorgrids,
title style={font=\bfseries},
legend style={legend cell align=left,align=left,draw=white!15!black}
]
\addplot [color=mycolor1,solid,line width=2.0pt]
  table[row sep=crcr]{%
0.1	-0.0868622876325475\\
0.2	-0.213030776758782\\
0.3	-0.43648721395331\\
0.4	-0.738337877845197\\
0.5	-1.08111537566997\\
0.6	-1.49037630456025\\
0.7	-1.83148332587685\\
0.8	-2.16389284886991\\
0.9	-2.47197041299883\\
1	-2.69284075729762\\
1.1	-2.96868098547425\\
1.2	-3.19796957541235\\
1.3	-3.45855966597129\\
1.4	-3.66535224208819\\
1.5	-3.86610556678547\\
1.6	-4.07470257633254\\
1.7	-4.25665365433131\\
1.8	-4.3983054483903\\
1.9	-4.58198063452251\\
2	-4.68575380405216\\
2.1	-4.82018700657654\\
2.2	-5.09765124622462\\
2.3	-5.18745240569793\\
2.4	-5.34956093987163\\
2.5	-5.4421126613625\\
2.6	-5.62154199535338\\
2.7	-5.7282418302046\\
2.8	-5.90747939385908\\
2.9	-5.99267494522218\\
3	-6.12597089379308\\
3.1	-6.25672681597876\\
3.2	-6.29678784620608\\
3.3	-6.49313357609317\\
3.4	-6.57466577095447\\
3.5	-6.68570942288786\\
3.6	-6.81063657296415\\
3.7	-6.96266585071039\\
3.8	-7.10517504548411\\
3.9	-7.19462811121707\\
4	-7.24230601516119\\
4.1	-7.38756493505878\\
4.2	-7.51495044189995\\
4.3	-7.61485719377017\\
4.4	-7.74758192555511\\
4.5	-7.76533332902537\\
4.6	-7.85335075740959\\
4.7	-7.97743795206387\\
4.8	-8.08603310305991\\
4.9	-8.20520822567154\\
5	-8.24912784074491\\
5.1	-8.41486327349796\\
5.2	-8.44304341684889\\
5.3	-8.51874366831534\\
5.4	-8.61025429125206\\
5.5	-8.63355180356\\
5.6	-8.73176210615389\\
5.7	-8.88786664234006\\
5.8	-8.90070372697159\\
5.9	-8.95101309540341\\
6	-9.05907901458857\\
};
\addlegendentry{ReLU 1 bit};

\addplot [color=mycolor2,solid,line width=2.0pt]
  table[row sep=crcr]{%
0.1	-0.19535084088376\\
0.2	-1.09722498325805\\
0.3	-1.99744810596598\\
0.4	-2.74356015288875\\
0.5	-3.44925347335476\\
0.6	-4.12110883355541\\
0.7	-4.65061480305943\\
0.8	-5.08605060692251\\
0.9	-5.58709822442718\\
1	-6.04323710236534\\
1.1	-6.45826903733004\\
1.2	-6.68927681977219\\
1.3	-7.10708509261273\\
1.4	-7.31530115460118\\
1.5	-7.60929014591733\\
1.6	-7.75743163402312\\
1.7	-8.15332861362189\\
1.8	-8.32650562182027\\
1.9	-8.58406304151407\\
2	-8.78786328148151\\
2.1	-8.95718187895858\\
2.2	-9.13897000369593\\
2.3	-9.36647729795702\\
2.4	-9.51459807364531\\
2.5	-9.6227259307919\\
2.6	-9.82330923899115\\
2.7	-9.86527872205595\\
2.8	-10.0703394039641\\
2.9	-10.1821459892125\\
3	-10.3374845425833\\
3.1	-10.4546960140301\\
3.2	-10.5333521430396\\
3.3	-10.6206998861883\\
3.4	-10.8116214239686\\
3.5	-10.8406519979314\\
3.6	-11.0364655094137\\
3.7	-11.2113921078256\\
3.8	-11.1858148171342\\
3.9	-11.3368502782162\\
4	-11.3547071339195\\
};
\addlegendentry{ReLU 2 bits};

\addplot [color=mycolor3,solid,line width=2.0pt]
  table[row sep=crcr]{%
0.1	-0.331726313436056\\
0.2	-1.57761449371437\\
0.3	-3.03408088474451\\
0.4	-4.36477328265204\\
0.5	-5.55962337741871\\
0.6	-6.70751597677168\\
0.7	-7.43985979106959\\
0.8	-8.11959176657431\\
0.9	-8.837611160525\\
1	-9.37418962024841\\
1.1	-9.69370993365973\\
1.2	-10.1018163756506\\
1.3	-10.5661254120764\\
1.4	-10.7681709197604\\
1.5	-11.0136806659553\\
1.6	-11.1585343654295\\
1.7	-11.4722366670213\\
1.8	-11.6721473750243\\
1.9	-11.832694590081\\
2	-12.056501459847\\
2.1	-12.1780844802156\\
2.2	-12.4290929592143\\
2.3	-12.5498728139084\\
2.4	-12.6659315908904\\
2.5	-12.7745110440294\\
2.6	-12.8563632462557\\
2.7	-13.024566914277\\
2.8	-13.2075100037266\\
2.9	-13.2386142916229\\
3	-13.4199890882384\\
3.1	-13.4230740850267\\
3.2	-13.5376435539605\\
3.3	-13.6574447967251\\
3.4	-13.6992682873945\\
3.5	-13.7606547582277\\
3.6	-13.8238597896381\\
3.7	-13.9824091885361\\
3.8	-14.0024471671894\\
3.9	-14.0030629622093\\
4	-14.104421413397\\
};
\addlegendentry{ReLU 3 bits};

\addplot [color=mycolor4,solid,line width=2.0pt]
  table[row sep=crcr]{%
0.1	-0.479378424431281\\
0.2	-1.83903566105029\\
0.3	-3.8069119835482\\
0.4	-5.6878166952865\\
0.5	-7.45653399283292\\
0.6	-8.97865885391003\\
0.7	-10.021170143191\\
0.8	-10.9624680049735\\
0.9	-12.0410488729671\\
1	-12.6010232177316\\
1.1	-13.0376477750502\\
1.2	-13.3743567886307\\
1.3	-13.761446454093\\
1.4	-13.9629199093877\\
1.5	-14.2152910908951\\
1.6	-14.3360366223415\\
1.7	-14.5669318014224\\
1.8	-14.7869527816906\\
1.9	-14.7602043677896\\
2	-14.9527597673569\\
2.1	-15.0419377376934\\
2.2	-15.1932968414949\\
2.3	-15.3099176925987\\
2.4	-15.31371613083\\
2.5	-15.3600034712348\\
2.6	-15.4622560798354\\
2.7	-15.5242675690984\\
2.8	-15.5542953537361\\
2.9	-15.5758041688587\\
3	-15.7224982238152\\
3.1	-15.8182567463983\\
3.2	-15.8461978132732\\
3.3	-15.8817707375128\\
3.4	-15.96604674137\\
3.5	-16.0003403237093\\
3.6	-16.1271874226428\\
3.7	-16.0510157661654\\
3.8	-16.0811528640016\\
3.9	-16.1021874818133\\
4	-16.2009901409631\\
};
\addlegendentry{ReLU 4 bits};

\addplot [color=mycolor5,solid,line width=2.0pt]
  table[row sep=crcr]{%
0.1	-0.557262172184168\\
0.2	-2.10647615190832\\
0.3	-4.41312436208134\\
0.4	-7.08164470261485\\
0.5	-9.58835413715016\\
0.6	-11.5935168173484\\
0.7	-13.0004539544976\\
0.8	-13.9318537336976\\
0.9	-15.2064219694956\\
1	-15.8081663004294\\
1.1	-16.0078299451493\\
1.2	-16.2450743781735\\
1.3	-16.2792962774263\\
1.4	-16.2490902742802\\
1.5	-16.2690490014593\\
1.6	-16.4674710604561\\
1.7	-16.4450016125117\\
1.8	-16.5286219800828\\
1.9	-16.5536090473568\\
2	-16.6588450868727\\
2.1	-16.7532236818454\\
2.2	-16.7365726496958\\
2.3	-16.9442206084443\\
2.4	-16.9487251627059\\
2.5	-17.0153821332712\\
2.6	-17.1212975986077\\
2.7	-17.2369750672111\\
2.8	-17.3185000618126\\
2.9	-17.3922136080748\\
3	-17.4449531836622\\
3.1	-17.5196097816838\\
3.2	-17.6273417351051\\
3.3	-17.6696157690286\\
3.4	-17.7736837520471\\
3.5	-17.8056205339321\\
3.6	-17.9431449610275\\
3.7	-17.9147581381764\\
3.8	-18.0709561644182\\
3.9	-18.0939034488035\\
4	-18.1371606528301\\
};
\addlegendentry{ReLU 5 bits};
\end{axis}
\end{tikzpicture}%
        \caption{Error vs. measurement rate.}
        \label{fig:Multibit_Measurements}
        \end{subfigure} \qquad\qquad\qquad\qquad
    \begin{subfigure}{0.35\textwidth}
        \centering
%
%
\definecolor{mycolor1}{rgb}{0.00000,0.44700,0.74100}%
\definecolor{mycolor2}{rgb}{0.85000,0.32500,0.09800}%
\definecolor{mycolor3}{rgb}{0.92900,0.69400,0.12500}%
\definecolor{mycolor4}{rgb}{0.49400,0.18400,0.55600}%
\definecolor{mycolor5}{rgb}{0.46600,0.67400,0.18800}%
\begin{tikzpicture}

\begin{axis}[%
width=7cm,
height=7cm,
at={(1.011111in,0.641667in)},
scale only axis,
xmin=0,
xmax=4,
xlabel={Bit Rate $Bm/n$},
xmajorgrids,
ymin=-15,
ymax=0,
ylabel={Median $\|\x-\x^\#\|_2$ [dB]},
ymajorgrids,
legend style={legend cell align=left,align=left,draw=white!15!black}
]
\addplot [color=mycolor1,solid,line width=2.0pt]
  table[row sep=crcr]{%
0.1	-0.0868622876325475\\
0.2	-0.213030776758782\\
0.3	-0.43648721395331\\
0.4	-0.738337877845197\\
0.5	-1.08111537566997\\
0.6	-1.49037630456025\\
0.7	-1.83148332587685\\
0.8	-2.16389284886991\\
0.9	-2.47197041299883\\
1	-2.69284075729762\\
1.1	-2.96868098547425\\
1.2	-3.19796957541235\\
1.3	-3.45855966597129\\
1.4	-3.66535224208819\\
1.5	-3.86610556678547\\
1.6	-4.07470257633254\\
1.7	-4.25665365433131\\
1.8	-4.3983054483903\\
1.9	-4.58198063452251\\
2	-4.68575380405216\\
2.1	-4.82018700657654\\
2.2	-5.09765124622462\\
2.3	-5.18745240569793\\
2.4	-5.34956093987163\\
2.5	-5.4421126613625\\
2.6	-5.62154199535338\\
2.7	-5.7282418302046\\
2.8	-5.90747939385908\\
2.9	-5.99267494522218\\
3	-6.12597089379308\\
3.1	-6.25672681597876\\
3.2	-6.29678784620608\\
3.3	-6.49313357609317\\
3.4	-6.57466577095447\\
3.5	-6.68570942288786\\
3.6	-6.81063657296415\\
3.7	-6.96266585071039\\
3.8	-7.10517504548411\\
3.9	-7.19462811121707\\
4	-7.24230601516119\\
4.1	-7.38756493505878\\
4.2	-7.51495044189995\\
4.3	-7.61485719377017\\
4.4	-7.74758192555511\\
4.5	-7.76533332902537\\
4.6	-7.85335075740959\\
4.7	-7.97743795206387\\
4.8	-8.08603310305991\\
4.9	-8.20520822567154\\
5	-8.24912784074491\\
5.1	-8.41486327349796\\
5.2	-8.44304341684889\\
5.3	-8.51874366831534\\
5.4	-8.61025429125206\\
5.5	-8.63355180356\\
5.6	-8.73176210615389\\
5.7	-8.88786664234006\\
5.8	-8.90070372697159\\
5.9	-8.95101309540341\\
6	-9.05907901458857\\
};
\addlegendentry{ReLU 1 bit};

\addplot [color=mycolor2,solid,line width=2.0pt]
  table[row sep=crcr]{%
0.2	-0.19535084088376\\
0.3	-0.525269127253631\\
0.4	-1.00761102039019\\
0.5	-1.51831622252314\\
0.6	-2.02363297181542\\
0.7	-2.43501478493551\\
0.8	-2.75143797962175\\
0.9	-3.09792208535654\\
1	-3.51843401038173\\
1.1	-3.8390207766418\\
1.2	-4.03693194328257\\
1.3	-4.28392000890834\\
1.4	-4.63820769486561\\
1.5	-4.88325624538992\\
1.6	-5.19775953443985\\
1.7	-5.38900726480181\\
1.8	-5.62075190516029\\
1.9	-5.73874515839831\\
2	-6.02511316733119\\
2.1	-6.21382284545017\\
2.2	-6.40358814129874\\
2.3	-6.56909021350589\\
2.4	-6.78796416629083\\
2.5	-6.86499727818024\\
2.6	-7.06415683327829\\
2.7	-7.17121849314339\\
2.8	-7.30040479720962\\
2.9	-7.4905156814975\\
3	-7.64138607323757\\
3.1	-7.64157121803472\\
3.2	-7.84970913258104\\
3.3	-7.97586949799439\\
3.4	-8.12897808708159\\
3.5	-8.34697755385806\\
3.6	-8.33566452442664\\
3.7	-8.44191883240305\\
3.8	-8.5228973419662\\
3.9	-8.58374458170933\\
4	-8.87340070001868\\
};
\addlegendentry{ReLU 2 bits};

\addplot [color=mycolor3,solid,line width=2.0pt]
  table[row sep=crcr]{%
0.3	-0.331726313436056\\
0.4	-0.783380325380924\\
0.5	-1.12723211557044\\
0.6	-1.56050576995362\\
0.7	-1.99941974975024\\
0.8	-2.46845857622198\\
0.9	-3.10382822419457\\
1	-3.42047547059847\\
1.1	-3.81122519684371\\
1.2	-4.26057907459977\\
1.3	-4.86711365214588\\
1.4	-5.25066684088291\\
1.5	-5.55061822648903\\
1.6	-5.85163310551397\\
1.7	-6.36343010214119\\
1.8	-6.47567948134448\\
1.9	-6.87109114378118\\
2	-7.17354203859935\\
2.1	-7.40835241149304\\
2.2	-7.80889470979832\\
2.3	-7.81110517994249\\
2.4	-8.01176735541934\\
2.5	-8.33262710426594\\
2.6	-8.53361460401285\\
2.7	-8.7452071722102\\
2.8	-9.00047641349719\\
2.9	-9.08842254434062\\
3	-9.27824819553703\\
3.1	-9.43279132308664\\
3.2	-9.56807203601112\\
3.3	-9.71187754242742\\
3.4	-9.78694851229895\\
3.5	-9.98888160461631\\
3.6	-10.1637872870874\\
3.7	-10.1457606820039\\
3.8	-10.4154594738157\\
3.9	-10.4285755186196\\
4	-10.4944309351105\\
};
\addlegendentry{ReLU 3 bits};

\addplot [color=mycolor4,solid,line width=2.0pt]
  table[row sep=crcr]{%
0.4	-0.479378424431281\\
0.5	-0.78834704425448\\
0.6	-1.08534914395436\\
0.7	-1.37359544998297\\
0.8	-1.82915528481427\\
0.9	-2.27551899626895\\
1	-2.65570460289479\\
1.1	-3.22686541942665\\
1.2	-3.72958776138545\\
1.3	-4.20987043942108\\
1.4	-4.60440387517962\\
1.5	-5.27342263015489\\
1.6	-5.72659656279622\\
1.7	-6.11007518552359\\
1.8	-6.58445349303928\\
1.9	-7.1144551283507\\
2	-7.58683091748864\\
2.1	-8.03517712024541\\
2.2	-8.24395004634598\\
2.3	-8.65242421816705\\
2.4	-8.92283225778831\\
2.5	-9.27733163358099\\
2.6	-9.42071557080525\\
2.7	-9.75854097934333\\
2.8	-10.0643966658821\\
2.9	-10.0453155133439\\
3	-10.442945494226\\
3.1	-10.8311247690416\\
3.2	-10.8828212873372\\
3.3	-11.1136440965315\\
3.4	-11.2837901540113\\
3.5	-11.6071520951813\\
3.6	-12.0125230038108\\
3.7	-12.1218276503952\\
3.8	-12.3548054338356\\
3.9	-12.373127911284\\
4	-12.662534041873\\
};
\addlegendentry{ReLU 4 bits};

\addplot [color=mycolor5,solid,line width=2.0pt]
  table[row sep=crcr]{%
0.5	-0.557262172184168\\
0.6	-0.898839038180063\\
0.7	-1.0908426476125\\
0.8	-1.30029350960269\\
0.9	-1.65457538661407\\
1	-1.94568517329487\\
1.1	-2.52447061995973\\
1.2	-2.9563177646541\\
1.3	-3.13463331822628\\
1.4	-3.81646997544289\\
1.5	-4.35922470114133\\
1.6	-4.68863538156572\\
1.7	-5.35258076853805\\
1.8	-5.82388798784677\\
1.9	-6.56976613222525\\
2	-6.98922186379593\\
2.1	-7.67498383633915\\
2.2	-8.36062145669928\\
2.3	-8.8410155860933\\
2.4	-9.24751344539595\\
2.5	-9.53765353895885\\
2.6	-9.9236532757996\\
2.7	-10.3601197330121\\
2.8	-11.0040197317908\\
2.9	-11.3096244096484\\
3	-11.6809205834931\\
3.1	-11.9705525216822\\
3.2	-12.025384682044\\
3.3	-12.5268975452233\\
3.4	-12.584916184974\\
3.5	-12.8502292736038\\
3.6	-13.138231273289\\
3.7	-13.231398790056\\
3.8	-13.6904730192818\\
3.9	-13.8822950259007\\
4	-14.1617279262075\\
};
\addlegendentry{ReLU 5 bits};

\end{axis}
\end{tikzpicture}%
        \caption{Error vs. bit rate}
        \label{fig:Multibit_Bits}
    \end{subfigure}
    \caption{Comparison of different bit levels.}
\end{figure}{}


\section*{Acknowledgments}

H.C.J., J.M.\ and L.P. acknowledge funding by the Deutsche Forschungsgemeinschaft (DFG, German Research Foundation) under SPP 1798. A.S.\ acknowledges funding by the Deutsche Forschungsgemeinschaft (DFG, German Research Foundation) under Germany's Excellence Strategy – MATH+ : The Berlin Mathematics Research Center, EXC-2046/1 – project ID: 390685689.


\appendix


\section{Additional Proofs} \label{app:noise_stability}
\begin{proof}[of Lemma \ref{lem:noise_stability}]
By the triangle inequality we have
\begin{equation*}
    \norm{\x - \z}{2} \geq \norm{\x'-\z'}{2} - 2\ep \geq \norm{\x'-\z'}{2} - \frac{\rho}{2}\geq \frac{\norm{\x'-\z'}{2}}{2}\; ,
\end{equation*}
and similarly $\norm{\x'-\z'}{2}\geq \frac{2}{3} \norm{\x - \z}{2}$.
If $i \in I(\x',\z',\kappa)$ and $|\inner{\a_i}{\x-\x'}| \leq \kappa \rho /4$, then
\begin{align*}
|\inner{\a_i}{\x}  + \nu_i+ \tau_i| &\geq \kappa \norm{\x'-\z'}{2} - \kappa \rho/4 \geq \kappa \norm{\x'-\z'}{2} - \frac{\kappa}{4} \norm{\x'-\z'}{2} =\frac{3\kappa}{4} \norm{\x'-\z'}{2}\geq 
\frac{\kappa}{2} \norm{\x-\z}{2}.
\end{align*}
Moreover, since $|\inner{\a_i}{\x'}  + \nu_i + \tau_i| \geq \kappa \rho$ and $|\inner{\a_i}{\x-\x'}| \leq \kappa \rho /4$, it follows that 
$\sign(\inner{\a_i}{\x'}  + \nu_i + \tau_i)=\sign(\inner{\a_i}{\x}  + \nu_i + \tau_i)$.
Analogously, if $i \in I(\x',\z', \kappa)$ and $|\inner{\a_i}{\z-\z'}| \leq \kappa \rho /4$, then $|\inner{\a_i}{\z} + \tau_i|\geq \frac{\kappa}{2} \norm{\x-\z}{2}$,
and $\sign(\inner{\a_i}{\z'} + \tau_i)=\sign(\inner{\a_i}{\z} + \tau_i)$. Hence, if $i \in I(\x',\z', \kappa), |\inner{\a_i}{\x-\x'}| \leq \kappa \rho /4$ and
$|\inner{\a_i}{\z-\z'}| \leq \kappa \rho /4$, then $i \in I(\x,\z,\kappa/2)$.
Hence, 
\begin{align*}
    |I(\x',\z',\kappa)|&\leq
    | I(\x',\z',\kappa) \cap \{i\in [m]\; :\; |\inner{\a_i}{\z-\z'}|\leq \kappa \rho /4
    \text{ and } |\inner{\a_i}{\x-\x'}|\leq \kappa \rho /4\}|\\
    &\quad + |\{i\in [m]\; :\; |\inner{\a_i}{\z-\z'}|> \kappa \rho /4
    \text{ or } |\inner{\a_i}{\x-\x'}|> \kappa \rho /4\}|.
\end{align*}

This observation shows that
\begin{align*}
    |I(\x,\z,\kappa/2)| &\geq |I(\x',\z',\kappa)| - 2\sup_{\y\in (\T-\T)\cap \ep B^n_2}|\{i\in [m]\; : \; |\inner{\a_i}{\y}|>\kappa \rho/4\}| \; \\
    &\geq |I(\x',\z',\kappa)| - 2k.
\end{align*}

\end{proof}

\begin{proof}[of Lemma \ref{lem:bound_for_spoiled_hyperplanes}]

For a vector $\w \in \R_+^m$ let $\w^*$ denote its non-increasing rearrangement. We observe that for 
any $1 \leq k \leq m$ we have
\begin{equation}
    w_k^* \leq \max_{ |I| \leq k} \Big( \frac{1}{k} \sum_{i \in I} w_i^2 \Big)^{\frac{1}{2}}.
\end{equation}
In particular, if $\max_{|I|\leq k}( \frac{1}{k} \sum_{i \in I} w_i^2 )^{\frac{1}{2}} \leq \rho$ then $w_k^*\leq \rho$ and therefore
\begin{displaymath}
 |\{i\in [m]\; : \; |w_i|>\rho\}| <  k \; .
\end{displaymath}
Therefore, 
\begin{displaymath}
\Big\{ \S_{k} := \sup_{\z \in \mathcal{K}}\max_{ |I| \leq k} \Big( \frac{1}{k} \sum_{i \in I} \inner{\a_i}{\z}^2 \Big)^{\frac{1}{2}} \leq \rho  \Big\} \subset 
\Big\{\sup_{\z\in \mathcal{K}}|\{i\in [m]\; : \; |\inner{\a_i}{\z}|>\rho\}|
 < k \Big\}.
\end{displaymath}
By Theorem~\ref{subgaussian_tail} there exist constants $c_1, c_2>0$ that only depend on $L$ such that
for every $u\geq 1$ 
with probability at least $1 - 2 \exp(-c_1 u^2 k \log(em/k))$,
\begin{equation}
\sqrt{k} \S_{k} \leq c_2 \Big( w_*(\mathcal{K}) + u\, d_{\mathcal{K}} \sqrt{k \log(em/k)} \Big) \; .
\end{equation}
Hence, we only need to ensure that 
\begin{equation}
    \frac{c_2}{\sqrt{k}} \Big( w_*(\mathcal{K}) + u\, d_{\mathcal{K}} \sqrt{k \log(em/k)} \Big)\leq \rho.
\end{equation}
  
\end{proof}

\begin{proof}[of Theorem \ref{thm:noisy_uniform}] 
Let $\ep > 0$ be a real number with $\eps\leq \frac{\rho}{8}$ and let $\N_\ep$ denote a minimal $\ep$-net in $\T\subset R B^n_2$ with respect
to the Euclidean norm. By Theorem~\ref{thm:noise_well-separating} there exist absolute
constants $\kappa, c_0, c_1>0$ such that 
for each pair $(\x',\z') \in \N_\ep$ with probability at least $1 - 4 \exp(-c_0 m \norm{\x' - \z'}{2}/\lambda )$ we have
\begin{equation*}
    |I(\x',\z', \nnu, \kappa)| \geq c_1 m \frac{\norm{\x'-\z'}{2}}{\lambda} \; .
\end{equation*}
By a union bound and the assumption $\log|\N_\ep| \lesssim \frac{\rho}{\lambda}m$ it follows that with probability at least  $1 - 4 \exp(- c_0 m \rho/4\lambda )$ the following event occurs,
\begin{equation}
\label{eqn:controll_on_ep_net}
    \forall (\x',\z') \in \N_\ep^2 \text{ with } \norm{\x'-\z'}{2}\geq \rho/2 \; : \; |I(\x',\z',\nnu, \kappa)| \geq c_1 m \frac{\norm{\x'-\z'}{2}}{\lambda}  \; .
\end{equation}
With this in mind consider arbitrary $\x,\z \in \T$ such that $\norm{\x-\z}{2} \geq \rho$. There are $\x',\z'\in \N_\ep$ such that $\norm{\x-\x'}{2}\leq \ep$ and $\norm{\z-\z'}{2}\leq \ep$.
Since $2\eps\leq \frac{\rho}{2}\leq \frac{\norm{\x - \z}{2}}{2}$ it follows
\begin{equation}
\label{eqn:size_of_distance}
    \norm{\x' - \z'}{2} \geq \norm{\x-\z}{2} - 2\eps \geq \frac{\norm{\x - \z}{2}}{2}  \geq \frac{\rho}{2},
\end{equation}
which implies that on the event \eqref{eqn:controll_on_ep_net} we have
\begin{equation*}
    |I(\x',\z', \nnu, \kappa)| \geq c_1 m \frac{\norm{\x'-\z'}{2}}{\lambda} \; .
\end{equation*}
Applying Lemma \ref{lem:noise_stability} to this situation with $\rho/2$ instead of $\rho$, shows that 
\begin{align}
    |I(\x,\z,\nnu,\kappa/2)| 
    &\geq |I(\x',\z',\nnu,\kappa)| 
    -
     2\sup_{\y\in (\T-\T)\cap \ep B^n_2}|\{i\in [m]\; : \; |\inner{\a_i}{\y}|>\kappa \rho/8\}|\\
    &\geq c_1 m \frac{\norm{\x'-\z'}{2}}{\lambda} -
     2\sup_{\y\in (\T-\T)\cap \ep B^n_2}|\{i\in [m]\; : \; |\inner{\a_i}{\y}|>\kappa \rho/8\}| \; \\
     &\geq c_1 m \frac{\norm{\x-\z}{2}}{2\lambda} -
     2\sup_{\y\in (\T-\T)\cap \ep B^n_2}|\{i\in [m]\; : \; |\inner{\a_i}{\y}|>\kappa \rho/8\}| \;.
\end{align}
The theorem follows once we can ensure that
\begin{equation}
\label{eqn:spoiled_hyperplane}
    \sup_{\y\in (\T-\T)\cap \ep B^n_2}|\{i\in [m]\; : \; |\inner{\a_i}{\y}|>\kappa \rho/8\}| \leq c_1 m \frac{\rho}{8\lambda}
\end{equation}
with high probability. In order to accomplish this we invoke Lemma \ref{lem:bound_for_spoiled_hyperplanes}. This lemma shows that there exist constants 
$c_2, c_3>0$ such that for every $1\leq k \leq m$ and $u\geq 1$
we have
\begin{displaymath}
 \sup_{\y\in (\T-\T)\cap \ep B^n_2}|\{i\in [m]\; : \; |\inner{\a_i}{\y}|>\kappa \rho/8\}| < k
\end{displaymath}
with probability at least $1-2\exp(-c_2 u^2 k \log(em/k))$ provided that
\begin{equation}
    \frac{c_3}{\sqrt{k}}\Big(w_*((\T-\T)\cap \ep B^n_2) + u \ep \sqrt{k \log(em/k)} \Big) \leq  \frac{\kappa \rho}{8}\; .
\end{equation}
Choosing $k = c_1 m \rho/8\lambda$ and $u=1$ we obtain  that \eqref{eqn:spoiled_hyperplane} can achieved with probability at least
$1-2\exp(-c_4 m \rho\log(e\lambda/\rho)/\lambda)$ provided that
\begin{equation}
    m \gg \lambda \rho^{-3} w_*((\T-\T)\cap \ep B^n_2)^2 \quad \text{ and }\quad \ep \lesssim \frac{\rho }{\sqrt{\log(C\lambda/\rho)}} \; ,
\end{equation}
where $c_4>0$ and $C\geq e$ denote absolute constants.
With probability at least $1-2\exp(-c_4 m \rho\log(e\lambda/\rho)/\lambda)- 4 \exp(- c_0 m \rho/4\lambda )$ the events 
\eqref{eqn:controll_on_ep_net} and
\eqref{eqn:spoiled_hyperplane} both occur and 
by the calculation above for all $\x,\z \in \T$ with $\norm{\x-\z}{2} \geq \rho$ we have
\begin{align}
    |I(\x,\z,\nnu,\kappa/2)| &\geq c_1 m \frac{\norm{\x-\z}{2}}{2\lambda} - c_1 m \frac{\rho}{4\lambda}\\
    &\geq c_1 m \frac{\norm{\x-\z}{2}}{4\lambda},
\end{align}
which shows the result.
\end{proof}

\begin{proof}[of Lemma \ref{lem:expectation_quantizer}]
  In order to simplify notation we define $q:=q_{\mathcal{A}_{\Delta,B}}: \R 
\to \mathcal{A}_{\Delta,B}$.
  The second statement immediately follows from the first. Indeed, suppose
  $$x,y \in [-((2^{B-1}-1)\Delta-\frac{\Delta}{2}),(2^{B-1}-1)\Delta-\frac{\Delta}{2}]$$ and w.l.o.g. let $x<y$. Using \eqref{eq:lem:expectation_quantizer} we obtain
  \begin{align*}
      \E [| q(x+\tau) - q(y+\tau) |]=\E [ - q(x+\tau)+  q(y+\tau) ]
      =-x+y=|x-y|,
  \end{align*}
  where we have additionally used that the quantizer $q$ is a monotonically increasing function.
  Next, we show \eqref{eq:lem:expectation_quantizer}. 
  For $j\in \mathbb{Z}$ set 
  \begin{equation}
      q_j=\frac{\Delta}{2}+j\Delta.
  \end{equation}
  Then 
  $$\mathcal{A}_{\Delta,B}=
  \{q_j\; : \; j\in \{-2^{B-1},\ldots, -1,0,1,\ldots, 2^{B-1}-1 \}\}.$$
  If 
  $\x\in [-((2^{B-1}-1)\Delta-\frac{\Delta}{2}),(2^{B-1}-1)\Delta-\frac{\Delta}{2}]=
  [q_{-2^{B-1}+1}, q_{2^{B-1}-2}]$, then there exists $j\in 
  \{-2^{B-1}+1,\ldots, -1,0,1,\ldots, 2^{B-1}-3 \}$ such that $q_j\leq x\leq q_{j+1}$. Let us first assume that $x\in [q_j,q_j+\frac{\Delta}{2}]$. Then $-\frac{\Delta}{2}\leq q_j-x\leq 0$ and
  \begin{align*}
     \E [q(x+\tau)]
     &= q_{j-1}\Pr(q_{j-1}-\frac{\Delta}{2}\leq x+\tau\leq q_{j-1}+\frac{\Delta}{2})\\
     &\quad+q_j\Pr(q_{j}-\frac{\Delta}{2}\leq x+\tau\leq q_{j}
     +\frac{\Delta}{2})
     +q_{j+1}\Pr(q_{j+1}-\frac{\Delta}{2}\leq x+\tau\leq q_{j+1}+\frac{\Delta}{2})\\
    &= q_{j-1}\Pr(q_{j-1}-\frac{\Delta}{2}-x\leq \tau\leq q_{j-1}+\frac{\Delta}{2}-x)\\
     &\quad+q_j\Pr(q_{j}-\frac{\Delta}{2}-x\leq \tau\leq q_{j}
     +\frac{\Delta}{2}-x)
     +q_{j+1}\Pr(q_{j+1}-\frac{\Delta}{2}-x\leq \tau\leq q_{j+1}+\frac{\Delta}{2}-x)\\
     &= q_{j-1}\Pr(q_{j}-\frac{3\Delta}{2}-x\leq \tau\leq q_{j}-\frac{\Delta}{2}-x)\\
     &\quad+q_j\Pr(q_{j}-\frac{\Delta}{2}-x\leq \tau\leq q_{j}
     +\frac{\Delta}{2}-x)
     +q_{j+1}\Pr(q_{j}+\frac{\Delta}{2}-x\leq \tau\leq q_{j}+\frac{3\Delta}{2}-x)\\
     &= q_{j-1}\Pr(-\Delta\leq \tau\leq q_{j}-x-\frac{\Delta}{2})\\
     &\quad+q_j\Pr(q_{j}-x-\frac{\Delta}{2}\leq \tau\leq q_{j}-x
     +\frac{\Delta}{2})
     +q_{j+1}\Pr(q_{j}-x
     +\frac{\Delta}{2}\leq \tau\leq \Delta). 
  \end{align*}
Further, 
\begin{align*}
  \Pr(-\Delta\leq \tau\leq q_{j}-x-\frac{\Delta}{2})  &=\frac{q_j-x+\frac{\Delta}{2}}{2\Delta},\\
  \Pr(q_{j}-x-\frac{\Delta}{2}\leq \tau\leq q_{j}-x
     +\frac{\Delta}{2})&=\frac{1}{2},\\
   \Pr(q_{j}-x
     +\frac{\Delta}{2}\leq \tau\leq \Delta)&=\frac{\frac{\Delta}{2}-q_j+x}{2\Delta}.  
\end{align*}
Therefore,
\begin{align*}
    \E [q(x+\tau)]&=q_{j-1}\big(\frac{q_j-x+\frac{\Delta}{2}}{2\Delta})+\frac{q_j}{2}
    +q_{j+1}\big(\frac{\frac{\Delta}{2}-q_j+x}{2\Delta}\big)\\
    &=(q_{j}-\Delta)\big(\frac{q_j-x+\frac{\Delta}{2}}{2\Delta})+\frac{q_j}{2}
    +(q_{j}+\Delta)\big(\frac{\frac{\Delta}{2}-q_j+x}{2\Delta}\big)\\
    &=q_j\big(\frac{q_j-x+\frac{\Delta}{2}}{2\Delta}+\frac{\Delta}{2\Delta}+\frac{\frac{\Delta}{2}-q_j+x}{2\Delta}\big)-\frac{q_j-x+\frac{\Delta}{2}}{2} + \frac{\frac{\Delta}{2}-q_j+x}{2}\\
    &=q_j-\frac{q_j}{2}+\frac{x}{2}-\frac{q_j}{2}+\frac{x}{2}\\
    &=x.
\end{align*}
Analogously, one can show that
$\E [q(x+\tau)]=x$ if $x\in [q_j+\frac{\Delta}{2}, q_{j+1}]$.
\end{proof}

\bibliographystyle{IEEEtranS}
\bibliography{mbit}
\end{document}